\newtheorem{theorem}{Theorem}
\newtheorem{lemma}{Lemma}
\newtheorem{corollary}{Corollary}
\newtheorem{observation}{Observation}
\def\eps{\epsilon}
\def\fr{\frac}
\def\-{\mbox{-}}
\def\real{\mathbb{R}}
\def\nn{\nonumber}
\def\*{\star}
\DeclareMathOperator*{\intr}{\cap}
\def\figcapup{\vspace{-1mm}}
\def\figcapdown{\vspace{-2mm}}
\def\extraspacing{\vspace{2mm} \noindent}
\def\high{\mathit{high}}
\def\highend{\mathit{highend}}
\def\MAX{\mathit{MAX}}
\def\lmax{\mathit{LMAX}}
\def\eps{\epsilon}
\def\fr{\frac}
\def\leftdom{\mathit{leftdom}}
\def\rmax{\mathit{RMAX}}
\def\intr{\cap}
\def\T{\mathcal{T}}
\newcommand{\arxivexcl}[2]{\ifthenelse{\boolean{arxivversion}}{#1}{#2}}
\newcommand{\confcmt}[1]{\ifthenelse{\boolean{confversion}}{#1}{}}
\newcommand{\fullcmt}[1]{\ifthenelse{\boolean{fullversion}}{#1}{}}
\newcommand{\front}{\text{front}}
\newcommand{\first}{\text{first}}
\newcommand{\records}{\text{records}}
\newcommand{\last}{\text{last}}
\newcommand{\rest}{\text{rest}}
\newcommand{\bigO}{\mathcal{O}}
\newcommand{\smallO}{o}
\newcommand{\nil}{\textit{nil}}
\newcommand{\iref}[1]{\ref{#1}} % Invariant reference
\newcommand{\attr}[1]{\widetilde{#1}}
\renewcommand{\(}{\left(}
\renewcommand{\)}{\right)}
\newcommand{\executeiffilenewer}[3]{
 \ifnum\pdfstrcmp{\pdffilemoddate{#1}}
 {\pdffilemoddate{#2}}>0
 {\immediate\write18{#3}}\fi
}
\newcommand{\yufeigraphics}[2]{
  \arxivexcl
   {\includegraphics[#1]{./#2}}
   {\includegraphics[#1]{./figure/#2}}
}
\begin{document}
\conferenceinfo{PODS'13,} {June 22--27, 2013, New York, New York, USA.}
\CopyrightYear{2013}
\crdata{978-1-4503-2066-5/13/06}
\clubpenalty=10000
\widowpenalty = 10000

\fullcmt{\title{I/O-Efficient Planar Range Skyline \\and Attrition Priority
Queues\thanks{\scriptsize This is the full version of our PODS 2013 paper
with the same title.}}}

\confcmt{\title{I/O-Efficient Planar Range Skyline \\and Attrition Priority
Queues\thanks{\scriptsize The full version is found on http://arxiv.org
under the same title.}}}

\numberofauthors{1}
\author{
	\begin{tabular}{c}
      Casper Kejlberg-Rasmussen$^1$ \hspace{5mm} Yufei Tao$^{2,3}$ \hspace{5mm}
      Konstantinos Tsakalidis$^4$ \\[2mm] Kostas Tsichlas$^5$ \hspace{5mm}
      Jeonghun Yoon$^3$ \\[3mm]
    \affaddr{$^1$MADALGO\thanks{\scriptsize MADALGO (Center for Massive Data
    Algorithmics -- a Center of the Danish National Research Foundation)},
  Aarhus University} \\
		\affaddr{$^2$Chinese University of Hong Kong} \\
		\affaddr{$^3$Korea Advanced Institute of Science and Technology} \\
		\affaddr{$^4$Hong Kong University of Science and Technology} \\
		\affaddr{$^5$Aristotle University of Thessaloniki}
	\end{tabular}
}

\date{}
\maketitle

%%%%%%%%%%%%%%%%%%%%%%%%%%%%%%%%%%%%%%%%%%%%%%%%%%%%%%%%%%%%%%%%%%%%%%%%%%%%%%%
\begin{confenv}
\begin{abstract}
We study the static and dynamic \emph{planar range skyline reporting problem} in
the external memory model with block size $B$, under a linear space budget.  The
problem asks for an $O(n/B)$ space data structure that stores $n$ points in the
plane, and supports reporting the $k$ maximal input points (a.k.a.\
\emph{skyline}) among the points that lie within a given query rectangle $Q =
[\alpha_1, \alpha_2] \times [\beta_1, \beta_2]$. When $Q$ is \emph{3-sided},
i.e. one of its edges is grounded, two variants arise: \emph{top-open} for
$\beta_2 = \infty$ and \emph{left-open} for $\alpha_1 = -\infty$ (symmetrically
\emph{bottom-open} and \emph{right-open}) queries.

We present optimal static data structures for \emph{top-open} queries, for the
cases where the universe is $\mathbb{R}^2$, a $U \times U$ grid, and rank space
$[\bigO(n)]^2$. We also show that \emph{left-open} queries are harder, as they
require $\Omega((n/B)^\epsilon + k/B)$ I/Os for $\epsilon > 0$, when only linear
space is allowed. We show that the lower bound is tight, by a structure that
supports 4-sided queries in matching complexities.  Interestingly, these lower
and upper bounds coincide with those of the {\em planar orthogonal range
reporting problem}, i.e., the skyline requirement does not alter the problem
difficulty at all!

Finally, we present the first dynamic linear space data structure that supports
top-open queries in $O(\log_{2B^\epsilon} n + k/B^{1-\epsilon})$ and updates in
$O(\log_{2B^\epsilon} n )$ worst case I/Os, for $\epsilon \in [0,1]$.  This also
yields a linear space data structure for 4-sided queries with optimal query I/Os
and $\bigO(\log (n/B))$ amortized update I/Os.  We consider of independent
interest the main component of our dynamic structures, a new real-time
I/O-efficient and catenable variant of the fundamental structure {\em priority
queue with attrition} by Sundar.
\end{abstract}
\end{confenv}

\begin{fullenv}
\begin{abstract}
In the \emph{planar range skyline reporting problem}, the goal is to store a set
$P$ of $n$ 2D points in a structure such that, given a query rectangle $Q =
[\alpha_1, \alpha_2] \times [\beta_1, \beta_2]$, the maxima (a.k.a.\
\emph{skyline}) of $P$ $\cap$ $Q$ can be reported efficiently. The query is
\emph{3-sided} if an edge of $Q$ is grounded, giving rise to two variants:
\emph{top-open} ($\beta_2 = \infty$) and \emph{left-open} ($\alpha_1 = -\infty$)
(symmetrically \emph{bottom-open} and \emph{right-open}) queries.

This paper presents comprehensive results in external memory under the
$\bigO(n/B)$ space budget ($B$ is the block size), covering both the static and
dynamic settings:
\begin{itemize}
  \item For static $P$, we give structures that answer {\em top-open} queries in
    $\bigO(\log_B n + k/B)$, $\bigO(\log\log_B U + k/B)$, and $\bigO(1 + k/B)$
    I/Os when the universe is $\mathbb{R}^2$, a $U \times U$ grid, and a rank
    space grid $[\bigO(n)]^2$, respectively (where $k$ is the number of reported
    points). The query complexity is optimal in all cases.

  \item We show that the {\em left-open} case is harder, such that any
    linear-size structure must incur $\Omega((n/B)^\epsilon + k/B)$ I/Os to
    answer a query. In fact, this case turns out to be just as difficult as the
    general 4-sided queries, for which we provide a static structure with the
    optimal query cost $\bigO((n/B)^\epsilon + k/B)$.

  \item We present a dynamic structure that supports top-open queries in
    $\bigO(\log_{2B^\epsilon} (n/B) + k/B^{1-\epsilon})$ I/Os, and updates in
    $\bigO(\log_{2B^\epsilon}(n/B))$ I/Os, for any $\epsilon$ satisfying $0 \le
    \epsilon \le 1$. This result also leads to a dynamic structure for 4-sided
    queries with optimal query cost $\bigO((n/B)^\epsilon + k/B)$, and amortized
    update cost $\bigO(\log (n/B))$.
\end{itemize}

As a contribution of independent interest, we propose an I/O-efficient version
of the fundamental structure {\em priority queue with attrition} (PQA). Our PQA
supports \textsc{FindMin}, \textsc{DeleteMin}, and \textsc{InsertAndAttrite} all
in $\bigO(1)$ worst case I/Os, and $\bigO(1/B)$ amortized I/Os per operation.
Furthermore, it allows the additional \textsc{CatenateAndAttrite} operation that
merges two PQAs in $\bigO(1)$ worst case and $\bigO(1/B)$ amortized I/Os. The
last operation is a non-trivial extension to the classic PQA of Sundar, even in
internal memory.
\end{abstract}
\end{fullenv}

\category{F.2.2}{Analysis of algorithms and problem complexity}{Nonnumerical
Algorithms and Problems}[computations on discrete structures]
\category{H.3.1}{Information storage and retrieval}{Content analysis and
indexing}[indexing methods]

\keywords{Skyline, range reporting, priority queues, external memory, data
structures}

%%%%%%%%%%%%%%%%%%%%%%%%%%%%%%%%%%%%%%%%%%%%%%%%%%%%%%%%%%%%%%%%%%%%%%%%%%%%%%%

%%%%%%%%%%%%%%%%%%%%%%%%%%%%%%%%%%%%%%%%%%%%%%%%%%%%%%%%%%%%%%%%%%%%%%%%%%%%%%%
\section{Introduction} \label{sec:intro}
%%%%%%%%%%%%%%%%%%%%%%%%%%%%%%%%%%%%%%%%%%%%%%%%%%%%%%%%%%%%%%%%%%%%%%%%%%%%%%%

Given two different points $p=(x_p, y_p)$ and $q=(x_q, y_q)$ in $\real^2$, where
$\real$ denotes the real domain, we say that $p$ \emph{dominates}~$q$ if $x_p
\geq x_q$ and $y_p \geq y_q$. Let $P$ be a set of $n$ points in $\real^2$. A
point $p \in P$ is \emph{maximal} if it is not dominated by any other point in
$P$. The {\em skyline} of $P$ consists of all maximal points of $P$. Notice that
the skyline naturally forms an orthogonal staircase where increasing
$x$-coordinates imply decreasing $y$-coordinates. Figure~\ref{fig:intro-sky}a
shows an example where the maximal points are in black.

\begin{figure}[b]
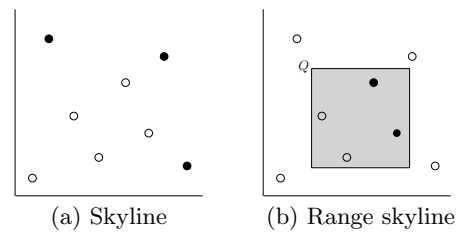

	\centering
	\begin{tabular}{cc}
    \yufeigraphics{height=25mm}{sky}&
    \hspace{3mm}\yufeigraphics{height=25mm}{sky-4s} \\
		(a) Skyline &
		\hspace{3mm} (b) Range skyline
	\end{tabular}
	\figcapup
	\caption{Range skyline queries.} \label{fig:intro-sky}
	\figcapdown
\end{figure}

\begin{figure*}
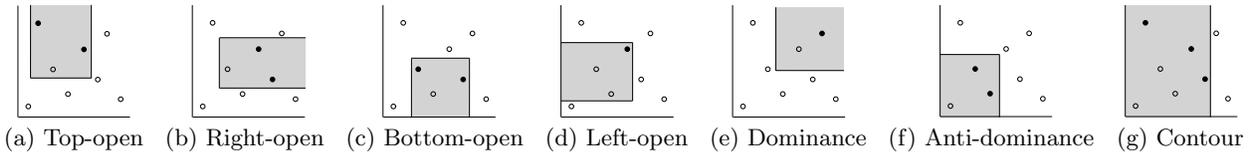

	\centering
	\begin{tabular}{ccccccc}
    \yufeigraphics{height=15mm}{sky-top}&
    \hspace{-2mm} \yufeigraphics{height=15mm}{sky-right}&
    \hspace{-2mm} \yufeigraphics{height=15mm}{sky-down}&
    \hspace{-2mm} \yufeigraphics{height=15mm}{sky-left} &
    \hspace{-2mm} \yufeigraphics{height=15mm}{sky-dom} &
    \hspace{-2mm} \yufeigraphics{height=15mm}{sky-antidom}&
    \hspace{-2mm} \yufeigraphics{height=15mm}{sky-con} \\
		(a) Top-open &
		\hspace{-2mm} (b) Right-open &
		\hspace{-2mm} (c) Bottom-open &
		\hspace{-2mm} (d) Left-open &
		\hspace{-2mm} (e) Dominance &
		\hspace{-2mm} (f) Anti-dominance &
		\hspace{-2mm} (g) Contour
	\end{tabular}
	\figcapup
	\caption{Variations of range skyline queries (black points represent the query results).} \label{fig:intro-var}
	\figcapdown
\end{figure*}

Given an axis-parallel rectangle~$Q$, a \emph{range skyline query} (also known as a \emph{range maxima query}) reports the skyline of $P \cap Q$.  In Figure~\ref{fig:intro-sky}b, for instance,~$Q$ is the shaded rectangle, and the two black points constitute the query result. When~$Q$ is a 3-sided rectangle, a range skyline query becomes a {\em top-open}, {\em right-open}, {\em bottom-open} or {\em left-open} query, as shown in
Figures~\ref{fig:intro-var}a-\ref{fig:intro-var}d respectively. A {\em dominance} (resp.\ {\em anti-dominance}) query $Q$ is a 2-sided rectangle with both the top and right (resp.\ the bottom and left) edges grounded, as shown in Figure~\ref{fig:intro-var}e (resp.\ \ref{fig:intro-var}f). Another well-studied variation is the {\em contour} query, where $Q$ is a 1-sided rectangle that is the half-plane to the left of a vertical line (Figure~\ref{fig:intro-var}g).

This paper studies linear-size data structures that can answer range skyline
queries efficiently, in both the static and dynamic settings. Our analysis
focuses on the {\em external memory} (EM) model \cite{AV88}, which has become
the dominant computation model for studying I/O-efficient algorithms. In this
model, a machine has $M$ words of memory, and a disk of an unbounded size. The
disk is divided into disjoint {\em blocks}, each of which is formed by $B$
consecutive words. An {\em I/O} loads a block of data from the disk to memory,
or conversely, writes $B$ words from memory to a disk block. The space of a
structure equals the number of blocks it occupies, while the cost of an
algorithm equals the number of I/Os it performs. CPU time is for free.

By default, the data universe is $\real^2$. Given an integer $U > 0$, $[U]$
represents the set $\{0, 1, \ldots, U-1\}$. All the above queries remain well
defined in the universe $[U]^2$. Particularly, when $U = \bigO(n)$, the
universe is called {\em rank space}. In general, for a smaller universe, it may
be possible to achieve better query cost under the same space budget. We consider that $P$ is in general position, i.e., no two points in $P$ have
the same $x$- or $y$-coordinate (datasets not in general position can be
supported by standard tie breaking). When the universe is $[U]^2$, we make the
standard assumption that a machine word has at least $\log_2 U$ bits.

\subsection{Motivation of 2D Range Skyline} \label{sec:intro-motivation}

Skylines have drawn very significant attention (see \cite{BT11, AFGT97,
DGKASK12, FR90, J91, KDKS11, K00, OL81, BCP08, BKS01, CGGL05, KRR02, MPJ07,
PTFS05, SLNX09, SSK09, ST11} and the references therein) from the research
community due to their crucial importance to multi-criteria optimization, which
in turn is vital to numerous applications. In particular, the rectangle of a
range skyline query represents range predicates specified by a user. An
effective index is essential for maximizing the efficiency of these queries in
database systems \cite{KRR02, PTFS05}.

This paper concentrates on 2D data for several reasons. First, {\em planar range skyline reporting} (i.e., our problem) is a classic topic that has been extensively studied in theory  \cite{BT11, AFGT97, DGKASK12, FR90, J91, KDKS11, K00, OL81}. However, nearly all the existing results  apply to internal memory (as reviewed in the next subsection), while currently there is little understanding about the characteristics of the problem in I/O environments.

\begin{table*}
	\centering
	%\begin{small}
	\begin{tabular}{c|c|c|c|c|c}
		 & space & query & insertion & deletion & remark \\
		\hline
		top-open in $\real^2$ & $\bigO(n/B)$ & $\bigO(\log_B n + k/B)$ & - & - & optimal \\
		top-open in $U^2$ & $\bigO(n/B)$ & $\bigO(\log\log_B U + k/B)$ & - & - & optimal \\
		top-open in $[\bigO(n)]^2$ & $\bigO(n/B)$ & $\bigO(1 + k/B)$ & - & - & optimal \\
		\hline
		anti-dominance in $\real^2$ & $\bigO(n/B)$ & $\Omega((n/B)^\eps + k/B)$ & - & - & lower bound (indexability)\\
		4-sided in $\real^2$ & $\bigO(n/B)$ & $\bigO((n/B)^\eps + k/B)$ & - & - & optimal (indexability) \\
		\hline
    top-open in $\real^2$ & $\bigO(n/B)$ & $\bigO(\log_{2B^\eps} n +
    k/B^{1-\eps})$ & $\bigO(\log_{2B^\eps} n)$ & $\bigO(\log_{2B^\eps} n)$ &
    for any constant $\eps \in [0, 1]$ \\
		\hline
		4-sided in $\real^2$ & $\bigO(n/B)$ & $\bigO((n/B)^\eps + k/B)$ & $\bigO(\log (n/B))$ & $\bigO(\log (n/B))$ & update cost is amortized
	\end{tabular}
	%\end{small}
	\figcapup
	\caption{Summary of our range skyline results (all complexities are in the worst case by default).} \label{tab:intro-results}
	\figcapdown
\end{table*}

The second, more practical, reason is that many skyline applications are {\em inherently} 2D. In fact, the special importance of 2D arises from the fact that one often faces the situation of having to strike a balance between a pair of naturally contradicting factors. A prominent example is {\em price} vs.\ {\em quality} in product selection. A range skyline query can be used to find the products that are not dominated by others in both aspects, when the price and quality need to fall in specific ranges. Other pairs of naturally contradicting factors include {\em space} vs.\ {\em query time} (in choosing data structures), {\em privacy protection} vs.\ {\em disclosed information} (the perpetual dilemma in privacy preservation \cite{CRL07}), and so on.

The last reason, and maybe the most important, is that clearly range skyline
reporting cannot become easier as the dimensionality increases, whereas even for
two dimensions, we will prove a hardness result showing that the problem
(unfortunately) is already difficult enough to forbid sub-polynomial
query cost under the linear space budget! In other words, the ``easiest''
dimensionality of 2 is not so easy after all, which also points to the absence
of query-efficient structures in any higher dimension
when only linear space is permitted.

%%%%%%%%%%%%%%%%%%%%%%%%%%%%%%%%%%%%%%%%%%%%%%%%%%%%%%%%%%%%%%%%%%%%%%%%%%%%%%%
\subsection{Previous Results} \label{sec:intro-related}
%%%%%%%%%%%%%%%%%%%%%%%%%%%%%%%%%%%%%%%%%%%%%%%%%%%%%%%%%%%%%%%%%%%%%%%%%%%%%%%

\extraspacing {\bf Range Skyline in Internal Memory.} We first review the
existing results when the dataset $P$ fits in main memory. Early research
focused on dominance and contour queries, both of which can be solved in
$\bigO(\log n + k)$ time using a structure of $\bigO(n)$ size, where $k$ is the
number of points reported \cite{AFGT97, FR90, J91, K00, OL81}. Brodal and
Tsakalidis \cite{BT11} were the first to discover an optimal dynamic structure
for top-open queries, which capture both dominance and contour queries as
special cases. Their structure occupies $\bigO(n)$ space, answers queries in
$\bigO(\log n + k)$ time, and supports updates in $\bigO(\log n)$ time. The
above structures  belong to the {\em pointer machine} model.  Utilizing features
of the RAM model, Brodal and Tsakalidis \cite{BT11} also presented an
alternative structure in universe $[U]^2$, which uses $\bigO(n)$ space, answers
queries in $\bigO(\frac{\log n}{\log\log n} + k)$ time, and can be updated in
$\bigO(\frac{\log n}{\log\log n})$ time. In RAM, the static top-open problem can
be easily settled using an RMQ ({\em range minimum queries}) structure (see,
e.g., \cite{YA10}), which occupies $\bigO(n)$ space and answers queries in
$\bigO(1 + k)$ time.

For general range skyline queries (i.e., 4-sided), all the known structures demand super-linear space. Specifically, Brodal and Tsakalidis \cite{BT11} gave a pointer-machine structure of $\bigO(n \log n)$ size, $\bigO(\log^2 n + k)$ query time, and $\bigO(\log^2 n)$ update time. Kalavagattu et al.\ \cite{KDKS11} designed a static RAM-structure that occupies $O(n \log n)$ space and achieves query time $\bigO(\log n + k)$. In rank space, Das et al.\ \cite{DGKASK12} proposed a static RAM-structure with $\bigO(n \fr{\log n}{\log\log n})$ space and $\bigO(\fr{\log n}{\log\log n} + k)$ query time.

The above results also hold directly in external memory, but they are far from being satisfactory. In particular, all of them incur $\Omega(k)$ I/Os to report $k$ points. An I/O-efficient structure ought to achieve $\bigO(k/B)$ I/Os for this purpose.

\extraspacing {\bf Range Skyline in External Memory.} In contrast to internal
memory where there exist a large number of results, range skyline queries have
not been well studied in external memory. As a naive solution, we can first
scan the entire point set $P$ to eliminate the points falling outside the query
rectangle $Q$, and then find the skyline of the remaining points by the fastest
skyline algorithm \cite{ST11} on non-preprocessed input sets. This expensive
solution can incur $\bigO((n/B) \log_{M/B} (n/B))$ I/Os.

Papadias et al.\ \cite{PTFS05} described a branch-and-bound algorithm when the
dataset is indexed by an R-tree \cite{G84}. The algorithm is heuristic and
cannot guarantee better worst case query I/Os than the naive solution mentioned
earlier. Different approaches have been proposed for skyline maintenance in
external memory under various assumptions on the updates
\cite{TP06,WAEA07,PTFS05,HLOT06}. The performance of those methods, however,
was again evaluated only experimentally on certain ``representative'' datasets.
No I/O-efficient structure exists for answering range skyline queries even in
sublinear I/Os under arbitrary updates.

\extraspacing {\bf Priority Queues with Attrition (PQAs).} Let $S$ be a set of elements drawn from an ordered domain, and let $\min(S)$ be the smallest element in $S$.
A PQA on $S$ is a data structure that supports the following operations:
\begin{itemize}
	\item \textsc{FindMin}: Return $\min(S)$. \vspace{-2mm}

	\item \textsc{DeleteMin}: Remove and return $\min(S)$. \vspace{-2mm}

	\item \textsc{InsertAndAttrite}: Add a new element $e$ to $S$ and remove from $S$ all the elements at least $e$.
	After the operation, the new content is $S'=\{e' \in S \mid e' < e\} \cup \{e\}$.
    The elements $\{e' \in S \mid e' \ge e\}$ are {\em attrited}.
\end{itemize}
In internal memory, Sundar \cite{S89} described how to implement a PQA that supports all operations in $\bigO(1)$ worst case time, and occupies $\bigO(n-m)$ space after $n$ \textsc{InsertAndAttrite} and $m$ \textsc{DeleteMin} operations.

\subsection{Our Results} \label{sec:intro-ours}%

This paper presents external memory structures for solving the planar range skyline reporting problem using only linear space. At the core of one of these structures is a new PQA that supports the extra functionality of catenation. This PQA is a non-trivial extension of Sundar's version \cite{S89}. It can be implemented I/O-efficiently, and is of independent interest due to its fundamental nature. Next, we provide an overview of our results.

\extraspacing {\bf Static Range Skyline.} When $P$ is static, we describe several linear-size structures with the optimal query cost. Our structures also separate the hard variants of the problem from the easy ones.

For top-open queries, we present a structure that answers queries in optimal $\bigO(\log_B n + k/B)$ I/Os (Theorem~\ref{thm:topopen-main}) when the universe is $\real^2$. To obtain the result, we give an elegant reduction of the problem to {\em segment intersection}, which can be settled by a {\em partially persistent B-tree} (PPB-tree) \cite{BGOSW96}. Furthermore, we show that this PPB-tree is (what we call) {\em sort-aware build-efficient} (SABE), namely, it can be constructed in linear I/Os, provided that $P$ is already sorted by $x$-coordinate (Theorem~\ref{thm:topopen-main}). The construction algorithm exploits several intrinsic properties of top-open queries, whereas none of the known approaches \cite{A03, BSW97, V08} for bulkloading a PPB-tree is SABE.

The above structure is {\em indivisible}, namely, it treats each coordinate as an atom by always storing it using an entire word. As the second step, we improve the top-open query overhead beyond the logarithmic bound when the data universe is small. Specifically, when the universe is $[U]^2$ where $U$ is an integer, we give a {\em divisible}  structure with optimal $\bigO(\log \log_B U + k/B)$ query I/Os (Corollary~\ref{crl:div-rankmain}). In the rank space, we further reduce the query cost again optimally to $\bigO(1 + k/B)$ (Theorem~\ref{thm:div-rankmain}).

Clearly, top-open queries are equivalent to right-open queries by symmetry, and capture dominance and contour queries as special cases, so the results aforementioned are applicable to those variants immediately.

Unfortunately, fast query cost with linear space is impossible for the remaining variants under the well-known {\em indexability model} of \cite{HKMPS02} (all the structures in this paper belong to this model). Specifically, for anti-dominance queries, we establish a lower bound showing that every linear-size structure must incur $\Omega((n/B)^\epsilon + k/B)$ I/Os in the worst case (Theorem~\ref{thm:4sided-lower2}), where $\eps > 0$ can be an arbitrarily small constant. Furthermore, we prove that this is tight, by giving a structure to answer a 4-sided query in $\bigO((n/B)^\epsilon + k/B)$ I/Os (Theorem~\ref{thm:4sided-main}). Since 4-sided is more general than anti-dominance, these matching lower and upper bounds imply that they, as well as left- and bottom-open queries, have exactly the same difficulty.

The above 4-sided results also reveal a somewhat unexpected fact: planar range skyline reporting has precisely the same hardness as {\em planar range reporting} (where, given an axis-parallel rectangle $Q$, we want to find all the points in $P \intr Q$, instead of just the maxima; see \cite{ASV99, HKMPS02} for the matching  lower and upper bounds on planar range reporting). In other words, the extra skyline requirement does not alter the difficulty at all.

\extraspacing {\bf Dynamic Range Skyline.} The aforementioned static structures
cannot be updated efficiently when insertions and deletions occur in $P$. For
top-open queries, we provide an alternative structure with fast worst case
update overhead, at a minor expense of query efficiency. Specifically, our
structure occupies linear space, is SABE, answers queries in
$\bigO(\log_{2B^\epsilon} (n/B) + k/B^{1-\epsilon})$ I/Os, and supports updates in $\bigO(\log_{2B^\epsilon} (n/B))$ I/Os, where $\eps$ can
be any parameter satisfying $0\leq \epsilon \leq 1$ (Theorem~\ref{thm:3sided}).
Note that setting $\eps = 0$ gives a structure with query cost $\bigO(\log(n/B)
+ k/B)$ and update cost $\bigO(\log(n/B))$.

The combination of this structure and our (static) 4-sided structure leads to a
dynamic 4-sided structure that uses linear space, answers queries optimally in
$\bigO((n/B)^\epsilon + k/B)$ I/Os, and supports updates in
$\bigO(\log(n/B))$ I/Os amortized (Theorem~\ref{thm:4sided-main}).
Table~\ref{tab:intro-results} summarizes our structures.

\extraspacing {\bf Catenable Priority Queues with Attrition.} A central
ingredient of our dynamic structures is a new PQA that is more
powerful than the traditional version of Sundar \cite{S89}. Specifically, besides  \textsc{FindMin},
\textsc{DeleteMin} and \textsc{InsertAndAttrite} (already reviewed in
Section~\ref{sec:intro-related}), it also supports:
\begin{itemize}
  \item \textsc{CatenateAndAttrite}: Given two PQAs on sets $S_1$ and $S_2$
  respectively, the operation returns a single PQA on $S = \{e \in S_1 \mid e <
  \min(S_2)\} \cup S_2$. In other words, the elements in $\{e \in S_1 \mid e
  \geq \min(S_2)\}$ are attrited.
\end{itemize}
We are not aware of any previous work that addressed the above operation,
which turns out to be rather challenging even in internal
memory.

Our structure, named \emph{I/O-efficient catenable priority queue with
attrition} (I/O-CPQA), supports all operations in $\bigO(1)$ worst case and
$\bigO(1/B)$ amortized I/Os (the amortized bound requires that a constant
number of blocks be pinned in main memory, which is a standard and compulsory
assumption to achieve $\smallO(1)$ amortized update cost of most, if not all,
known structures, e.g., the linked list). The space cost is $\bigO((n-m)/B)$ after
$n$ \textsc{InsertAndAttrite} and \textsc{CatenateAndAttrite} operations, and
after $m$ \textsc{DeleteMin} operations.

\confcmt{All the missing proofs of theorems, lemmata and corollaries can be found in the full version.}
%%%%%%%%%%%%%%%%%%%%%%%%%%%%%%%%%%%%%%%%%%%%%%%%%%%%%%%%%%%%%%%%%%%%%%%%%%%%%%%
\section{SABE Top-Open Structure} \label{sec:topopen}
%%%%%%%%%%%%%%%%%%%%%%%%%%%%%%%%%%%%%%%%%%%%%%%%%%%%%%%%%%%%%%%%%%%%%%%%%%%%%%%

In this section, we describe a structure of linear size to answer a top-open
query in $\bigO(\log_B n + k/B)$ I/Os. The structure is SABE, namely, it can be
constructed in linear I/Os provided that the input set $P$ is sorted by
$x$-coordinate.

%%%%%%%%%%%%%%%%%%%%%%%%%%%%%%%%%%%%%%%%%%%%%%%%%%%%%%%%%%%%%%%%%%%%%%%%%%%%%%%
\subsection{Reduction to Segment Intersection} \label{sec:topopen-reduce}
%%%%%%%%%%%%%%%%%%%%%%%%%%%%%%%%%%%%%%%%%%%%%%%%%%%%%%%%%%%%%%%%%%%%%%%%%%%%%%%

We first describe a simple structure by converting top-open range skyline
reporting to the {\em segment intersection problem}: the input is a set $S$ of
horizontal segments in $\real^2$; given a vertical segment $q$, a query reports
all the segments of $S$ intersecting $q$.

Given a point $p$ in $P$, denote by $\mathit{leftdom}(p)$ the leftmost
point among all the points in $P$ dominating $p$. If such a point does not
exist, $\mathit{leftdom}(p) =$ \nil. We convert $p$ to a horizontal segment
$\sigma(p)$ as follows. Let $q = \mathit{leftdom}(p)$. If $q =$ \nil, then
$\sigma(p) = [x_p, \infty[ \times y_p$; otherwise,
$\sigma(p) = [x_p, x_q[ \times y_p$. Define $\Sigma(P) = \{\sigma(p) \mid p \in
P\}$, i.e., the set of segments converted from the points of $P$. See
Figure~\ref{fig:topopen-reduce}a for an example.

\begin{figure}[!h]
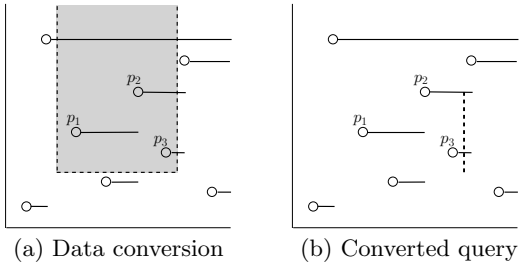

	\centering
	\begin{tabular}{cc}
    \yufeigraphics{height=30mm}{top-reduce}&
    \hspace{3mm}\yufeigraphics{height=30mm}{top-reduce1} \\
		(a) Data conversion &
		\hspace{3mm} (b) Converted query
	\end{tabular}
	\figcapup
	\caption{Reduction.} \label{fig:topopen-reduce}
	\figcapdown
\end{figure}

Now, consider a top-open query with rectangle $Q = [\alpha_1, \alpha_2] \times
[\beta, \infty[$. We answer it by performing segment intersection
on~$\Sigma(P)$. First, obtain $\beta'$ as the highest $y$-coordinate of the
points in $P \intr Q$. Then, report all segments in $\Sigma(P)$ that intersect
the vertical segment $\alpha_2 \times [\beta, \beta']$. An example is shown in
Figure~\ref{fig:topopen-reduce}b. \confcmt{A proof of the correctness of the
algorithm can be found in the full version.}

\begin{fullenv}
\begin{lemma} \label{lmm:topopen-correct}
	The query algorithm is correct.
\end{lemma}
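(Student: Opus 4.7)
The plan is to prove the two-way equivalence: for every $p \in P$, the segment $\sigma(p)$ intersects the query segment $\alpha_2 \times [\beta, \beta']$ if and only if $p$ belongs to the skyline of $P \cap Q$. Both directions will pivot on the following observation that I would isolate first: by construction, $\sigma(p) = [x_p, x_q[ \times y_p$ contains the point $(\alpha_2, y_p)$ precisely when $x_p \le \alpha_2 < x_q$, i.e., when $p$ lies to the left of (or on) the line $x = \alpha_2$ and no point in $P$ with $x$-coordinate in $[x_p, \alpha_2]$ dominates $p$.

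For the forward direction, I would take $p$ to be a skyline point of $P \cap Q$, so $x_p \in [\alpha_1, \alpha_2]$ and $y_p \ge \beta$. Also $y_p \le \beta'$ by definition of $\beta'$, which places $(\alpha_2, y_p)$ inside the vertical query segment. It then suffices to argue $x_q > \alpha_2$: any point $p'$ of $P$ that dominates $p$ satisfies $x_{p'} \ge x_p \ge \alpha_1$ and $y_{p'} \ge y_p \ge \beta$; if additionally $x_{p'} \le \alpha_2$, then $p' \in P \cap Q$, contradicting maximality of $p$. Hence every dominator of $p$ has $x$-coordinate strictly greater than $\alpha_2$, so $x_q > \alpha_2$ (taking $x_q = \infty$ if $\mathit{leftdom}(p) = \textit{nil}$), and $\sigma(p)$ intersects the query segment.

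The reverse direction is slightly more delicate, and this is where the role of $\beta'$ becomes crucial; I would handle it as the main step. Assume $\sigma(p)$ hits $\alpha_2 \times [\beta, \beta']$, so $y_p \in [\beta, \beta']$ and $x_p \le \alpha_2 < x_q$. To show $p \in Q$, I need $x_p \ge \alpha_1$; suppose for contradiction $x_p < \alpha_1$, and let $p^\star \in P \cap Q$ be the point attaining $y_{p^\star} = \beta'$. Then $x_{p^\star} \ge \alpha_1 > x_p$ and $y_{p^\star} = \beta' \ge y_p$, so $p^\star$ dominates $p$; but $x_{p^\star} \le \alpha_2$, forcing $x_q \le x_{p^\star} \le \alpha_2$, a contradiction. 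So $p \in P \cap Q$. To show $p$ is maximal in $P \cap Q$, let $p' \in P \cap Q$ dominate $p$; then $x_{p'} \le \alpha_2$, so again $x_q \le x_{p'} \le \alpha_2$, a contradiction. Hence $p$ lies on the skyline of $P \cap Q$.

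The main obstacle, as highlighted above, is handling the left boundary $\alpha_1$, which is never used explicitly by the query algorithm. The trick is that $\beta'$ carries enough information about $\alpha_1$: the topmost point inside $Q$ automatically dominates any point lying to the left of $Q$ that could otherwise creep into the answer, and this dominance is exactly what truncates $\sigma(p)$ before reaching the query line. Once this is articulated cleanly, both directions reduce to straightforward case analyses using the geometric definition of $\sigma(p)$.
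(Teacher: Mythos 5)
Your proof is correct and follows essentially the same route as the paper's: the forward direction shows every dominator of a maximal point of $P \cap Q$ must have $x$-coordinate beyond $\alpha_2$, and the reverse direction uses the topmost point of $P \cap Q$ (the one attaining $\beta'$) to rule out $x_p < \alpha_1$, exactly as the paper does (you phrase it as a contradiction where the paper argues contrapositively, but the key dominance observation is identical). No gaps.
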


\begin{proof}
  Consider any point $p \in P$ and a top-open query with $Q = [\alpha_1,
  \alpha_2] \times [\beta, \infty[$. We show that our algorithm reports $p$ if
  and only if $p$ satisfies the query.

  \vspace{2mm} {\em If direction:} As $p$ satisfies the query, we know that $p
  \in Q$, $y_p \le \beta'$, and $q = \leftdom(p) \notin Q$. The last fact
  suggests that $x_q > \alpha_2$ (if $q =$ \nil, define $x_q =
  \infty$). Hence, $\sigma(p) = [x_p, x_q[ \times y_p$ intersects the vertical
  segment $\alpha_2 \times [\beta, \beta']$, and thus, will be reported by our
  algorithm.

  \vspace{2mm} {\em Only-if direction:} Let $p$ be a point found by our
  algorithm, i.e., $\sigma(p) = [x_p, x_q[ \times y_p$ intersects $\alpha_2
  \times [\beta, \beta']$, where $q = \leftdom(p)$ (if $q$ does not exist, $x_q
  = \infty$). It follows that $x_p \le \alpha_2 < x_q$ and $\beta \le y_p \le
  \beta'$.

  Next, we prove $\alpha_1 \le x_p$. Recall that $\beta'$ is the $y$-coordinate
  of the highest point $p'$ among all the points in $P \intr Q$. If $p = p'$,
  then $\alpha_1 \le x_p$ clearly holds. Otherwise, we know $y_p \leq y_{p'}$,
  which implies that $x_p > x_{p'}$. This is because if $x_p \leq x_{p'}$, then
  $p'$ dominates $p$, which (because $x_{p'} \le \alpha_2 < x_q$) contradicts
  the definition of $q$. Now, $x_p \ge \alpha_1$ follows from $x_{p'} \ge
  \alpha_1$.

  So far we have shown that $p$ is covered by $Q$. It remains to prove that $p$
  is not dominated by any point in $P \intr Q$. This is true because $\alpha_2
  < x_q$ suggests that the leftmost point in $P$ dominating $p$ must be outside
  $Q$.
\end{proof}
\end{fullenv}

We can find $\beta'$ in $\bigO(\log_B n)$ I/Os with a {\em range-max query} on
a B-tree indexing the $x$-coordinates in $P$. For retrieving the segments
intersecting $\alpha_2 \times [\beta, \beta']$, we store $\Sigma(P)$ in a
partially persistent B-tree (PPB-tree) \cite{BGOSW96}. As $\Sigma(P)$ has $n$
segments, the PPB-tree occupies $\bigO(n/B)$ space and answers a segment
intersection query in $\bigO(\log_B n + k/B)$ I/Os. We thus have obtained a
linear-size top-open structure with $\bigO(\log_B n + k/B)$ query I/Os.

More effort, however, is needed to make the structure SABE. In particular, two challenges are to be overcome. First, we must generate $\Sigma(P)$ in linear
I/Os. Second, the PPB-tree on $\Sigma(P)$ must be built with asymptotically the same cost (note that the range-max B-tree is already SABE). We will tackle these challenges in the rest of this section.

%%%%%%%%%%%%%%%%%%%%%%%%%%%%%%%%%%%%%%%%%%%%%%%%%%%%%%%%%%%%%%%%%%%%%%%%%%%%%%%
\subsection{Computing {\large \boldmath$\Sigma(P)$}} \label{sec:topopen-genseg}
%%%%%%%%%%%%%%%%%%%%%%%%%%%%%%%%%%%%%%%%%%%%%%%%%%%%%%%%%%%%%%%%%%%%%%%%%%%%%%%

$\Sigma(P)$ is not an arbitrary set of segments. We observe:

\begin{lemma} \label{lmm:topopen-properties}
	$\Sigma(P)$ has the following properties:
	\begin{itemize}
    \item {\bf (Nesting)} for any two segments $s_1$ and $s_2$ in $\Sigma(P)$,
    their $x$-intervals are either disjoint, or such that one $x$-interval
    contains the other.

    \item {\bf (Monotonic)} let $\ell$ be any vertical line, and $S(\ell)$ the
    set of segments in $\Sigma(P)$ intersected by $\ell$. If we sort the
    segments of $S(\ell)$ in ascending order of their $y$-coordinates, the lengths of
    their $x$-intervals are non-decreasing.
	\end{itemize}
\end{lemma}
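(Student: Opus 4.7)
The plan is to establish Nesting first, then derive Monotonic as a short consequence. Throughout, let $q(p) \defeq \leftdom(p)$ and adopt the convention $x_{q(p)} = \infty$ when $q(p) = \nil$, so that $\sigma(p) = [x_p, x_{q(p)}[\, \times y_p$ uniformly. General position means distinct points never share an $x$- or $y$-coordinate, so I may always work with strict inequalities.

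For Nesting, I would take distinct $p_1, p_2 \in P$ whose $x$-intervals overlap, and assume WLOG $x_{p_1} < x_{p_2}$. The first step is to rule out $y_{p_2} > y_{p_1}$: otherwise $p_2$ dominates $p_1$, hence $x_{q(p_1)} \le x_{p_2}$, contradicting overlap. Thus $y_{p_2} < y_{p_1}$. The second step is to show that $q(p_1)$ itself dominates $p_2$: its $x$-coordinate exceeds $x_{p_2}$ since $x_{p_2} < x_{q(p_1)}$ by overlap, while $y_{q(p_1)} > y_{p_1} > y_{p_2}$ by the definition of $\leftdom$; the case $q(p_1) = \nil$ is absorbed by $x_{q(p_1)} = \infty$. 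By the leftmost-dominator definition of $q(p_2)$, this forces $x_{q(p_2)} \le x_{q(p_1)}$, which gives $[x_{p_2}, x_{q(p_2)}[\, \subseteq [x_{p_1}, x_{q(p_1)}[$.

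For Monotonic, fix a vertical line $\ell\colon x = c$ and take $\sigma(p_1), \sigma(p_2) \in S(\ell)$ with $y_{p_1} < y_{p_2}$. Both $x$-intervals contain $c$, so they intersect, and by Nesting one contains the other. It suffices to verify that $\sigma(p_1)$'s $x$-interval is the smaller one. Indeed, if $x_{p_1} \le x_{p_2}$ then $p_2$ would dominate $p_1$ (using $y_{p_2} > y_{p_1}$ as well), giving $x_{q(p_1)} \le x_{p_2} \le c$ and contradicting $\sigma(p_1) \in S(\ell)$. Hence $x_{p_2} < x_{p_1}$, which by Nesting forces $x_{q(p_2)} \ge x_{q(p_1)}$, and the length inequality $x_{q(p_2)} - x_{p_2} \ge x_{q(p_1)} - x_{p_1}$ is immediate.

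The only subtlety I would watch is the uniform treatment of $q(p) = \nil$; declaring $x_{q(p)} = +\infty$ in that case collapses every boundary case into the same arithmetic. Beyond that, the proof is a short exploitation of the definition of $\leftdom$ together with general position, so I do not anticipate any serious obstacle.
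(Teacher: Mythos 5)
Your proof is correct, and the Nesting half is essentially the paper's argument (theirs is a case analysis on $\mathrm{sign}(y_{p_1}-y_{p_2})$; yours assumes overlap and derives containment, but the key step---$\leftdom(p_1)$ dominates $p_2$, hence $x_{\leftdom(p_2)} \le x_{\leftdom(p_1)}$---is identical). The only genuine divergence is in Monotonic: the paper invokes Lemma~\ref{lmm:topopen-correct} to observe that the left endpoints of the segments in $S(\ell)$ form the skyline of the points left of $\ell$, so their $x$-coordinates strictly decrease with increasing $y$, and then applies Nesting; you instead prove that same decreasing-$x$ fact directly for a pair of segments (if the lower segment's left endpoint were not to the right, the higher point would dominate it and truncate its interval before $\ell$) and then apply Nesting. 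Your route is more self-contained, as it does not depend on the correctness of the query algorithm; the paper's route buys a structural interpretation of $S(\ell)$ that it reuses elsewhere (e.g., in the few-point structure's observations). Both are sound, including your handling of $\leftdom(p)=\nil$ via $x_{\leftdom(p)}=\infty$.
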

\begin{fullenv}
\begin{proof}
  \extraspacing {\em Nesting:} Let $p_1$ and $p_2$ be the points such that $s_1
  = \sigma(p_1)$ and $s_2 = \sigma(p_2)$. Assume without loss of generality
  that $x_{p_1} < x_{p_2}$. Consider first the case $y_{p_1} < y_{p_2}$. In
  this scenario, the $x$-interval of $s_1$ must terminate before $x_{p_2}$
  because $p_2$ dominates $p_1$. In other words, $s_1$ and $s_2$ have disjoint
  $x$-intervals.

  We now discuss the case $y_{p_1} > y_{p_2}$. If $\leftdom(p_1)$ does not
  exist, the $x$-interval of $s_1$ is $[x_{p_1}, \infty[$, which clearly
  encloses that of $s_2$. Consider, instead, that $\leftdom(p_1)$ exists.  If
  $\leftdom(p_1)$ has $x$-coordinate smaller than $x_{p_2}$, then $s_1$ and $s_2$
  have disjoint $x$-intervals.  Otherwise, $\leftdom(p_1)$ also dominates
  $p_2$, implying that the $x$-interval of $s_1$ contains that of $s_2$.

  \vspace{2mm} {\em Monotonic:} Let $\ell$ intersect the $x$-axis at $\alpha$.
  Consider the contour query with rectangle $Q = ]-\infty, \alpha] \times
  ]-\infty, \infty[$, which is a special top-open query. By
  Lemma~\ref{lmm:topopen-correct}, the left endpoints of the segments in
  $S(\ell)$ constitute the skyline of $P \intr Q$. Therefore, if we enumerate
  the segments of $S(\ell)$ in ascending order of $y$-coordinates, their left
  endpoints' $x$-coordinates decrease continuously. It thus follows from the
  nesting property that their $x$-intervals have increasing lengths.
\end{proof}
\end{fullenv}

We are ready to present our algorithm for computing $\Sigma(P)$, after $P$ has been sorted by $x$-coordinates. Conceptually, we sweep a vertical line $\ell$ from $x = -\infty$ to $\infty$. At any time, the algorithm (essentially) stores the set $S(\ell)$ of segments in a stack, which are en-stacked in descending order of $y$-coordinates (i.e., the segment that tops the stack has the lowest y-coordinate). Whenever a segment is popped out of the stack, its right endpoint is decided, and the segment is output. In general, the segments of $\Sigma(P)$ are output in non-descending order of their right endpoints' $x$-coordinates.

Specifically, the algorithm starts by pushing the leftmost
point of $P$ onto the stack. Iteratively, let $p$ be the next point fetched
from $P$, and $q$ the point currently at the top of the stack. If $y_q < y_p$, we know that $p = \leftdom(q)$. Hence, the algorithm pops
$q$ off the stack, and outputs segment $\sigma(q) = [x_q, x_p[ \times y_q$.
Then, letting $q$ be the point that tops the stack currently, the algorithm
checks again whether $y_q < y_p$, and if so, repeats the above steps. This
continues until either the stack is empty or $y_q > y_p$. In either case, the
iteration finishes by pushing $p$ onto the stack. It is clear that the algorithm generates $\Sigma(P)$ in $\bigO(n/B)$ I/Os.

%%%%%%%%%%%%%%%%%%%%%%%%%%%%%%%%%%%%%%%%%%%%%%%%%%%%%%%%%%%%%%%%%%%%%%%%%%%%%%%
\subsection{Constructing the PPB-tree} \label{sec:topopen-pers}
%%%%%%%%%%%%%%%%%%%%%%%%%%%%%%%%%%%%%%%%%%%%%%%%%%%%%%%%%%%%%%%%%%%%%%%%%%%%%%%

Remember that we need a PPB-tree $T$ on $\Sigma(P)$. The known algorithms for
PPB-tree construction require super-linear I/Os even after sorting \cite{A03,
BGOSW96, BSW97, V08}. Next, we show that the two properties of $\Sigma(P)$ in
Lemma~\ref{lmm:topopen-properties} allow building~$T$ in linear I/Os. Let us
number the leaf level as {\em level 0}. In general, the parent of a level-$i$
($i \ge 0$) node is at level $i+1$. We will build $T$ in a bottom-up manner,
i.e., starting from the leaf level, then level $1$, and so on.

\extraspacing{\bf Leaf Level.} To create the leaf nodes, we need to first sort
the left and right endpoints of the segments in $\Sigma(P)$ together by
$x$-coordinate. This can be done in $\bigO(n/B)$ I/Os as follows. First, $P$,
which is sorted by $x$-coordinates, gives a sorted list of the left endpoints.
On the other hand, our algorithm of the previous subsection generates
$\Sigma(P)$ in non-descending order of the right endpoints' $x$-coordinates
(breaking ties by favoring lower points). By merging the two lists, we obtain
the desired sorted list of left and right endpoints combined.

Let us briefly review the algorithm proposed in \cite{BGOSW96} to build a
PPB-tree. The algorithm conceptually moves a vertical line $\ell$ from $x =
-\infty$ to $\infty$. At any moment, it maintains a B-tree $T(\ell)$ on the
$y$-coordinates of the segments in $S(\ell)$. We call $T(\ell)$ a {\em
snapshot B-tree}. To do so, whenever $\ell$ hits the left (resp.\ right) endpoint of a
segment $s$, it inserts (resp.\ deletes) the $y$-coordinate of $s$ in $T(\ell)$. The
PPB-tree can be regarded as a space-efficient union of all the snapshot
B-trees. The algorithm incurs $\bigO(n \log_B n)$ I/Os because (i) there are
$2n$ updates, and (ii) for each update, $\bigO(\log_B n)$ I/Os are needed to
locate the leaf node affected.

When $\Sigma(P)$ is nesting and monotonic, the construction
can be significantly accelerated. A crucial observation is that any update to
$S(\ell)$ happens only {\em at the bottom} of $\ell$. Specifically, whenever
$\ell$ hits the left/right endpoint of a segment $s \in \Sigma(P)$, $s$ must be
the lowest segment in $S(\ell)$. This implies that the leaf node of $T(\ell)$
to be altered must be the leftmost\footnote{We adopt the convention that the leaf elements of a B-tree are ordered from left to right in ascending order.} one in $T(\ell)$. Hence, we can find this leaf
without any I/Os by buffering it in memory, in contrast to the $\bigO(\log_B n)$
cost originally needed.

The other details are standard, and are sketched below assuming the knowledge
of the classic algorithm in \cite{BGOSW96}. Whenever the leftmost leaf $u$ of $T(\ell)$ is
full, we version copy it to $u'$, and possibly perform a split or merge, if
$u'$ strong-version overflows or underflows, respectively\footnote{Version copy, strong-version overflow and strong-version underflow are
concepts from the terminology of \cite{BGOSW96}.}. A version
copy, split, and merge can all be handled in $\bigO(1)$ I/Os, and can happen
only $\bigO(n/B)$ times.  Therefore, the cost of building the leaf level is
$\bigO(n/B)$.

\extraspacing{\bf Internal Levels.} The level-$1$ nodes can be built by exactly
the same algorithm, but on a different set of segments $\Sigma_1$ which are
generated from the leaf nodes of the PPB-tree. To explain, let us first review
an intuitive way \cite{BS96} to visualize a node in a PPB-tree. A node $u$ can
be viewed as a rectangle $r(u) = [x_1, x_2[ \times [y_1, y_2[$ in $\real^2$, where $x_1$ (resp.\ $x_2$) is the position of $\ell$ when $u$ is created (resp.\ version copied), and $[y_1, y_2[$ represents the $y$-range of $u$ in all the snapshot B-trees where $u$ belongs. See Figure~\ref{fig:topopen-mvb}.

\begin{figure}[!h]
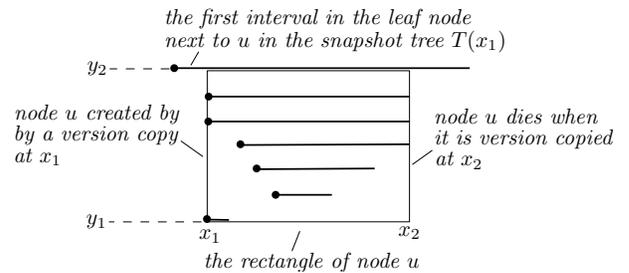

	\centering
    \yufeigraphics{height=35mm}{mvb}
	\figcapup
	\caption{A node in a PPB-tree.}
	\label{fig:topopen-mvb}
	\figcapdown
\end{figure}

For each leaf node $u$ (already created), we add the bottom edge of $r(u)$, namely $[x_1, x_2[ \times
y_1$, into $\Sigma_1$. The next lemma points out a crucial fact.

\begin{lemma} \label{lmm:topopen-level1}
	$\Sigma_1$ is both nesting and monotonic.
\end{lemma}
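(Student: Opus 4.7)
The key lever is the observation, already established in the previous subsection, that during the PPB-tree construction every modification is performed on the leftmost (lowest-$y$) leaf of the current snapshot $T(\ell)$. My plan is to promote this ``only-at-the-leftmost'' invariant into a statement about the birth and death times of the leaves, and then to read off both properties of $\Sigma_1$ from a single lifetime-nesting lemma.

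Step 1 (lifetime nesting). Let $a$ and $b$ be any two leaves of the PPB-tree whose lifetime intervals $[x_a^-, x_a^+[$ and $[x_b^-, x_b^+[$ overlap. Because leaves alive in the same snapshot $T(\ell)$ have pairwise disjoint $y$-ranges, and because a leaf's $y$-range never changes during its life, I may assume WLOG that throughout the common alive period $a$ sits strictly above $b$ in $y$. I will argue that both endpoints of $b$'s lifetime are sandwiched inside $a$'s. First, $x_a^- \le x_b^-$: any newly born leaf occupies the leftmost or the second-leftmost slot (the latter only as the upper part of a split or a rotation at $u_1$), so at time $x_b^-$ the leaf $a$, being above $b$, is either already alive from an earlier time, or is born simultaneously with $b$ as the upper sibling of the very same modification. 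Second, $x_b^+ \le x_a^+$: $a$ can disappear only after descending to the leftmost slot (possibly dying at the same instant as the second-leftmost via a merge or a rotation), which forces $b$, being below $a$, to have either already died or to die at that same instant. Hence $[x_b^-, x_b^+[ \subseteq [x_a^-, x_a^+[$.

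Step 2 (derivation of the two properties). Nesting of $\Sigma_1$ is now immediate, because the $x$-interval of each bottom edge in $\Sigma_1$ equals the lifetime interval of the corresponding leaf. For monotonicity, fix any vertical line $\ell$ and enumerate the leaves alive at $\ell$ bottom-up in $y$ as $v_1, v_2, \ldots, v_k$; all these lifetime intervals straddle $\ell$, so Step 1 applies to every pair. For $i < j$, $v_i$ lies below $v_j$, hence $v_i$'s lifetime interval is nested inside $v_j$'s, and thus the $x$-interval of $v_i$'s bottom edge is no longer than that of $v_j$. This is precisely the monotonic property, since sorting the bottom edges in $S(\ell)$ by $y$-coordinate is the same as listing $v_1, v_2, \ldots, v_k$ in this bottom-up order.

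The main obstacle will be the rigorous enumeration of the birth/death events for a leaf under the PPB-tree update mechanism from the previous subsection --- plain version copies, splits, merges, and rotations/borrows, together with the cascading effects of strong-version overflows and underflows inherited from \cite{BGOSW96} --- and the verification that, under the ``leftmost only'' invariant, each such event creates its new leaves at positions $1$ and/or $2$ and destroys only leaves at positions $1$ and/or $2$. Once this case check is in place, Steps 1--2 go through without further calculation.
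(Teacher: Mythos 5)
Your argument is correct in substance, but it is organized quite differently from the paper's proof, and the part you defer is exactly where the paper does its work. The paper proves the lemma by induction on the position of the sweep line $\ell$: it maintains the invariant that the set $\Sigma_1(\ell)$ of bottom edges spawned so far (including those of the still-alive nodes, whose right edges move with $\ell$) is nesting and monotonic, and it verifies this invariant through an explicit case analysis of each event --- version copy, split, merge (which version-copies the sibling and hence finalizes its rectangle too), and a possible further split --- using the facts that newly created rectangles have zero-length $x$-intervals and that the finalized rectangle is always the lowest (or second lowest) one alive. You instead extract a single clean intermediate claim --- that for two leaves with overlapping lifetimes the lower one's lifetime interval is contained in the upper one's --- and read off both nesting and monotonicity from it; this is a genuinely nicer decomposition, since it makes explicit \emph{why} both properties hold (births and deaths occur only at vertical positions $1$ and $2$) rather than re-verifying them event by event. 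Your Step 1 is sound: disjoint, time-invariant $y$-ranges of coexisting leaves give the well-defined above/below relation, births at positions $1$--$2$ give $x_a^- \le x_b^-$, and deaths at positions $1$--$2$ give $x_b^+ \le x_a^+$. The caveat is that the "rigorous enumeration of birth/death events" you flag as the remaining obstacle is not an afterthought --- it is the entire body of the paper's inductive proof (note that the construction here has no rotations/borrows; the only events are version copy, split, and merge followed possibly by one more split, all triggered at the leftmost leaf). Once you carry out that enumeration, your Steps 1--2 do go through, so the proposal should be regarded as a correct alternative proof with one routine but essential verification left to complete.
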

\begin{fullenv}
\begin{proof}
  We prove the lemma by induction on the position of $\ell$. For this purpose,
  care must be taken to interpret the rectangles of the nodes currently in
  $T(\ell)$.  As these nodes are still ``alive'' (i.e., they have not been
  version copied yet), the right edges of their rectangles rest on $\ell$, and
  move rightwards along with $\ell$. Let set $\Sigma_1(\ell)$ include the
  bottom edges of the rectangles of all level-1 nodes already spawned so far,
  counting also the ones in $T(\ell)$. When we finish building all the level-1
  nodes, $\Sigma_1(\ell)$ becomes the final $\Sigma_1$. We will show that
  $\Sigma_1(\ell)$ is nesting and monotonic at all times. This is obviously
  true when $\ell$ is at $x = -\infty$.

  Now, suppose that $\Sigma_1(\ell)$ is currently nesting and monotonic. We
  will prove that it remains so after the next update on $T(\ell)$. This is
  trivial if the update does not cause any version copy, i.e., the first leaf
  node $u$ of $T(\ell)$ is not full yet. Consider instead that $u$ is version
  copied to $u'$ when $\ell$ is at $x = \alpha$. At this point, $r(u)$ is
  finalized. Because $r(u)$ is the lowest among the rectangles of the nodes in
  $T(\ell)$, its finalization cannot affect the nesting and monotonicity of
  $\Sigma_1(\ell)$. The version copy also creates $r(u')$. Note that the
  x-intervals of $r(u)$ and $r(u')$ are disjoint, because the former does not
  include $\alpha$, but the latter does. Furthermore, $r(u')$ has the same
  $y$-interval as $r(u)$, and a zero-length $x$-interval $[\alpha, \alpha]$.
  Therefore, if no split/merge follows, $\Sigma_1(\ell)$ is still nesting and
  monotonic.

  Next, consider that $u'$ is split into $u'_1$ and $u'_2$. In this case,
  $r(u')$ disappears from $\Sigma_1(\ell)$, and is replaced by $r(u'_1)$ and
  $r(u'_2)$, which are the bottom two among the rectangles of the nodes in
  $T(\ell)$. Furthermore, both $r(u'_1)$ and $r(u'_2)$ have zero-length
  $x$-intervals. So $\Sigma_1(\ell)$ is still nesting and monotonic.

  It remains to discuss the case where $u'$ needs to merge with its sibling $v$
  in $T(\ell)$. When this happens, the algorithm first version copies $v$ to
  $v'$, which finalizes $r(v)$. The $x$-interval of $r(v)$ must contain that of
  $r(u)$, which is consistent with nesting and monotonicity because $r(v)$ is
  above $r(u)$. The merge of $u'$ and $v'$ creates a node $z$, such that $r(z)$
  has a zero-length $x$-interval. Note that $r(z)$ is currently the lowest of
  the rectangles of the nodes in $T(\ell)$. So $\Sigma_1(\ell)$ remains nesting
  and monotonic.

  Finally, $z$ may still need to be split one more time, but this case can be
  analyzed in the same way as the split scenario mentioned earlier. We thus
  conclude the proof.
\end{proof}
\end{fullenv}

Our algorithm (for building the leaf nodes) writes the left and right endpoints
of the segments in $\Sigma_1$ in non-descending order of their $x$-coordinates
(breaking ties by favoring lower endpoints). This, together with
Lemma~\ref{lmm:topopen-level1}, permits us to create the level-$1$ nodes using
the same algorithm in $\bigO(n/B^2)$ I/Os (as $|\Sigma_1| = \bigO(n/B)$). We
repeat the above process to construct the nodes of higher levels. The cost
decreases by a factor of $B$ each level up. The overall construction cost is
therefore $\bigO(n/B)$. \confcmt{Leaving the other details to the full version,
we now conclude with the first main result:}

\begin{theorem} \label{thm:topopen-main}
  There is an indivisible linear-size structure on $n$ points in $\real^2$, such
  that top-open range skyline queries can be answered in $\bigO(\log_B n +
  k/B)$ I/Os, where $k$ is the number of reported points. If all points have
  been sorted by $x$-coordinates, the structure can be built in linear I/Os.
  The query cost is optimal (even without assuming indivisibility).
\end{theorem}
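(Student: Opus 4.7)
The plan is to assemble the components developed in Sections \ref{sec:topopen-reduce}--\ref{sec:topopen-pers} for the upper bound, and then give a separate reduction argument for the lower bound. The data structure consists of two parts: a B-tree on the $x$-coordinates of $P$ augmented with subtree $y$-maxima so as to support 1D range-max queries, and the PPB-tree on $\Sigma(P)$ built in Section \ref{sec:topopen-pers}. Both use $\bigO(n/B)$ blocks (the second because $|\Sigma(P)| = n$), so space is linear. For a query rectangle $Q = [\alpha_1, \alpha_2] \times [\beta, \infty[$, the algorithm of Section \ref{sec:topopen-reduce} spends $\bigO(\log_B n)$ I/Os to obtain $\beta'$ (or detect $P \intr Q = \emptyset$) via the range-max component, then $\bigO(\log_B n + k/B)$ I/Os on the PPB-tree to retrieve the segments crossing $\alpha_2 \times [\beta, \beta']$; correctness is Lemma \ref{lmm:topopen-correct}.

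For the SABE property, assume $P$ is given sorted by $x$. The range-max B-tree is bulkloaded bottom-up from this stream in $\bigO(n/B)$ I/Os. The stack sweep of Section \ref{sec:topopen-genseg} produces $\Sigma(P)$ in $\bigO(n/B)$ I/Os and, as a by-product, lists the right endpoints of $\Sigma(P)$ in non-descending order of $x$; merging this with the list of left endpoints (which is just $P$ itself) yields the $x$-sorted combined endpoint stream in linear I/Os. Feeding this stream into the bottom-up procedure of Section \ref{sec:topopen-pers}, the leaf level of the PPB-tree is constructed in $\bigO(n/B)$ I/Os by exploiting Lemma \ref{lmm:topopen-properties}, and each higher level $i$ in $\bigO(n/B^{i+1})$ I/Os by Lemma \ref{lmm:topopen-level1}, so the geometric sum remains $\bigO(n/B)$.

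For the lower bound, the plan is to reduce 1D predecessor search to the contour-query specialization of a top-open query (the sub-case $\alpha_1 = -\infty$, $\beta = -\infty$). Given a sorted set $\{s_1 < s_2 < \ldots < s_n\}$, place a point $(s_i, i)$ in $P$ for each $i$. Because the $y$-coordinates are strictly increasing with $x$, the skyline of $P \intr (]-\infty, \alpha] \times ]-\infty, \infty[)$ consists of the single point $(s_j, j)$ where $s_j$ is the predecessor of $\alpha$ in $\{s_i\}$. Hence any linear-space structure answering top-open queries in $T(n) + \bigO(k/B)$ I/Os solves predecessor search on $n$ elements in $T(n)$ I/Os under $\bigO(n/B)$ space, which forces $T(n) = \Omega(\log_B n)$ by the well-known comparison lower bound for predecessor search; the $\Omega(k/B)$ term is the trivial output-size bound. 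The reduction never inspects coordinate representations, so indivisibility is not required.

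The upper-bound side is essentially a synthesis of the preceding subsections and poses no obstacles beyond bookkeeping. The one subtle point is the lower-bound reduction: one must choose the auxiliary $y$-coordinates so that the answer surfaces as a single skyline point (hence the use of $(s_i, i)$), which prevents the output-size term from absorbing the $\log_B n$ cost.
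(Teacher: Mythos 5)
Your proposal follows essentially the same route as the paper: the upper bound is the assembly of the reduction to segment intersection, the linear-I/O stack sweep producing $\Sigma(P)$, and the bottom-up SABE construction of the PPB-tree; the lower bound is a reduction from predecessor search arranged so that the converted top-open query returns exactly one skyline point (the paper cites a known reduction with this property, while you make the construction $(s_i, i)$ explicit, which is fine and arguably cleaner).

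The one place your argument falls short of the full claim is the parenthetical ``even without assuming indivisibility.'' You invoke ``the well-known comparison lower bound for predecessor search,'' but a comparison lower bound only rules out comparison-based (indivisible) structures, so it cannot certify optimality over divisible structures no matter how representation-oblivious the reduction itself is; the model restriction enters through the predecessor bound, not through the reduction. To get the unconditional statement you must instead appeal to the external-memory/cell-probe predecessor lower bound of P\u{a}tra\c{s}cu and Thorup, which is what the paper does: any linear-size structure, indivisible or not, needs $\Omega(\log_B n)$ I/Os for predecessor search in the worst case. With that substitution your lower-bound argument is complete.
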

\begin{fullenv}
\begin{proof}
  We focus on the query optimality because the rest of the theorem follows from
  our earlier discussion directly.

  The term $k/B$ is clearly indispensable. The term $\bigO(\log_B n)$, on the
  other hand, is also compulsory due to a reduction from predecessor search.
  First, it is well-known (see, e.g., \cite{BKMT97}) that predecessor search
  can be reduced to top-open range reporting (note: {\em not} top-open range
  skyline), such that if a linear-size structure can answer a top-open range
  query in $f(n, B) + \bigO(k/B)$ time, the same structure also solves a
  predecessor query in $f(n, B)$ time. Interestingly, given a predecessor
  query, the converted top-open range query always returns only one point.
  Hence, the query can as well be interpreted as a top-open range skyline
  query. This indicates that the same reduction also works from predecessor
  search to top-open range skyline. Finally, any linear-size structure must
  incur $\Omega(\log_B n)$ I/Os answering a predecessor query in the worst case
  \cite{PT06} (even without the indivisibility assumption). It thus follows that $\Omega(\log_B n)$ also lower bounds the
  cost of a top-open range skyline query.
\end{proof}
\end{fullenv}

%%%%%%%%%%%%%%%%%%%%%%%%%%%%%%%%%%%%%%%%%%%%%%%%%%%%%%%%%%%%%%%%%%%%%%%%%%%%%%%
\section{Divisible Top-Open Structure} \label{sec:div}
%%%%%%%%%%%%%%%%%%%%%%%%%%%%%%%%%%%%%%%%%%%%%%%%%%%%%%%%%%%%%%%%%%%%%%%%%%%%%%%

The structure of the previous section obeys the indivisibility assumption. This section
eliminates the assumption, and unleashes the power endowed by bit manipulation.
As we will see, when the universe is small, it admits linear-size structures
with lower query cost.

In Section \ref{sec:div-ray}, we study a different problem called ray-dragging. Then, in Section \ref{sec:div-topopen}, our ray-dragging structure is deployed to develop a ``few-point structure'' for answering top-open queries on a small point set. Finally, in Section \ref{sec:div-final}, we combine our few-point structure with an existing structure \cite{BT11} to obtain the final optimal top-open structure.

%%%%%%%%%%%%%%%%%%%%%%%%%%%%%%%%%%%%%%%%%%%%%%%%%%%%%%%%%%%%%%%%%%%%%%%%%%%%%%%
\subsection{Ray Dragging} \label{sec:div-ray}
%%%%%%%%%%%%%%%%%%%%%%%%%%%%%%%%%%%%%%%%%%%%%%%%%%%%%%%%%%%%%%%%%%%%%%%%%%%%%%%

In the {\em ray dragging problem}, the input is a set $S$ of $m$ points in $[U]^2$ where $U \ge m$ is an integer. Given a vertical ray $\rho = \alpha \times [\beta, U]$ where $\alpha, \beta \in [U]$, a ray dragging query reports the first point in $S$ to be hit by $\rho$ when $\rho$ moves left. The rest of the subsection serves as the proof for:

\begin{lemma} \label{lmm:div-ray}
  For $m = (B \log U)^{\bigO(1)}$, we can store $S$ in a structure of size
  $\bigO(1 + m/B)$ that can answer ray dragging queries in $\bigO(1)$ I/Os.
\end{lemma}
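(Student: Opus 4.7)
The query returns the point $p \in S$ with the largest $x_p$ satisfying $x_p \le \alpha$ and $y_p \ge \beta$. My plan is to reduce it to a persistent predecessor problem and then to exploit the smallness of $m$ together with word-level parallelism to reach constant query cost.

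\emph{Reduction and base structures.} I would sort the $m$ points in decreasing order of $y$-coordinate as $q_1, \ldots, q_m$, and write $X_j = x_{q_j}$, $Y_j = y_{q_j}$. A query $(\alpha, \beta)$ is then resolved in two phases: (i) find the largest $j$ with $Y_j \ge \beta$, which is a predecessor search on the fixed sorted array $(Y_1, \ldots, Y_m)$; (ii) find the largest $X_i \le \alpha$ subject to $i \le j$, which is a persistent predecessor query over the prefixes of $(X_1, X_2, \ldots)$, and return the point achieving this $X_i$. For phase (i) I would keep a packed sorted array of $Y$'s with a small predecessor index. For phase (ii) I would build a partially persistent B-tree (PPB-tree~\cite{BGOSW96}) on the $X_j$'s with insertion time $j$, which directly supports persistent predecessor in $\bigO(\log_B m)$ I/Os using $\bigO(m/B)$ blocks.

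\emph{Collapsing to constant I/Os.} A vanilla PPB-tree costs $\bigO(\log_B m)$ I/Os, which is not $\bigO(1)$ when $\log U$ is much larger than $B$. To shrink the cost, I would exploit that each coordinate fits in one word of $\Omega(\log U)$ bits and a block holds $B$ such words: with fusion-tree-style packed pivots, a single block can route a query among $(B \log U)^{\Theta(1)}$ children using only free (in-memory) word operations. Picking the fan-out this way, together with the hypothesis $m = (B\log U)^{\bigO(1)}$, keeps the persistent tree's height at a constant, so each of the two phases costs $\bigO(1)$ I/Os while the total structure still fits in $\bigO(1 + m/B)$ blocks. Phase (i) is dispatched by the same packed-predecessor idea applied once to the static $Y$-array.

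\emph{Main obstacle.} The principal difficulty is engineering the packed, persistent, constant-depth structure for phase (ii): version copying must preserve the packed in-block layout, and the fusion-style comparator inside each node must remain correct across the $\bigO(m)$ versions generated as we feed in $X_1, X_2, \ldots, X_m$ one at a time. Once this packed persistent machinery is in place, the two-phase query outlined above immediately yields the claimed $\bigO(1)$ I/Os and $\bigO(1 + m/B)$ blocks of space.
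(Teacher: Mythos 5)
Your reduction is sound: the answer is the point maximizing $x_p$ subject to $x_p\le\alpha$ and $y_p\ge\beta$, and splitting this into a predecessor search on the sorted $Y$-array followed by a predecessor search on the prefix $X_1,\dots,X_j$ is correct. The static phase~(i) is also fine, since a tree with leaf capacity $B$ and packed internal fanout $(B\log U)^{\Theta(1)}$ has constant height under the hypothesis $m=(B\log U)^{\bigO(1)}$ (this is essentially the fusion-tree usage the paper itself relies on). The problem is phase~(ii): the entire difficulty of the lemma is concentrated in the ``packed persistent machinery'' that you explicitly defer. Asserting that a partially persistent B-tree can be given fanout $(B\log U)^{\Theta(1)}$, constant height, $\bigO(1)$-I/O version-$j$ predecessor queries, \emph{and} total size $\bigO(1+m/B)$ blocks is not a routine combination of known results; it is the claim to be proved. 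Two concrete dangers: (a) if the packed fanout $f=(B\log U)^{1/3}$ is used at the leaf level, the persistent structure has $\Theta(m/f)$ nodes, which exceeds $\bigO(m/B)$ whenever $f\ll B$, so you must keep $\Theta(B)$-word leaves and pack only the internal routing; (b) when $\log U> B^2$ the desired fanout exceeds $B$, so a node cannot store its pivots as full words and must store sketches or ranks, and these compressed representations have to be maintained correctly through version copies, strong-version splits, and the version-interval filtering that a persistent node performs at query time. None of this is carried out, so the proof does not go through as written.

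It is worth noting that persistence is avoidable here, and the paper's proof avoids it entirely: it builds a \emph{static} B-tree on the $x$-coordinates with leaf capacity $B$ and internal fanout $b^{1/3}$ (where $b=B\log_2 U$), stores at each internal node the topmost point of each child's subtree in a one-block ``minute structure'' (obtained by rank-reducing those few points so that all of them, with child indices, fit in a single block), and answers a query by two constant-length descents: first locating the lowest path node whose minute structure reports a hit, then re-descending from the indicated child. If you want to salvage your route, the most direct fix is to replace the persistent tree of phase~(ii) by an analogous static decomposition — e.g.\ a constant-height tree over the indices $i$ in which each node stores, per child, enough packed information to decide whether the child's index range intersects $[1,j]$ and contains an $X_i\le\alpha$ — which is essentially a transposed version of the paper's construction.
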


%Our structure is a B-tree of fanout $(B\log_2 U)^{1/3}$ and leaf size $B$ on the $x$-coordinates of the points in $S$. Each internal node will store a secondary structure, which we call a \emph{minute structure}, on the points with the highest $y$-value from its children. The minute structure can answer queries in $\big(1)$ I/Os when built on very few points.

%\extraspacing
\noindent {\bf Minute Structure.} Set $b = B \log_2 U$. We first consider the scenario where $S$ has very few points: $m \le b^{1/3}$. Let us convert $S$ to a set $S'$ of points in an $m \times m$ grid. Specifically, map a point $p \in S$ to $p' \in S'$ such that $x_{p'}$ (resp.\ $y_{p'}$) is the {\em rank} of $x_{p}$ (resp.\ $y_p$) among the $x$- ($y$-) coordinates in $S$.

Given a ray $\rho = \alpha \times [\beta, \infty[$, we instead answer a query in $[m]^2$ using a ray $\rho' = \alpha' \times [\beta', \infty[$, where $\alpha'$ (resp.\ $\beta'$) is the rank of the predecessor of $\alpha$ (resp.\ $\beta$) among the $x$- (resp.\ $y$-) coordinates in $S$. Create a {\em fusion tree} \cite{FW93, LP12} on the $x$- (resp.\ $y$-) coordinates in $S$ so that the predecessor of~$\alpha$ (resp.\ $\beta$) can be found in $\bigO(\log_b m) = \bigO(1)$ I/Os, which is thus also the cost of turning $\rho$ into $\rho'$. The fusion tree uses $\bigO(1 + m/B)$ blocks.

We will ensure that the query with $\rho'$ (in $[m]^2$) returns an id from 1 to
$m$ that uniquely identifies a point $p$ in $S$, if the result is non-empty. To
convert the id into the coordinates of $p$, we store $S$ in an array of
$\bigO(1 + m/B)$ blocks such that any point can be retrieved in one I/O by id.

The benefit of working with $S'$ is that
each coordinate in $[m]^2$ requires fewer bits to represent (than in $[U]^2$),
that is, $\log_2 m$ bits. In particular, we need $3 \log_2 m$ bits in total to
represent a point's $x$-, $y$-coordinates, and id. Since $|S'| = m$, the
storage of the entire $S'$ demands $3m \log m = \bigO(b^{1/3} \log_2 b)$ bits. If $B \geq \log_2 U$, then $b^{1/3} \log_2 b = \bigO((B^2)^{1/3} \log_2 (B^2)) = \bigO(B)$. On the other hand, if $B < \log_2 U$, then $b^{1/3} \log_2 b = \bigO((\log_2^2 U)^{1/3} \log_2 (\log_2^2 U))
= \bigO(\log_2 U)$. In other words, we can always store the entire set $S'$ in 
$O(1)$ blocks. Given a query with $\rho'$, we simply load this block into
memory, and answer the query in memory with no more I/O.

We have completed the description of a structure that uses $\bigO(1 + m/B)$ blocks, and answers queries in constant I/Os when $m \le b^{1/3}$. We refer to it as a {\em minute structure}.

\extraspacing {\bf Proof of Lemma~\ref{lmm:div-ray}.} We store $S$ in a B-tree that indexes the $x$-coordinates of the points in $S$. We set the B-tree's leaf capacity to $B$ and internal fanout to $f = b^{1/3}$. Note that the tree has a constant height.

Given a node $u$ in the tree, define $Y_{max}(u)$ as the highest point whose $x$-coordinate is stored in the subtree of $u$. Now, consider $u$ to be an internal node with child nodes $v_1, ..., v_{f}$. Define $Y^*_{max}(u) = \{Y_{max}(v_i) \mid 1 \le i \le f\}$. We store $Y^*_{max}(u)$ in a minute structure. Also, for each point $p \in Y^*_{max}(u)$, we store an index indicating the child node whose subtree contains the $x$-coordinate of $p$. A child index requires $\log_2 b^{1/3} = \bigO(\log_2 m) = \bigO(\log U)$ bits, which is no more than the length of a coordinate. Hence, we can store the index along with $p$ in the minute structure without increasing its space by more than a constant factor. For a leaf node $z$, define $Y^*_{max}(z)$ to be the set of points whose $x$-coordinates are stored in $z$.

Since there are $\bigO(1 + m/(b^{1/3}B))$ internal nodes and each
minute structure demands $\bigO (1 + b^{1/3}/B)$ space, all the minute structures
occupy $\bigO( (1 + \frac{m}{b^{1/3}B})(\frac{b^{1/3}}{B}+1) )= \bigO(1 + m/B)$ blocks in total. Therefore, the overall structure consumes linear space.

We answer a ray-dragging query with ray $\rho = \alpha \times [\beta, U]$ as follows. First, descend a root-to-leaf path $\pi$ to the leaf node containing the predecessor of $\alpha$ among the $x$-coordinates in~$S$. Let $u$ be the {\em lowest} node on $\pi$ such that $Y^*_{max}(u)$ has a point that can be hit by $\rho$ when $\rho$ moves left. For each node $v \in \pi$, whether $Y^*_{max}(v)$ has such a point can be checked in $\bigO(1)$ I/Os by querying the minute structure over $Y^*_{max}(v)$. Hence, $u$ can be identified in $\bigO(h)$ I/Os where $h$ is the height of the B-tree. If $u$ does not exist, we return an empty result (i.e., $\rho$ does not hit any point no matter how far it moves).

If $u$ exists, let $p$ be the first point in $Y^*_{max}(u)$ hit by $\rho$ when it moves left. Suppose that the $x$-coordinate of $p$ is in the subtree of $v$, where $v$ is a child node of $u$. The query result must be in the subtree of $v$, although it may not necessarily be $p$. To find out, we descend another path from $v$ to a leaf. Specifically, we set $u$ to $v$, and find the first point $p$ in $Y^*_{max}(u)$ ($= Y^*_{max}(v)$) that is hit by $\rho$ when it moves left (notice that $p$ has changed). Now, letting $v$ be the child node of $u$ whose subtree $p$ is from, we repeat the above steps. This continues until $u$ becomes a leaf, in which case the algorithm returns $p$ as the final answer.
The query cost is $\bigO(h) = \bigO(1)$. This completes the proof of Lemma~\ref{lmm:div-ray}. We will refer to the above structure as a {\em ray-drag tree}.

%%%%%%%%%%%%%%%%%%%%%%%%%%%%%%%%%%%%%%%%%%%%%%%%%%%%%%%%%%%%%%%%%%%%%%%%%%%%%%%
\subsection{Top-Open Structure on Few Points} \label{sec:div-topopen}
%%%%%%%%%%%%%%%%%%%%%%%%%%%%%%%%%%%%%%%%%%%%%%%%%%%%%%%%%%%%%%%%%%%%%%%%%%%%%%%

Next, we present a structure for answering top-open queries on small $P$, called henceforth the {\em few-point structure}. Remember that $P$ is a set of $n$ points in $[U]^2$ for some integer $U \ge n$, and a query is a rectangle $Q = [\alpha_1, \alpha_2] \times [\beta, U]$ where $\alpha_1, \alpha_2, \beta \in [U]$.

\begin{lemma} \label{lmm:div-skysmall}
  For $n \le (B \log U)^{\bigO(1)}$, we can store $P$ in a structure of $\bigO(1 + n/B)$ space
  that answers top-open range skyline queries  with output size $k$ in $\bigO(1 + k/B)$
  I/Os.
\end{lemma}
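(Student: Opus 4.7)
The plan is to reduce the top-open query to segment intersection on $\Sigma(P)$ as in Section~\ref{sec:topopen-reduce}, and discharge the resulting problem in $\bigO(1 + k/B)$ I/Os by combining the persistent tree of Section~\ref{sec:topopen-pers} with the divisible machinery of Lemma~\ref{lmm:div-ray}. First I would construct $\Sigma(P)$ in $\bigO(1 + n/B)$ I/Os via the sweep of Section~\ref{sec:topopen-genseg}, obtaining a nested, monotonic set of $n$ horizontal segments. By Lemma~\ref{lmm:topopen-correct}, the skyline of $P \cap Q$ for $Q = [\alpha_1, \alpha_2] \times [\beta, U]$ is exactly the segments of $\Sigma(P)$ crossing the vertical segment $\alpha_2 \times [\beta, \beta']$, where $\beta' = \max\{y_p : p \in P \cap Q\}$. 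To find $\beta'$, I would build a range-max structure analogous to the ray-drag tree of Lemma~\ref{lmm:div-ray}: a constant-height B-tree on $P$'s $x$-coordinates with fanout $b^{1/3}$ and a minute structure at each internal node recording the maximum $y$ of each child's subtree. This answers the $3$-sided query ``topmost point in a given $x$-range'' in $\bigO(1)$ I/Os and $\bigO(1 + n/B)$ space; comparing the answer's $y$ with $\beta$ either yields $\beta'$ or certifies that the skyline is empty.

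For the crossing-segment enumeration I would build the SABE persistent B-tree of Section~\ref{sec:topopen-pers} on $\Sigma(P)$, but with internal fanout $f = b^{1/3}$ and a minute structure (Section~\ref{sec:div-ray}) at each internal node indexing the children's $y$-ranges in the active snapshot. Because $n \le (B\log U)^{\bigO(1)} = b^{\bigO(1)}$ and the fanout is $b^{1/3}$, the tree has $\bigO(1)$ levels; the descent from the root of the snapshot at $x = \alpha_2$ to the leaf containing the first segment with $y \ge \beta$ therefore costs $\bigO(1)$ I/Os, after which a leaf scan reports the $k$ crossing segments in $\bigO(k/B)$ additional I/Os. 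The nesting and monotonic properties (Lemma~\ref{lmm:topopen-properties}) continue to ensure that every SABE edit acts only on the leftmost leaf of the current snapshot, so the enclosing minute structures along that single leftmost path can be rebuilt in $\bigO(1 + f/B) = \bigO(1)$ I/Os per edit, giving $\bigO(1 + n/B)$ total build I/Os. Each internal node occupies $\bigO(1 + f/B)$ blocks and there are $\bigO(n/(fB))$ of them, summing with the $\bigO(n/B)$ leaf blocks to $\bigO(1 + n/B)$ space.

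The principal difficulty I anticipate is fusing the persistent-versioning mechanics of the PPB-tree with the minute-structure machinery of Section~\ref{sec:div-ray}: each version copy, split, or merge must refresh the enclosing minute structure without asymptotic overhead, and the descent at query time must still locate the correct child of the correct snapshot in $\bigO(1)$ I/Os across versions. I would address this by exploiting the fact that only the leftmost root-to-leaf path is ever altered in a snapshot, so each edit localizes to minute-structure refreshes along $\bigO(1)$ levels, keeping the amortized per-edit cost at $\bigO(1)$ I/Os and preserving the overall $\bigO(1 + n/B)$ build and space bounds.
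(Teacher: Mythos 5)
Your reduction to segment intersection, the computation of $\beta'$ by a constant-height range-max tree, and the use of the snapshot B-tree at a single version are all sound, and the leaf-level scan does report exactly the $k$ answer segments once you are standing at the right leaf of the right snapshot. The genuine gap is in how you get to that leaf. You propose to descend the PPB-tree at query time, routing at each internal node through ``a minute structure indexing the children's $y$-ranges in the active snapshot,'' and to handle updates by \emph{refreshing} these minute structures along the leftmost path. But a PPB-tree internal node is shared by many versions, and its set of alive children (hence the routing keys) changes every time its leftmost child is version-copied, split, or merged. A destructively refreshed minute structure answers routing queries only for the \emph{current} version; a query at an arbitrary past version $x=\alpha_2$ (or $x=x_p$) would be misrouted. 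Locality of the edits to the leftmost path does not repair this --- you would need the per-node routing structures themselves to be partially persistent (e.g., one $\bigO(1)$-block snapshot per child change, selected by a version-predecessor search), and then re-verify both the space bound and the $\bigO(1)$-I/O descent. None of this is supplied, and it is exactly the ``principal difficulty'' you flag without resolving.

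The paper avoids the descent entirely, which is why no such machinery is needed. It keeps the PPB-tree of Theorem~\ref{thm:topopen-main} unmodified and adds two things: the ray-dragging structure of Lemma~\ref{lmm:div-ray} on $P$, and, for every point $p$, a precomputed pointer to its \emph{host leaf} --- the leaf of the snapshot $T(\ell)$ at $\ell: x=x_p$ that contains $y_p$. At query time it finds in $\bigO(1)$ I/Os the lowest answer point $p$ (the first point hit by the ray $\alpha_2\times[\beta,U]$ moving left), jumps via the stored pointer to $p$'s host leaf, and walks the sibling-linked leaves upward in $y$; by nesting and monotonicity the left endpoints of the encountered segments decrease, so the scan stops as soon as one falls left of $\alpha_1$. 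This makes $\beta'$ unnecessary (the paper's Observation 2 is precisely that $S$ can be retrieved without knowing $\beta'$), and Observation 1 justifies entering at the snapshot $x=x_p$ rather than $x=\alpha_2$. If you want to keep your top-down route, you must either make the per-node minute structures versioned or find an argument that, for this nesting/monotonic update pattern, a node's alive child set at any version can be recovered from $\bigO(1)$ static blocks; as written, the proof does not go through.
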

\begin{proof}
  Consider a query with $Q = [\alpha_1, \alpha_2] \times [\beta, U]$. Let~$p$
  be the first point hit by the ray $\rho = \alpha_2 \times [\beta, U]$ when
  $\rho$ moves left. If $p$ does not exist or is out of $Q$ (i.e., $x_p <
  \alpha_1$), the top-open query has an empty result. Otherwise, $p$ must be
  the lowest point in the skyline of $P \intr Q$.

  The subsequent discussion focuses on the scenario where $p \in Q$. We index
  $\Sigma(P)$ with a PPB-tree $T$, as in Theorem~\ref{thm:topopen-main}. Recall
  that the top-open query can be solved by retrieving the set $S$ of segments
  in $\Sigma(P)$ intersecting the vertical segment $\psi = \alpha_2 \times
  [\beta, \beta']$, where $\beta'$ is the highest $y$-coordinate of the points
  in $P \intr Q$. To do so in $\bigO(1 + k/B)$ I/Os, we utilize the next two
  observations. \confcmt{(see the full version for their proofs)}:
  \begin{observation}
    {\em All segments of $S$ intersect $\psi' = x_p \times [y_p, \beta']$.}
%     $\smash{\psi'} \smash{=} \smash{x_p} \smash{\times} \smash{[y_p, \beta']}$.}
  \end{observation}

  \fullcmt{\underline{\em Proof:} $\sigma(p)$ is the lowest among the segments of $\Sigma(P)$
    intersecting $\psi$ (recall that $\sigma(p)$ is the segment in $\Sigma(P)$
    converted from $p$). Hence, a segment of $\Sigma(P)$ intersects $\psi$ if
    and only if it intersects $\alpha_2 \times [y_p, \beta']$. On the other
    hand, a segment of $\Sigma(P)$ intersects $\alpha_2 \times [y_p, \beta']$
    if and only if it intersects $\psi'$. To explain, let $s \neq \sigma(p)$ be
    a segment in $\Sigma(P)$ intersecting $\alpha_2 \times [y_p, \beta']$. As
    $s$ is higher than $\sigma(p)$, the $x$-interval of $s$ must contain that
    of $\sigma(p)$ (due to the nesting and monotonicity properties of
    $\Sigma(P)$), implying that $s$ intersects $\psi'$. Similarly, one can also
    show that if $s$ intersects $\psi'$, it also intersects $\alpha_2 \times
    [y_p, \beta']$.
  }

  \begin{observation}
    {\em Let $T(\ell)$ be the snapshot B-tree in $T$ when $\ell$ is at the
    position $x = x_p$. Once we have obtained the leaf node in $T(\ell)$
    containing $y_p$, we can retrieve $S$ in $\bigO(1 + k/B)$ I/Os without
    knowing the value of $\beta'$.}
  \end{observation}

  \fullcmt{
    \underline{\em Proof:} Each leaf node in $T(\ell)$ has a sibling pointer to
    its succeeding leaf node\footnote{Due to the nesting and monotonicity
    properties, every leaf node $u$ in the PPB-tree $T$ needs only one sibling
  pointer during the entire period when $u$ is alive.}. Hence, starting from
  the leaf node storing $y_p$, we can visit the leaves of $T(\ell)$ in
  ascending order of the $y$-coordinates they contain. The effect is to report
  in the bottom-up order the segments of $\Sigma(P)$ that intersect $x_p \times
  [y_p, U]$. By the nesting and monotonicity properties, the left endpoint of a
  segment reported latter has a smaller $x$-coordinate. We stop as soon as
  reaching a segment whose left endpoint falls out of $Q$. The cost is $\bigO(1
  + k/B)$ because $\Omega(B)$ segments are reported in each accessed leaf,
  except possibly the last one. \vspace{2mm}
  }

  We now elaborate on the structure of Lemma~\ref{lmm:div-skysmall}.
  Besides~$T$, also create a structure of Lemma~\ref{lmm:div-ray} on $P$.
  Moreover, for every point $p \in P$, keep a pointer to the leaf node of $T$
  that (i) is in the snapshot B-tree $T(\ell)$ when $\ell$ is at $x = x_p$, and
  (ii) contains $y_p$. Call the leaf node the {\em host leaf} of $p$. Store the
  pointers in an array of size $n$ to permit retrieving the pointer of any
  point in one I/O.

  The query algorithm should have become straightforward from the above two
  observations. We first find in $O(1)$ I/Os the first point $p$ hit by $\rho$
  when $\rho$ moves left. Then, using~$p$, we jump to the host leaf of $p$.
  Next, by Observation 2, we retrieve $S$ in $O(1 + k/B)$ I/Os. The total query
  cost is $\bigO(1 + k/B)$.
\end{proof}

%%%%%%%%%%%%%%%%%%%%%%%%%%%%%%%%%%%%%%%%%%%%%%%%%%%%%%%%%%%%%%%%%%%%%%%%%%%%%%%
\subsection{Final Top-Open Structure} \label{sec:div-final}
%%%%%%%%%%%%%%%%%%%%%%%%%%%%%%%%%%%%%%%%%%%%%%%%%%%%%%%%%%%%%%%%%%%%%%%%%%%%%%%

We are ready to describe our top-open structure that achieves sub-logarithmic
query I/Os for arbitrary $n$. For this purpose, we externalize an
internal-memory structure of \cite{BT11}. The structure of \cite{BT11}, however,
has logarithmic query overhead, which we improve with new ideas based on the
few-point structure in Lemma~\ref{lmm:div-skysmall}. \confcmt{Delegating the
details to the full version, we now state our main results in rank space and
universe $[U]^2$:}

\begin{theorem} \label{thm:div-rankmain}
  There is a linear-size structure on $n$ points in rank space such that
  top-open range skyline queries can be answered optimally in $\bigO(1 + k/B)$
  I/Os, where $k$ is the number of reported points.
\end{theorem}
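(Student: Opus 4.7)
The plan is to externalize the internal-memory top-open structure of Brodal-Tsakalidis \cite{BT11}, using the few-point structure of Lemma~\ref{lmm:div-skysmall} at the bottom of the hierarchy and exploiting the $O(1)$-time predecessor search available in rank space to eliminate any remaining logarithmic factors.

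First I would sort the $n$ points by $x$-coordinate and partition them into $n/m$ consecutive slabs of size $m = (B\log n)^{c}$ for a suitable constant $c$. Within each slab I install the few-point structure of Lemma~\ref{lmm:div-skysmall}; since the slab has at most $m$ points and the universe is $[O(n)]^2$ with $\log U = O(\log n)$, the hypothesis $m \le (B\log U)^{O(1)}$ is satisfied and a top-open query restricted to a single slab costs $O(1+k_i/B)$ I/Os with linear space. I then build a top-level structure over the $n/m$ slab representatives $(s, M_s)$, where $M_s$ is the maximum $y$-coordinate in slab $s$, so that given a global query $[\alpha_1,\alpha_2] \times [\beta, \infty[$: (i) the (at most two) boundary slabs are located in $O(1)$ I/Os via a rank-space $x$-array and answered directly by their own few-point structures, and (ii) the \emph{contributing} interior slabs---those whose $M_s$ exceeds the running skyline threshold as we sweep right-to-left through $[i+1..j-1]$---are enumerated by the top-level structure, after which the corresponding skyline points are emitted from the respective few-point structures. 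If $n/m$ still exceeds the applicability threshold of Lemma~\ref{lmm:div-skysmall}, the same construction recurses on the slab representatives; each level shrinks the point count by a factor of $m$ and contributes only $O(n/(mB))$ blocks, so the total space remains $O(n/B)$ by a geometric sum.

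The main obstacle, and where Lemma~\ref{lmm:div-skysmall} is essential, is achieving $O(1+k/B)$ total query cost rather than $O(k)$: a naive charge of $O(1)$ I/O per visited slab accumulates to $\Theta(k)$ because a contributing slab may donate as few as one skyline point. To avoid this, I would arrange the per-slab skylines in a block-aligned concatenated chain that can be streamed at $\Omega(B)$ points per I/O, with sibling pointers linking the end of one slab's chain to the start of the next; once the rightmost contributing slab is located, every subsequent skyline point---whether within a slab or crossing slab boundaries---is read contiguously while the running threshold is updated on the fly, so points below it are discarded without extra I/O. Rank-space $O(1)$ predecessor search ensures that the top-level navigation to locate the first contributing slab is also $O(1)$, independent of the recursion depth. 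Combining the $O(1)$ boundary-slab and top-level costs with the $O(k/B)$ streaming output yields the claimed $O(1+k/B)$ I/O bound, matching the trivial $\Omega(k/B)$ lower bound and therefore optimal.
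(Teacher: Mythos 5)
Your high-level plan (slabs of polylogarithmic-in-$B$ size handled by Lemma~\ref{lmm:div-skysmall}, a recursive structure on slab maxima, rank-space arithmetic for $\bigO(1)$ navigation) matches the spirit of the paper's construction, but the step you yourself flag as the main obstacle is not actually resolved by your fix. The points reported from a contributing interior slab form a \emph{prefix} of that slab's own skyline (those above the running threshold), after which one must skip the remainder of that slab's skyline to reach the next contributing slab. In your ``block-aligned concatenated chain'' you therefore face a dichotomy: either you jump from chain to chain, paying up to one I/O per contributing slab --- which is $\Theta(k)$ when each contributing slab donates a single point --- or you stream through the skipped suffixes, paying I/Os proportional to the \emph{total} length of all slab skylines in the range, which can be $\Theta(n/B)$ while $k$ is only $n/m$. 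Saying the sub-threshold points are ``discarded without extra I/O'' assumes they are read for free, which they are not. The paper's missing ingredient is the $\high(u)$/$\MAX(u)$ device: each node of a complete binary tree over the chunks stores only its top-$B$ skyline points $\high(u)$, and the skyline of $\bigcup_v \high(v)$ over the relevant sibling sets along each leaf-to-ancestor path is \emph{precomputed} ($\MAX$, $\lmax$, $\rmax$). Then a subtree contributing fewer than $B$ points is answered entirely by a consecutive piece of one precomputed list (no per-subtree I/O at all), and a subtree contributing $\ge B$ points can afford the $\bigO(1)$ I/Os of an individual visit, charged to its own output. This threshold-at-$B$ argument, not chain concatenation, is what turns $\Theta(k)$ into $\bigO(1+k/B)$.

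A second, smaller gap: your recursion on slab representatives has depth $\log_m n = \Theta\left(\frac{\log n}{\log(B\log n)}\right)$, which is not $\bigO(1)$ for general $B$ (e.g.\ constant $B$), so paying even $\bigO(1)$ I/Os per level for navigation already breaks the bound. The paper avoids any per-level walk: the two boundary chunks and their LCA in the complete binary tree are computed arithmetically in $\bigO(1)$ I/Os, and the entire root-to-leaf paths are summarized by the single precomputed lists $\lmax(z,u)$ and $\rmax(z,u)$, whose total size over all leaf--ancestor pairs is kept linear precisely by the rank-space choice of chunk length $B\log^2 U$.
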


\begin{fullenv}
\noindent {\bf Structure.} Let $U = \bigO(n)$ be the length of each dimension. We assume, without loss of generality, that 
$\lambda = B \log^2 U$ is an integer. Divide the $x$-dimension
of $[U]^2$ into $\lceil U/\lambda \rceil$ consecutive intervals of length $\lambda$ each, except possibly the last interval.
Call each interval a {\em chunk}. Assign each point $p \in P$ to the unique
chunk covering $x_p$. Note that some chunks may be empty.

Create a complete binary search tree $\T$ on the chunks. Let $u$ be a node of
$\T$. We say that a point $p$ is ``in the subtree of $u$'' if it is assigned to
a chunk in the subtree of $u$. Denote by $P(u)$ the set of points in the
subtree of $u$. Define $\high(u)$ as the set of $B$ highest points in the
skyline of $P(u)$; if the skyline of $P(u)$ has less than $B$ points,
$\high(u)$ includes all of them. Furthermore, if $|\high(u)| = B$, let
$\highend(u)$ be the lowest point in $\high(u)$; otherwise, $\highend(u) =$
\nil. We store $\high(u)$ along with $u$.

Let $u$ be any internal node such that $p = \highend(u)$ is not \nil. Denote by $\pi(u)$ the path from the leaf (a.k.a.\ chunk) $z$ of $\T$ covering $x_p$ to the child of $u$ that is an ancestor of $z$. Define $\Pi_\gamma(u)$ as the set of right siblings\footnote{If a node is the right child of its parent, it has no right sibling. Similarly, if a node is a left child, it has no left sibling.} of the nodes in $\pi(u)$. Let $\MAX(u)$ be the skyline of the point set $\bigcup_{v \in \Pi_\gamma(u)} \high(v).$ We store $\MAX(u)$ along with $u$, and order the points in $\MAX(u)$ by $x$-coordinate (hence, also by $y$-coordinate). In Figure~\ref{fig:div-max}, for example, $MAX(u)$ is the skyline of $\bigcup_{i=1}^4 \high(v_i)$.

The above completes the externalization of the structure in \cite{BT11}. Next,
we describe new mechanisms for achieving query cost $\bigO(1 + k/B)$. First, we
index the points in each chunk $z$ with a few-point structure of
Lemma~\ref{lmm:div-skysmall}. Moreover, for every $z$ and every proper ancestor $u$ of $z$, we store two sets $\lmax(z, u)$ and $\rmax(z, u)$ defined as follows. Let
$\pi(z, u)$ be the path from $z$ to the child of $u$ that is an ancestor of
$z$. Define $\Pi_\ell(z, u)$ as the set of left siblings of the nodes on
$\pi(z, u)$, and conversely, $\Pi_\gamma(z, u)$ the set of right siblings of
those nodes. Then:
\begin{itemize}
  \item $\lmax(z, u)$ is the skyline of $\bigcup_{v \in \Pi_\ell(z, u)}
      \high(v)$
  \item $\rmax(z, u)$ is the skyline of $\bigcup_{v \in \Pi_\gamma(z, u)}
    \high(v)$.
\end{itemize}
For instance, in Figure~\ref{fig:div-max}, $RMAX(z, u)$ is the skyline of $\bigcup_{i=1}^4 \high(v_i)$, whereas $LMAX(z, u)$ is the skyline of $\high(v_5) \cup \high(v_6)$. The points of both $\lmax(z, u)$ and $\rmax(z, u)$ are sorted by $x$-coordinate.

\extraspacing {\bf Space.} Let $h = \bigO(\log U)$ be the height of $\T$. We
analyze first the space consumed by the $\bigO(U/\lambda)$ internal nodes $u$
of $\T$. Clearly, $\high(u)$ fits in $\bigO(1)$ blocks, whereas $\MAX(u)$ occupies
$\bigO(h)$ blocks. All the internal nodes thus demand $\bigO(h \cdot (U /
\lambda)) = \bigO(U/B) = \bigO(n/B)$ blocks in total.

Now, let us focus on the $\bigO(U/\lambda)$ leaf nodes $z$ of $\T$. As each
few-point structure uses linear space, all the few-point structures demand
$\bigO(U/\lambda + n/B) = \bigO(n/B)$ blocks altogether. Regarding $\lmax(z,
u)$, $z$ has at most $h$ proper ancestors $u$, while each $\lmax(z, u)$
requires $\bigO(h)$ blocks. Hence, the $\lmax(z, u)$ of all $z$ and $u$ occupy
$\bigO((U/\lambda) \cdot h^2) = \bigO(n/B)$ blocks in total. The case with
$\rmax(z, u)$ is symmetric. The overall space consumption is therefore linear.

\extraspacing {\bf Query.} We need the following fact:

\begin{lemma} \label{lmm:div-half}
  Given a node $u$ in $\T$ and a value $\beta$, let $P(u, \beta)$ be the set of
  points in $P(u)$ with $y$-coordinates greater than $\beta$. We can report the skyline of
  $P(u, \beta)$ in $\bigO(1 + k/B)$ I/Os where $k$ is the number of points reported.
\end{lemma}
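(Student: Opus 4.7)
The plan is to give a recursive procedure built around $\high(u)$, $\MAX(u)$, and the few-point structure of Lemma~\ref{lmm:div-skysmall} on the chunk $z_j$ containing $x_p$, where $p=\highend(u)$. Write $\gamma_u=y_p$ when $\highend(u)\neq\nil$, and $\gamma_u=+\infty$ otherwise, and split on whether $\beta\geq\gamma_u$. In the easy case $\beta\geq\gamma_u$ (or $\highend(u)=\nil$), any skyline point of $P(u)$ with $y$-coordinate above $\beta$ has $y>\gamma_u$, or the full skyline of $P(u)$ already has fewer than $B$ points; either way the answer is a subset of $\high(u)$, and scanning $\high(u)$ (which occupies $\bigO(1)$ blocks) while filtering by $\beta$ finishes in $\bigO(1+k/B)$ I/Os.

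When $\beta<\gamma_u$, I first emit the entirety of $\high(u)$ in $\bigO(1)$ I/Os since all of it is in the answer, and then enumerate the remaining skyline points, which lie in the strip $\beta<y<\gamma_u$ and therefore strictly to the right of $x_p$. These come from either (i) $z_j\cap\{x>x_p\}$, obtained by a single top-open few-point query on $z_j$ with window $[x_p+1,U]\times(\beta,U]$; or (ii) some $P(v)$ for $v\in\Pi_\gamma(u)$, approached through $\MAX(u)$, the skyline of $\bigcup_v\high(v)$, which is stored in $x$-ascending (equivalently $y$-descending) order. The two streams, both already sorted in $y$-descending order, are merged by $y$ and thinned in a single sweep by the rolling skyline-dominance rule: a candidate survives iff its $x$ exceeds the largest $x$ emitted so far.

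The technical crux, and the place I expect to spend the most care, is that a skyline point $q$ of $P(u,\beta)$ lying in some $P(v)$ need not belong to $\high(v)$: it can happen that the skyline of $P(v)$ has more than $B$ points and $y_q<\gamma_v$, in which case $q$ is invisible to $\MAX(u)$. I plan to show that precisely when this occurs, $\high(v)$ lies entirely in the skyline of $P(u)$. Any would-be dominator of $\highend(v)$ from $P(u)\setminus P(v)$ has to sit in a sibling strictly to the right of $v$, because points to the left have smaller $x$; such a dominator has $y\geq\gamma_v>y_q$ and $x$ exceeding every $x$-coordinate of $v$, in particular $x_q$, so it would dominate $q$ too, contradicting $q\in$ skyline$(P(u))$. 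Hence the full $\high(v)$ is emitted by the sweep of $\MAX(u)$; whenever my sweep observes that all $B$ points of some $\high(v)$ have been output while still inside the $(\beta,\gamma_u)$ band, I recursively invoke the lemma on $v$ with the same $\beta$, suppress the at-most-$B$ duplicates, and splice the rest back into the main sweep under the same rolling dominance filter.

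For complexity I induct on the subtree size of $u$, with a single chunk (a leaf of $\T$) as the base case, handled directly by its few-point structure. The non-recursive cost is $\bigO(1+k/B)$ I/Os for streaming $\high(u)$, the few-point output, and the relevant prefix of $\MAX(u)$. Each recursion on a child $v$ is charged to the $B$ genuine output points of $\high(v)$ already emitted through $\MAX(u)$, which bounds the number of recursive calls by $\bigO(k/B)$; by induction each recursion contributes $\bigO(1+k_v/B)$ I/Os where $k_v$ is what it actually splices into the final answer, achieved by driving the recursion as a lazy iterator so that it halts once the main sweep's rolling $x$-threshold overtakes the next recursion candidate. Telescoping yields the promised $\bigO(1+k/B)$. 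The most delicate piece of bookkeeping is verifying that this single, shared rolling-dominance filter---simultaneously consuming the $\MAX(u)$ stream, the few-point output, and each lazy recursive stream---is correct, which follows because every source is already pre-sorted compatibly with the global skyline staircase.
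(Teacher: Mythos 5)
Your decomposition is the same as the paper's ($\high(u)$ first; if all of it survives, recurse only into those children $v$ whose entire $\high(v)$ is seen to be in the answer, plus a few-point query on the chunk of $\highend(u)$), and your key structural claim---that a deep answer point in $P(v)$ forces all of $\high(v)$ into the skyline of $P(u)$, which both bounds the number of recursive calls by $\bigO(k/B)$ and pays their $\bigO(1)$ overheads---is exactly the paper's charging argument and is correctly justified. Where you diverge is in how each recursive call is told when to stop: the paper reads the threshold $\beta_i$ for child $v_i$ directly off the staircase of $S$ (the $y$-coordinate of the point of $S$ just to the right of $\highend(v_i)$), so each recursion reports exactly its share of the final answer and runs to completion independently; you instead pass the unrestricted $\beta$ down and truncate on the fly via a lazily interleaved multi-way merge with a rolling $x$-threshold.

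That substitution is where the gaps are. First, the chunk stream: the few-point structure of Lemma~\ref{lmm:div-skysmall} enumerates its skyline \emph{bottom-up} (it ray-drags to the lowest point and walks leaf sibling pointers in ascending $y$), so its output is not ``already sorted in $y$-descending order'' and cannot be lazily truncated from the top. Issued with threshold $\beta$ rather than the paper's $\beta_0$ (the top of $S$), it may report up to $\Theta(B\log^2 U)$ points of $P(z_j)$ that are all dominated by the highest point of $\MAX(u)$, costing $\Theta(\log^2 U)$ I/Os even when $k=\bigO(B)$; you must pre-restrict this query to $y>\beta_0$ as the paper does. Second, the interleaved merge itself: you may have $\Theta(k/B)$ recursion iterators open at once (each with its own nested open iterators), and every switch between streams costs an I/O to reload that stream's state unless you separately prove that the accepted points of each source form a contiguous run of the global staircase---which is true, but is precisely the ordering fact that makes the paper's precomputed $\beta_i$ work and that your write-up defers as ``delicate bookkeeping.'' Without that argument the merge can degrade to $\Omega(1)$ I/Os per reported point. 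Both problems vanish if you adopt the paper's device of extracting the per-source thresholds from $S$ up front, which is why the paper's recursions need no interleaving at all.
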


\begin{proof}
  If $u$ is a leaf, find the skyline of $P(u, \beta)$ by issuing a top-open
  query with search rectangle $[-U, U] \times [\beta, U]$ on the few-point
  structure of $u$. The query time is $\bigO(1 + k/B)$ by
  Lemma~\ref{lmm:div-skysmall}.

  The rest of the proof adapts an argument in \cite{BT11} to external memory.
  Given an internal node $u$, we find the skyline of $P(u, \beta)$ as follows.
  Load $\high(u)$ into memory, and report the points therein with
  $y$-coordinates above $\beta$. If there are less than $B$ such points, we
  have found the entire skyline of $P(u, \beta)$.

  Suppose instead that the entire $\high(u)$ is reported. Let $p =
  \highend(u)$. It suffices to consider the points that
	\begin{itemize}
		\item[(i)] are in the subtrees of the nodes in $\Pi_\gamma(u)$, or

		\item[(ii)] share the same chunk as $p$, but are to the right of $p$.
	\end{itemize}
  Any other point of $P(u)$ must be either in $\high(u)$ -- which is already
  found -- or dominated by $p$.

  To find the skyline points in (i), first report the set $S$ of points in
  $\MAX(u)$ whose $y$-coordinates are above $\beta$. Then, we explore the
  subtrees of certain nodes in $\Pi_\gamma(u)$. Specifically, let $v_1, ...,
  v_c$ be the nodes in $\Pi_\gamma(u)$ for some integer $c$. For each $i \in
  [1, c]$, define $S_i = \high(v_i) \intr S$; if $|S_i| < B$,\footnote{This can
  be checked efficiently because the points of $\high(v_i)$ are consecutive in
  $MAX(u)$.} the subtree of $v_i$ can be pruned from further
  consideration\footnote{This means that either $|\high(v_i)| < B$, or
  $\highend(v_i)$ is dominated by a point in $S$. In both cases, we have found
  all the result points from the subtree of $v_i$.}. Otherwise (i.e., $|S_i| =
  B$), we recursively report the skyline of $P(v_i, \beta_i)$, where $\beta_i$
  is the $y$-coordinate of the point just to the right of $\highend(v_i)$ in the
  staircase of $S$; if no such point exists, $\beta_i = \beta$.

  The skyline points in (ii) can be retrieved with a top-open query on the
  few-point structure of the chunk $z$ covering $x_p$, where $z$ can be
  identified in constant I/Os by dividing $x_p$ by $\lambda$. Specifically, if
  $S \neq \emptyset$, define $\beta_0$ to be the $y$-coordinate of the highest
  point in $S$; otherwise, define $\beta_0 = \beta$. The top-open query for $z$
  has rectangle $]x_p, U] \times ]\beta_0, U]$.

  Now we analyze the query cost. If less than $B$ points of $\high(u)$ are
  reported, the algorithm finishes with $\bigO(1)$ I/Os. Otherwise, the scan of
  $\MAX(u)$ takes $\bigO(1 + |S|/B)$ I/Os. If $|S| < B$, we charge the $\bigO(1
  + |S|/B) = \bigO(1)$ cost on the $B$ points in $\high(u)$; otherwise, we
  charge the $\bigO(|S|/B)$ cost on the points of $S$. The top-open query on
  the few-point structure of $z$ requires $\bigO(1 + k'/B)$ I/Os if it returns
  $k'$ points. If $k' < B$, we charge the $\bigO(1 + k'/B) = \bigO(1)$ cost on
  the points of $\high(u)$; otherwise, charge the $\bigO(k'/B)$ I/Os on the
  $k'$ points.

  It remains to discuss the I/Os spent on $v_1, ..., v_c$. For each $i \in [1,
  c]$, if $|S_i| < B$, there is no cost on $v_i$. Otherwise, we charge on the
  $B$ points of $S_i$ the $\bigO(1)$ I/Os spent on reading $\high(v_i)$ before
  recursively reporting the skyline of $P(v_i, \beta_i)$. The rest of the I/Os
  performed by the recursion are charged in the same manner as explained above.
  In this way, every reported point is charged $\bigO(1/B)$ I/Os overall. The
  total query time is therefore $\bigO(1 + k/B)$.
\end{proof}

To answer a top-open query with $Q = [\alpha_1, \alpha_2] \times [\beta, U]$,
where $\alpha_1, \alpha_2, \beta \in [U]$, we first identify the chunks $z_1$
and $z_2$ that cover $\alpha_1$ and $\alpha_2$, respectively. This takes
$\bigO(1)$ I/Os by dividing $\alpha_1$ and $\alpha_2$ by the chunk size
$\lambda$, respectively. If $z_1 = z_2$, the query can be solved by searching
the few-point structure of $z_1$ in $\bigO(1 + k/B)$ I/Os
(Lemma~\ref{lmm:div-skysmall}). The subsequent discussion considers $z_1 \neq
z_2$.

\begin{figure}
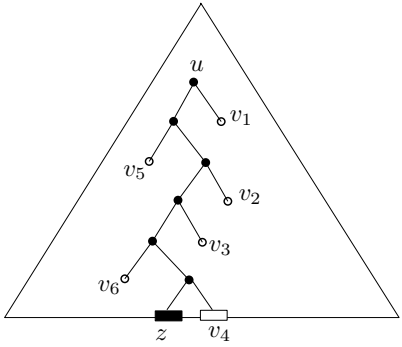

	\centering
    \yufeigraphics{height=45mm}{max-u}
	\figcapup
	\caption{\boldmath Illustration of $MAX(u)$, $LMAX(z, u)$, and $RMAX(z, u)$}
	\label{fig:div-max}
	\figcapdown
\end{figure}

Let $u$ be the lowest common ancestor of $z_1$ and $z_2$ in $\T$. As $\T$ is a
{\em complete} binary tree, $u$ can be determined in constant I/Os. The rest of
the algorithm proceeds in 4 steps:

\begin{enumerate}
  \item Use the few-point structure of $z_2$ to report the skyline of $P(z_2)
    \intr Q$. Let $S(z_2)$ be the set of points retrieved, and $\beta^*$ the
    $y$-coordinate of the highest point in $S(z_2)$. If $S(z_2) = \emptyset$,
    $\beta^* = \beta$.

  \item Report the set $S_2$ of points in $\lmax(z_2, u)$ whose $y$-coordinates
    are above $\beta^*$. Denote by $v_1, ..., v_c$ the nodes of $\Pi_\ell(z_2,
    u)$ for some integer $c$. For each $i \in [1, c]$, check whether
    $|\high(v_i) \intr S_2| = B$. If not, the subtree of $v_i$ can be
    eliminated. Otherwise, apply Lemma~\ref{lmm:div-half} to retrieve the
    skyline of $P(v_i, \beta_i)$, where $\beta_i$ is the $y$-coordinate of the
    point just to the right of $\highend(v_i)$ in the staircase of $S_2$; if no
    such point exists, $\beta_i = \beta^*$. If $S_2 \neq \emptyset$, update
    $\beta^*$ to be the $y$-coordinate of the highest point in $S_2$.

  \item Find the set $S_1$ of points in $\rmax(z_1, u)$ whose $y$-coordinates
    are above $\beta^*$. Denote by $v'_1, ..., v'_{c'}$ the nodes of
    $\Pi_\gamma(z_1, u)$ for some integer $c'$. For each $i \in [1, c']$, if
    $|\high(v'_i) \intr S_1| = B$, apply Lemma~\ref{lmm:div-half} to retrieve
    the skyline of $P(v'_i, \beta'_i)$, where $\beta'_i$ is the $y$-coordinate
    of the point just to the right of $\highend(v'_i)$ in the staircase of
    $S_1$ (if no such point exists, $\beta'_i = \beta^*$). If $S_1 \neq
    \emptyset$, set $\beta^*$ to the $y$-coordinate of the highest point in
    $S_1$.

  \item Fetch the skyline of $P(z_1) \intr [\alpha_1, \alpha_2] \times
    [\beta^*, U]$ from the few-point structure of $z_1$.
\end{enumerate}
In the example of Figure~\ref{fig:div-qry}, $c = 4$ and $c' = 3$; the algorithm
first obtains the result points from $z_2$, then from the subtrees of $v_1,
..., v_4$, next from the subtrees of $v_1', ..., v_3'$, and finally from $z_1$.

\begin{figure}
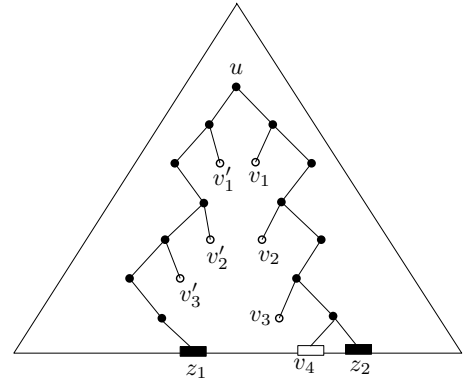

	\centering
    \yufeigraphics{height=50mm}{const-tm-qry}
	\figcapup
	\caption{\boldmath Illustration of the query algorithm}
	\label{fig:div-qry}
	\figcapdown
\end{figure}

To analyze the cost, we focus on the first two steps because the other steps
are symmetric. By Lemma~\ref{lmm:div-skysmall}, Step 1 takes $\bigO(1 + k'/B)$
I/Os, where $k'$ is the number of points reported in this step. In Step 2, by
leveraging the ordering inside $\lmax(z_2, u)$, $S_2$ can be found in $\bigO(1
+ |S_2|/B)$ I/Os. We charge the second term on the points of $S_2$. For each $i
\in [1, c]$, if $\high(v_i) \intr S_2$ has less than $B$ points, the subtree of
$v_i$ incurs no more cost. Otherwise, applying Lemma~\ref{lmm:div-half} takes
$\bigO(k'_i/B)$ I/Os if the application finds $k'_i$ points\footnote{Note that
$k'_i \ge B$ since the whole $high(v_i)$ is definitely reported.}; we charge
this cost on those $k'_i$ points. Overall, every reported point is charged
$\bigO(1/B)$ I/Os. Steps 1-4 each necessitate $\bigO(1)$ extra I/Os. The total
query cost is therefore $\bigO(1 + k/B)$.
\end{fullenv}

\begin{corollary} \label{crl:div-rankmain}
	There is a linear-size structure on a set of $n$ points in $[U]^2$ (where $U
	\ge n$ is an integer) such that a top-open range skyline query can be
	answered optimally in $\bigO(\log \log_B U + k/B)$ I/Os, when $k$
  points are reported.
\end{corollary}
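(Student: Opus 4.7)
The plan is to reduce the problem in universe $[U]^2$ to the rank-space problem already solved by Theorem~\ref{thm:div-rankmain}, paying only the cost of translating the query coordinates into ranks. Concretely, I would replace each point $p \in P$ by the pair $(r_x(p), r_y(p))$ of its $x$- and $y$-ranks in $P$, obtaining a rank-space set $P^*$ over $[n]^2$, and store $P^*$ using the linear-size structure of Theorem~\ref{thm:div-rankmain}. Since the dominance order among points is preserved by the rank mapping, the skyline of $P \intr Q$ under any query $Q = [\alpha_1,\alpha_2] \times [\beta, U]$ corresponds, point-by-point, to the skyline of $P^* \intr Q^*$, where $Q^*$ is the rank-space rectangle obtained from $Q$.

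To compute $Q^*$, I would augment the structure with two I/O-efficient predecessor indexes, one on the $x$-coordinates of $P$ and one on the $y$-coordinates. Using the standard external-memory predecessor data structure (a $B$-ary van Emde Boas layout) each such index occupies $\bigO(n/B)$ blocks and answers a predecessor/successor query in $\bigO(\log\log_B U)$ I/Os. Given $Q$, three queries then suffice: find the smallest $x$-rank $\alpha_1^\*$ whose original $x$-coordinate is $\ge \alpha_1$, the largest $x$-rank $\alpha_2^\*$ whose original $x$-coordinate is $\le \alpha_2$, and the analogous $\beta^\*$ from the $y$-predecessor index. We then issue the top-open query $[\alpha_1^\*, \alpha_2^\*] \times [\beta^\*, n]$ on the rank-space structure and, for each returned rank-space point, read the corresponding original coordinates from a precomputed $\bigO(n/B)$-block array.

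Summing the costs, the rank translations contribute $\bigO(\log\log_B U)$ I/Os, while by Theorem~\ref{thm:div-rankmain} the rank-space query costs $\bigO(1 + k/B)$ I/Os, yielding the claimed $\bigO(\log\log_B U + k/B)$ bound; the overall space remains $\bigO(n/B)$. Optimality follows by the same predecessor-search reduction invoked in the proof of Theorem~\ref{thm:topopen-main}: a top-open skyline query can simulate a predecessor query in $[U]$, and $\Omega(\log\log_B U)$ is a known lower bound for predecessor search on a linear-space indivisible structure. The only non-routine step is invoking the external-memory predecessor index with the right space/time trade-off, which is standard under the usual word-size assumption $w \ge \log_2 U$.
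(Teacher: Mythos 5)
Your proposal is correct and follows essentially the same route as the paper's own proof: store the points in rank space using the structure of Theorem~\ref{thm:div-rankmain}, convert the query coordinates via a linear-size external-memory predecessor structure in $\bigO(\log\log_B U)$ I/Os, and derive optimality from the predecessor-search lower bound via the reduction in the proof of Theorem~\ref{thm:topopen-main}. The extra details you supply (which successor/predecessor queries to issue, and the array for recovering original coordinates) are fine and consistent with the paper's argument.
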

\begin{fullenv}
\begin{proof}
  We simply create the same structure on the input set $P$ of $n$ points in, however, rank space
  $[n]^2$. A query coordinate in $[U]$ can be converted into $[n]$ in
  $O(\log\log_B U)$ I/Os by a standard linear-size structure for predecessor
  search \cite{PT06}. The overall query cost is therefore $\bigO(\log \log_B U
  + k/B)$.

  The optimality follows directly from the reduction explained in the proof of
  Theorem~\ref{thm:topopen-main} and the $\Omega(\log \log_B U)$ lower bound of
  predecessor search under the linear space budget \cite{PT06}.
\end{proof}
\end{fullenv}

%%%%%%%%%%%%%%%%%%%%%%%%%%%%%%%%%%%%%%%%%%%%%%%%%%%%%%%%%%%%%%%%%%%%%%%%%%%%%%%
\section{Dynamic Top-Open Structure} \label{sec:dynamic}
%%%%%%%%%%%%%%%%%%%%%%%%%%%%%%%%%%%%%%%%%%%%%%%%%%%%%%%%%%%%%%%%%%%%%%%%%%%%%%%

In this section, we present a dynamic data structure, which is SABE, that uses
linear space, and supports top-open queries in $\bigO(\log_{2B^\epsilon} (n/B) +
k/B^{1-\epsilon})$ I/Os and updates in $\bigO(\log_{2B^\epsilon} (n/B))$ I/Os,
for any parameter~$0\leq \epsilon \leq 1$. We are inspired by the approach of
Overmars and van Leeuwen~\cite{OL81} for maintaining the planar skyline in the
pointer machine. As a brief review, a dynamic binary base tree indexes the
$x$-coordinates of $P$, and every internal node stores the skyline of the points
in its subtree using a secondary search tree. More specifically, the skyline of
an internal node is $(L\backslash L') \cup R$, where $L$ (resp.\ $R$) is the
skyline of its left (resp.\ right) child node, and $L'$ is the set of points in
$L$ dominated by the leftmost (and thus also highest) point of~$R$.

Our approach is based on I/O-CPQAs, which are described in
Section~\ref{sec:iocpqa}.  We observe that attrition can be utilized to maintain
the internal node skylines in~\cite{OL81}, after mirroring the $y$-axis. To
explain this, let us first map the input set $P$ to its mirrored counterpart
$\attr{P} = \{(x_p, -y_p) \mid (x_p, y_p) \in P\}$. In the context of PQAs, we
will interpret each point $(\attr{x}_p, \attr{y}_p)\in \attr{P}$ as an {\em
element} with ``key'' value $\attr{y}_p$ that is inserted at ``time''
$\attr{x}_p$. To formalize the notion of time, we define the $<_x$-ordering of
two elements $\attr{p}, \attr{q} \in \attr{P}$ to be $\attr{p} <_x \attr{q}$, if
and only if $\attr{x}_p < \attr{x}_q$ holds. It is easy to see that element
$\attr{p} \in \attr{P}$ is attrited by element $\attr{q}\in \attr{P}$, if and
only if point $p\in P$ is dominated by point $q \in P$. See
Figure~\ref{fig:AtrittionSkyline} for a geometric illustration of the mirroring
transformation and the effects of attrition.

  \begin{figure}[htb]
    \centering
    \def\svgwidth{0.6\linewidth}
    % If you cannot compile then comment out the part that starts here
    \executeiffilenewer{./figure/AtrittionSkyline.svg}{./figure/AtrittionSkyline.pdf}
     {inkscape -z -D --file=./figure/AtrittionSkyline.svg
     --export-pdf=./figure/AtrittionSkyline.pdf --export-latex}
    % and ends here
    
    \arxivexcl{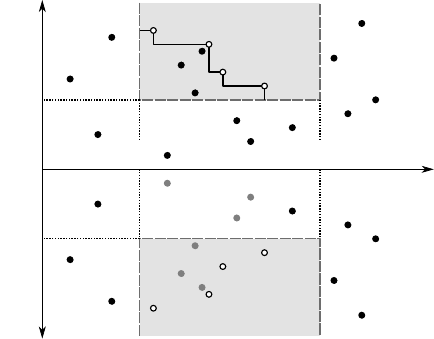}
    {\import{./figure/}{AtrittionSkyline.pdf_tex}}
    \vspace{-0.35cm}
    \caption{The skyline
problem (above) mirrored to the attrition problem (below). White points are
reported for the gray query area $[x_1, x_2] \times [y, \infty[$,
while gray elements are attrited within $[x_1, x_2]$.}
    \label{fig:AtrittionSkyline}
  \end{figure}

Thus, we index the $<_x$-ordering of $\attr{P}$ in a $(2B^{\epsilon},
4B^{\epsilon})$-tree, for a parameter $0\leq \epsilon \leq 1$, and employ
I/O-CPQAs as secondary structures, such that the I/O-CPQA at an internal node
is simply the concatenation of its children's I/O-CPQAs. To obtain logarithmic
query and update I/Os, this sequence of consecutive \textsc{CatenateAndAttrite}
operations at an internal node must be performed in $\bigO(1)$ I/Os
(Lemma~\ref{lem:seq_concats}). The presented I/O-CPQAs are \emph{ephemeral}
(not persistent), and thus the supported operations are \emph{destructive}, as
they destroy the initial configuration of the structure. This only allows
operating on the I/O-CPQA that is the final result of all concatenations and
resides at the root of the base tree. However, in order to support top-open
queries efficiently, accessing I/O-CPQAs at the internal nodes is required.
This is made possible by non-destructive operations. Therefore, we render the
I/O-CPQAs confluently persistent by merely replacing the catenable deques,
which are used as black boxes in our ephemeral construction, with real-time
purely functional catenable deques~\cite{KT99}. Since the imposed overhead is
$\bigO(1)$ worst case I/Os, confluently persistent I/O-CPQAs ensure the same
I/O bounds as their ephemeral counterparts. Section~\ref{sec:skyline} describes
our dynamic data structure in detail.

%%%%%%%%%%%%%%%%%%%%%%%%%%%%%%%%%%%%%%%%%%%%%%%%%%%%%%%%%%%%%%%%%%%%%%%%%%%%%%%
\subsection{I/O-Efficient Catenable Attrition Priority Queues}
\label{sec:iocpqa}
%%%%%%%%%%%%%%%%%%%%%%%%%%%%%%%%%%%%%%%%%%%%%%%%%%%%%%%%%%%%%%%%%%%%%%%%%%%%%%%

Here we present ephemeral \emph{I/O-efficient catenable priority queues with
attrition (I/O-CPQAs)} that store a set of elements from a total order and
support all operations in~$\bigO(1)$ I/Os. Also the operations take~$\bigO(1/b)$
amortized I/Os, when a constant number of blocks are already loaded into main
memory for every root I/O-CPQA, for any parameter~$1 \leq b \leq B$. We call
these preloaded records \emph{critical records}. For the sake of simplicity, we
identify an element with its value. Denote by $Q$ an I/O-CPQA and by $\min(Q)$
the smallest element stored in $Q$. We denote by $Q$ also the set of elements in
I/O-CPQA $Q$. Next, we re-state the supported operations in the context of
I/O-CPQAs:

\begin{itemize}
  \item \textsc{FindMin}($Q$) returns~$\min(Q)$.
\vspace{-2mm}
  \item \textsc{DeleteMin}($Q$) returns~$\min(Q)$ and removes it from~$Q$. The
  resulting I/O-CPQA is~$Q' = Q \backslash\{\min(Q)\}$, and~$Q$ is discarded.
\vspace{-2mm}
  \item \textsc{CatenateAndAttrite}($Q_1,Q_2$)\footnote{
    \textsc{InsertAndAttrite}($Q,e$) corresponds to
  \textsc{CatenateAndAttrite}($Q_1,Q_2$), where $Q_2$ contains only element
  $e$.} catenates I/O-CPQA $Q_2$ to the end of another I/O-CPQA~$Q_1$,
  removes all elements in~$Q_1$ that are larger than or equal to~$\min(Q_2)$
  (attrition), and returns the result as a combined I/O-CPQA $Q'_1 = \{e \in
  Q_1 \mid e < \min(Q_2)\} \cup Q_2$. The old I/O-CPQAs $Q_1$ and $Q_2$ are
  discarded.
\end{itemize}

An I/O-CPQA~$Q$ consists of two sorted buffers, called the first buffer $F(Q)$
with $[b,4b]$ elements and the last buffer~$L(Q)$ with $[0,4b]$ elements,
and~$k_Q+2$ deques of records, called the \emph{clean} deque~$C(Q)$, the
\emph{buffer} deque~$B(Q)$ and the \emph{dirty}
deques~$D_1(Q),\ldots,D_{k_Q}(Q)$, where~$k_Q \geq 0$. A \emph{record}~$r =
(l,p)$ consists of a buffer~$l$ of~$[b,4b]$ sorted elements and a pointer~$p$ to
an I/O-CPQA. A record is \emph{simple} when its pointer~$p$ is \nil. The
definition of I/O-CPQAs implies an underlying tree structure when pointers are
considered as edges and I/O-CPQAs as subtrees. We define the ordering of the
elements in a record~$r$ to be all elements of its buffer~$l$ followed by all
elements in the I/O-CPQA referenced by pointer~$p$. We define the queue order of
I/O-CPQA~$Q$ to be~$F(Q)$, $C(Q)$,~$B(Q)$ and~$D_1(Q),\ldots,D_{k_Q}(Q)$ and
$L(Q)$. It corresponds to an Euler tour over the tree structure. See
Figure~\ref{fig:Overview} for an overview of the structure.

\fullcmt{

  \begin{figure}[htb]
    \centering
    \def\svgwidth{\linewidth}
    % If you cannot compile then comment out the part that starts here
    \executeiffilenewer{./figure/CPQAOverview.svg}{./figure/CPQAOverview.pdf}
     {inkscape -z -D --file=./figure/CPQAOverview.svg
     --export-pdf=./figure/CPQAOverview.pdf --export-latex}
    % and ends here
    
    \arxivexcl{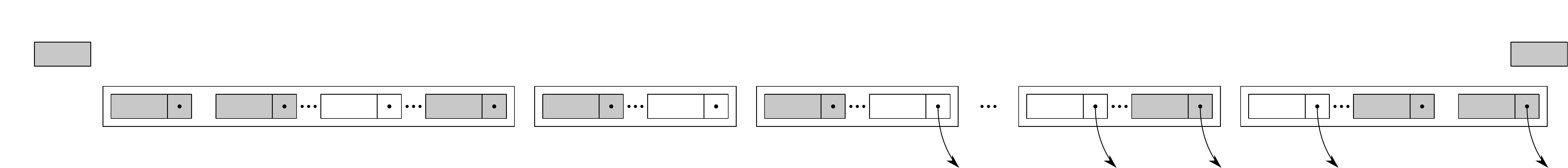}
    {\import{./figure/}{CPQAOverview.pdf_tex}}
    \vspace{-0.75cm}
    \caption{The records in~$C(Q)$
  and~$B(Q)$ are simple, the records of~$D_1(Q),\ldots, D_{k_Q}(Q)$ may contain
  pointers to other I/O CPQAs. I/O-CPQAs imply a tree structure. Gray records
  are critical.}
    \label{fig:Overview}
  \end{figure}

}

\confcmt{
  
  \begin{figure}[htb]
    \centering
    \def\svgwidth{\linewidth}
    % If you cannot compile then comment out the part that starts here
    \executeiffilenewer{./figure/CPQAOverview.svg}{./figure/CPQAOverview.pdf}
     {inkscape -z -D --file=./figure/CPQAOverview.svg
     --export-pdf=./figure/CPQAOverview.pdf --export-latex}
    % and ends here
    
    \arxivexcl{\import{.}{CPQAOverview.pdf_tex}}
    {\import{./figure/}{CPQAOverview.pdf_tex}}
    \vspace{-0.75cm}
    \caption{I/O-CPQA $Q$. Critical
  records are shown in gray.}
    \label{fig:Overview}
  \end{figure}

}

Given a record $r =(l,p)$, the minimum and maximum elements in the buffers
of~$r$, are denoted by~$\min(r) = \min(l)$ and~$\max(r) = \max(l)$,
respectively. They appear respectively first and last in the queue order of~$l$,
since the buffer of~$r$ is sorted by value. Given a deque~$q$, the first and the
last records are denoted by~$\first(q)$ and~$\last(q)$, respectively.
Also,~$\rest(q)$ denotes all records of the deque~$q$ excluding the
record~$\first(q)$. Similarly,~$\front (q)$ denotes all records of the deque~$q$
excluding the record~$\last(q)$. The size $|F(Q)|$ ($|L(Q)|$) of the buffer
$F(Q)$ ($L(Q)$) is defined to be the number of elements in $F(Q)$ ($L(Q)$). The
size~$|r|$ of a record~$r$ is defined to be the number of elements in its
buffer. The size~$|q|$ of a deque~$q$ is defined to be the number of records it
contains. The size~$|Q|$ of the I/O-CPQA~$Q$ is defined to be the number of
elements (both attrited and non-attrited) that~$Q$ contains. For an I/O-CPQA $Q$
we denote by $\first(Q)$ and $\last(Q)$, respectively the first and last records
out of all the records of all the deques $C(Q), B(Q), D_1(Q), \ldots,
D_{k_Q}(Q)$ that exist in $Q$. For an I/O-CPQA~$Q$ we maintain the following
invariants:

\begin{enumerate}[label=I.\arabic*)]
  \item \label{in:records} For every record~$r = (l,p)$ where pointer~$p$
  references I/O-CPQA~$Q'$,~$\max(l) < \min(Q')$ holds.

  \item \label{in:recordpairs} In all deques of~$Q$ where
    record~$r_1=(l_1,p_1)$ precedes record $r_2=(l_2,p_2)$: $\max(l_1) <
    \min(l_2)$ holds.
  \item \label{in:queuevalues} \sloppypar{For the buffer $F(Q)$ and deques
    $C(Q),B(Q),D_1(Q)$: $\max(F(Q)) < \min(\first(C(Q))) < \max(\last(C(Q))) <
  \min(\first(B(Q))) < \min(\first(D_1(Q)))$ holds.}

  \item \label{in:min}  Element~$\min(\first(D_1(Q)))$ is
    the smallest element in the dirty deques~$D_1(Q), \ldots,$ $D_k(Q)$.

    \item  \label{in:smalltail} $\min(\first(D_1(Q)))< \min (L(Q))$.

  \item \label{in:simple} All records in the deques~$C(Q)$ and~$B(Q)$ are
  simple.

  \item \label{in:ineq} $|C(Q)| \geq \sum_{i=1}^{k_Q}{|D_i(Q)|}+k_Q$.

  \item \label{in:small} $|F(Q)|<b$ holds iff $|Q|< b$ holds.

  \item \label{in:noFinChild} If $Q$ is a child of another I/O-CPQA then $F(Q) =
    \emptyset$ and $L(Q) = \emptyset$ holds.
\end{enumerate}
\sloppypar{From
invariants~\iref{in:recordpairs},~\iref{in:queuevalues},~\iref{in:min}
and~\iref{in:smalltail}, we have that~$\min(Q) = \min(F(Q))$.} We say that an
operation \textit{improves} or \textit{aggravates}  the inequality of
Invariant~\iref{in:ineq} by a parameter~$c$ for I/O-CPQA~$Q$, when the
operation, respectively, increases or decreases by $c$ the \textit{state} of
$Q$:
\[
  \Delta (Q) = |C(Q)| - \sum_{i=1}^{k_Q}{|D_i(Q)|} - k_Q
\]

To argue about the~$\bigO(1/b)$ amortized I/O bounds we need more definitions.
\confcmt{The \textit{critical records} of I/O-CPQA $Q$ are $\first(C(Q))$,
  $\first(\rest(C(Q))),$ $\last(C(Q))$, $\first(B(Q))$,  $\first(D_1(Q))$,
  $\last(D_{k_{Q}}(Q))$ and $\last(\front(D_{k_{Q}}(Q)))$, if it exists.
Otherwise $\last(D_{k_{Q}-1}(Q))$ is critical.}
 By~$\records(Q)$ we denote all records in~$Q$ and the records in the I/O-CPQAs
pointed to by $Q$ and its descendants. We call an I/O-CPQA~$Q$ \emph{large} if
$|Q| \geq b$ and \emph{small} otherwise. We define the following potential
functions for large and small I/O-CPQAs. In particular, for large I/O-CPQAs~$Q$
the potential~$\Phi(Q)$ is defined as
\[
  \Phi(Q) = \Phi_F(|F(Q)|) + |\records(Q)| + \Phi_L(|L(Q)|),
\]
where
\begin{eqnarray}
      \Phi_F(x) = \left\{
      \begin{array}{cl}
        5 -\frac{2x}{b}, & b \leq x < 2b \\
        1, & 2b \leq x < 3b \\
        \frac{2x}{b} - 5, & 3b \leq x \leq 4b \\
      \end{array}
    \right. \nonumber
\end{eqnarray}
and
\begin{eqnarray}
      \Phi_L(x) = \left\{
      \begin{array}{cl}
        \frac{x}{b}, & 0 \leq x < b \\
        1, & b \leq x \leq 3b \\
        \frac{2x}{b} - 5, & 3b < x \leq 4b \\
      \end{array}
    \right. \nonumber
\end{eqnarray}

For small I/O-CPQAs~$Q$, the potential~$\Phi(Q)$ is defined as
\[
  \Phi(Q) = \frac{3|Q|}{b}
\]
The total potential $\Phi_T$ is defined as
\[
  \Phi_T = \sum_{Q}{\Phi(Q)} + \sum_{Q, b \leq |Q|}{1},
\]
where the first sum is the total potential of all I/O-CPQAs~$Q$ and the second
sum counts the number of large I/O-CPQAs~$Q$.

\extraspacing \textbf{Operations.} In the following, we describe the algorithms
that implement the operations supported by the I/O-CPQA~$Q$. Most of the
operations call the auxiliary operations \textsc{Bias}$(Q)$ and
\textsc{Fill}$(Q)$, which we describe last. \textsc{Bias} improves the
inequality of~\iref{in:ineq} for~$Q$ by at least $1$ if $Q$ contains any
records. \textsc{Fill}$(Q)$ ensures \iref{in:small}.

\extraspacing \underline{\textsc{FindMin}($Q$)} returns the value $\min(F(Q))$.

\extraspacing \underline{\textsc{DeleteMin}($Q$)} removes element $e =
\min(F(Q))$ from the first buffer $F(Q)$, calls \textsc{Fill}($Q$) and returns
$e$.

\extraspacing \underline{\textsc{CatenateAndAttrite}($Q_1, Q_2$)} creates a new
I/O-CPQA $Q'_1$ by modifying~$Q_1$ and~$Q_2$, and by calling
\textsc{Bias}($Q'_1$), \textsc{Bias}($Q_2$), \textsc{Fill}($Q'_1$) and
\textsc{Fill}($Q_2$).

If $|Q_1| < b$ holds, then $Q_1$ consists only of the first buffer $F(Q_1)$.
Let $F'(Q_1)$ be the non-attrited elements of $F(Q_1)$, under attrition by
$\min(F(Q_2))$. Prepend $F'(Q_1)$ onto the first buffer $F(Q_2)$ of~$Q_2$. If
this prepend causes $F(Q_2) > 4b$, then we take the last $2b$ elements out of
$F(Q_2)$, make a new record out of them and we prepend it onto the deque
$C(Q_2)$.

If~$|Q_2| < b$ holds, then $Q_2$ only consists of $F(Q_2)$. If $|Q_1| < b$ then
we delete attrited elements in $F(Q_1)$ and append $F(Q_2)$ to $F(Q_1)$. We now
assume that $|Q_1| \geq b$. We have three cases, depending on how much of~$Q_1$
is attrited by~$Q_2$. Let $r = (l,\cdot) = \last(Q_1)$ and let $e = \min(Q_2)$.

\begin{enumerate}
  \item \label{it:Q1} $e \leq \min(r)$: Delete $r$. We now have four cases:
  \begin{enumerate}[label=\arabic*)]
    \item \label{it:Q1Cs} If $e \leq \min(F(Q_1))$ holds, we discard
      I/O-CPQA~$Q_1$ and set~$Q'_1 = Q_2$.

    \item \label{it:Q1lastCs} Else if $e \leq \max(\last(C(Q_1)))$ holds, we
      prepend $F(Q_1)$ onto $C(Q_1)$, set $F(Q_1') = \emptyset$, $C(Q_1') =
      \emptyset$, $B(Q_1') = C(Q_1)$, $k_{Q_1'} = 0$ and $L(Q_1') = F(Q_2)$.  We
      call \textsc{Bias}($Q'_1$) once to restore \iref{in:ineq} and then call
      \textsc{Fill}($Q'_1$) once to restore Invariant \iref{in:small}.

    \item \label{it:Bs} Else if $e \leq \min(\first(B(Q_1)))$ or $e \leq
      \min(\first(D_1(Q_1)))$ holds, we set $Q'_1=Q_1$
      and $k_{Q'_1} = 0$ and set $L(Q'_1) =
      F(Q_2)$. If $e \leq \min(\first(B(Q_1)))$ holds, we set~$B(Q_1') =
      \emptyset$, else we set $B(Q_1') = B(Q_1)$.

    \item \label{it:Ds} Else, let $L'(Q_1)$ be the non-attrited elements under
      attrition by $\min(F(Q_2))$. If $|L'(Q_1)| + |F(Q_2)| \leq 4b$ then append
      $F(Q_2)$ to $L'(Q_1)$, else $|L'(Q_1)| + |F(Q_2)| > 4b$ so take the first
      $4b$ elements of $L'(Q_1)$ and $F(Q_2)$ and make into a new record in a
      new last dirty queue of $Q'_1$, leave the rest in $L(Q'_1)$, set $k_{Q'_1}
      = k_{Q_1} + 1$ and call \textsc{Bias}$(Q'_1)$ twice to restore
      \iref{in:ineq}.

% alternatively we can put $L(Q_1)$ in a deque:::::    If $k_{Q_1}\geq 1$ then if $|L(Q_1)|\geq b$, $L(Q_1)$ is appended to the end of $D_{k_{Q_1}}(Q_1)$ as its last record. If $|L(Q_1)|< b$, then based on the size of $\last(D_{k_{Q_1}}(Q_1))$, we either put all elements of $L(Q_1)$ in $\last(D_{k_{Q_1}}(Q_1))$ or make a new record which is appended to the end of $D_{k_{Q_1}}(Q_1)$ transferring elements from the previous $\last(D_{k_{Q_1}}(Q_1))$ in order to ensure the size of the buffer in all records. If $k_{Q_1}=0$ then we make a new deque $\last(D_{k_{Q_1}}(Q_1))$ Finally, we set $L(Q_1') = F(Q_2)$.

  \end{enumerate}

  \item \label{it:Ls} Else if $e \leq \min(L(Q_1))$, we set $Q'_1=Q_1$ and
    $L(Q'_1) = F(Q_2)$.

  \item \label{it:Latts} Else $\min(L(Q_1)) < e$: Let $l'$ be the
    non-attrited elements of $l$, under attrition by $\min(L(Q_1))$,
    and $L'(Q_1)$ be the non-attrited elements, under attrition by
    $e$. If $|L'(Q_1)| + |F(Q_2)| > 4b$ holds, we do the following: if
    $|l'| < |l|$ holds, we put the first $4b - |l'|$ elements of
    $L'(Q_1)$ and $F(Q_2)$ into $l$ along with $l'$. Moreover, if we
    still have more than $3b$ elements left in $L'(Q_1)$ and $F(Q_2)$,
    we put the first $3b$ elements into a new last record of
    $D_{k_{Q_1}}(Q_1)$. Finally, we leave the remaining elements in $L(Q_1)$. If
    we added a new last record to $D_{k_{Q_1}}(Q_1)$, we also  call
    \textsc{Bias}($Q$) once.
\end{enumerate}

We have now entirely dealt with the cases where~$|Q_1| < b$ or $|Q_2| < b$
holds, so in the following we assume that $|Q_1| \geq b$ and $|Q_2| \geq b$
hold, i.e. any I/Os incurred in the cases (\ref{it:Q1C}--\ref{it:D}) below are
already paid for, since the total number of large I/O-CPQAs decreases by one.
Let $e = \min(Q_2)$.

\begin{enumerate}
  \item \label{it:Q1C} If $e \leq \min(F(Q_1))$ holds, we discard
    I/O-CPQA~$Q_1$ and set~$Q'_1 = Q_2$.

  \item \label{it:Q1lastC} Else if $e \leq \max(\last(C(Q_1)))$ holds, we
    prepend $F(Q_1)$ onto $C(Q_1)$ and $F(Q_2)$ onto $C(Q_2)$. We remove the
    simple record $(l,\cdot) = \first(C(Q_2))$ from~$C(Q_2)$, set $Q'_1=Q_1$,
    $F(Q'_1) = \emptyset$, $C(Q'_1) = \emptyset$, $B(Q'_1) = C(Q_1)$, $D_1(Q'_1)
    = (l,p)$, $k_{Q_1'} = 1$, $L(Q'_1) = L(Q_2)$ and $L(Q'_2) = \emptyset$,
    where~$p$ points to~$Q'_2$ if it exists. This gives~$\Delta (Q'_1) = - 2$,
    thus we call \textsc{Bias}$(Q'_1)$ twice and \textsc{Fill}($Q'_1$) once.

  \item \label{it:B} Else if $e \leq \min(\first(B(Q_1)))$ or $e \leq
    \min(\first(D_1(Q_1)))$ holds, we prepend $F(Q_2)$ onto $C(Q_2)$ and remove
    the simple record $(l,\cdot) = \first(C(Q_2))$ from~$C(Q_2)$, set
    $Q'_1=Q_1$, $D_1(Q'_1) =(l,p)$, $k_{Q_1'} = 1$, $L(Q'_1) = L(Q_2)$, $L(Q'_2)
    = \emptyset$ and set~$p$ to point to $Q'_2$, if it exists. If $e \leq
    \min(\first(B(Q_1)))$ holds, we set~$B(Q_1') = \emptyset$, else we set
    $B(Q_1') = B(Q_1)$. This gives~$\Delta (Q'_1) = - 2$ in the worst case, thus
    we call \textsc{Bias}$(Q'_1)$ twice.

  \item \label{it:D} Else let $L'(Q_1)$ be the non-attrited elements of
    $L(Q_1)$, under attrition by $F(Q_2)$. If $|L'(Q_1)| + |F(Q_2)| \leq 4b$
    holds, then we make $L'(Q_1)$ and $F(Q_2)$ into the first record of
    $C(Q_2)$. Else we make them into the first two records of $C(Q_2)$ of size
    $\lfloor (|L'(Q_1)| + |F(Q_2)|)/2 \rfloor$ and $\lceil (|L(Q_1)| +
    |F(Q_2)|)/2 \rceil$ each. We set $Q'_1=Q_1$, $F(Q'_2) = \emptyset$, $L(Q'_1)
    = L(Q_2)$, $L(Q'_2) = \emptyset$, remove $(l_2,\cdot) =\first(C(Q_2))$ from
    $C(Q_2)$. Moreover, we add $(l_2,p)$ as a new single record in
    $D_{k_{Q_1}+1}(Q_1')$, where $p$ points to the rest of $Q'_2$, if it
    exists, and set $k_{Q_1'} = k_{Q_1}+1$. All this aggravates the inequality
    of~\iref{in:ineq} for~$Q'_1$ by at most~$2$, so we call
    \textsc{Bias}$(Q'_1)$ twice.
\end{enumerate}

\extraspacing \underline{\textsc{Fill}$(Q)$} restores Invariant \iref{in:small},
if it is violated. In particular, if $|F(Q)| < b$ and $|Q| \geq b$, let $r =
(l,\cdot) = \first(C(Q))$. If $|l| \geq 2b$ holds, then we take the $b$ first
elements of $l$ and append them to $F(Q)$. Else $|l| < 2b$ holds, so we append
$l$ to $F(Q)$, discard $r$ and call \textsc{Bias}$(Q)$ once.

\extraspacing \underline{\textsc{Bias}$(Q)$} improves the inequality
of~\iref{in:ineq} for~$Q$ by at least~$1$ if $Q$ contains any records. It also
ensures that invariant \iref{in:small} is maintained. We distinguish two
basic cases with respect to $|B(Q)|$, namely $|B(Q)|=0$ and $|B(Q)| > 0$.

\begin{enumerate}
  \item \label{it:Blg0} $|B(Q)| > 0$: We have two cases depending on if $k_Q
    \geq 1$ or $k_Q = 0$.
    \begin{enumerate}[label=\arabic*)]
      \item $k_Q = 0$: Let $e = \min(L(Q))$, if it exists. We remove the first
        record $r_1 =(l_1,\cdot) = \first(B(Q))$ from $B(Q)$. Let~$l_1'$ be the
        non-attrited elements of~$l_1$, under attrition by element~$e$. If
        $|l_1'| = |l_1|$ holds nothing is attrited, so we just add $r_1 =
        (l_1,\cdot)$ at the end of $C(Q)$.

        Else $|l_1'| < |l_1|$ holds, so we set $B(Q) = \emptyset$. If $|l_1'|
        \geq b$ holds, then we make record $r_1$ with buffer $l_1'$ into the new
        last record of $C(Q)$. Else $|l_1'| < b$ holds, so if $|l_1'| + |L(Q)|
        \leq 3b$ also holds, we add $l_1'$ to $L(Q)$ and discard $r_1$. Else
        $|l_1'| + |L(Q)| > 3b$ also holds, so we take the $2b$ first elements of
        $l_1'$ and $L(Q)$ and put them into $r_1$, making it the new last record
        of $C(Q)$.

      \item $k_Q \geq 1$: Let $e = \min(\first(D_1(Q)))$. We remove the first
        record $r_1 =(l_1,\cdot) = \first(B(Q))$ from $B(Q)$. Let~$l_1'$ be the
        non-attrited elements of~$l_1$, under attrition by element~$e$.

        If $|l_1'| = |l_1|$ or $b \leq |l_1'| < |l_1|$ holds, we just add $r_1 =
        (l'_1,\cdot)$ at the end of $C(Q)$ and if $|l_1'| < |l_1|$ we set
        $B(Q)=\emptyset$. Else $|l_1'| < b$ hold, we set $B(Q) = \emptyset$,
        let $r_2 = (l_2,p_2) = \first(D_1(Q))$. If $|l_1'| + |l_2| \leq 4b$
        holds, we discard $r_1$ and prepend $l_1'$ onto $l_2$ of $r_2$. Else
        $|l_1'| + |l_2| > 4b$ holds, so we take the first $2b$ elements of
        $l_1'$ and $l_2$ and put them in $r_1$, making it the new last record of
        $C(Q)$. If this causes $\min(L(Q)) \leq \min(\first(D_1(Q)))$, we
        discard all dirty queues.
    \end{enumerate}
    If $r_1$ was discarded, then we have that $|B(Q)| = 0$ and we call
    \textsc{Bias} recursively, which will not invoke this case again. In
    all cases the inequality of~\iref{in:ineq} for~$Q$ is improved by~$1$.

  \item \label{it:Beq0} $|B(Q)| = 0$: we have three cases depending on the
    number of dirty queues, namely cases $k_Q > 1$, $k_Q = 1$ and $k_Q = 0$.
    \begin{enumerate}[label=\arabic*)]
      \item \label{it:KQgt1} $k_Q > 1$: If $\min(L(Q)) \leq
        \min(\first(D_{k_Q}(Q)))$ holds, we set $k_Q = k_Q-1$ and discard
        $D_{k_Q}(Q)$. This improves the inequality of~\iref{in:ineq} for~$Q$
        by at least~$2$. Else let $e = \min(\first(D_{k_Q}(Q)))$.

        If $e \leq \min(\last(D_{k_Q-1}(Q)))$ holds, we remove the record
        $\last(D_{k_Q -1}(Q))$ from~$D_{k_Q-1}(Q)$. This improves the inequality
        of~\iref{in:ineq} for~$Q$ by~$1$.

      If $\min(\last(D_{k_Q-1}(Q))) < e \leq \max(\last(D_{k_Q-1}(Q)))$ holds,
      we remove record $r_1 = (l_1,p_1) = \last(D_{k_Q-1}(Q))$
      from~$D_{k_Q-1}(Q)$, and let $r_2 = (l_2,p_2) = \first(D_{k_Q}(Q))$. We
      delete any elements in~$l_1$ that are attrited by~$e$, and let~$l_1'$
      denote the set of non-attrited elements. If $|l_1'| + |l_2| \leq 4b$
      holds, we prepend $l_1'$ onto $l_2$ of $r_2$ and discard $r_1$. Else
      we take the first $\lfloor (|l_1'| + |l_2|) / 2 \rfloor$ elements of
      $l_1'$ and $l_2$ and replace $r_1$ of $D_{k_Q-1}(Q)$ with them. Finally,
      we concatenate $D_{k_Q-1}(Q)$ and $D_{k_Q}(Q)$ into a single deque. This
      improves the inequality of~\iref{in:ineq} for~$Q$ by at least~$1$.

    Else $\max(\last(D_{k_Q-1}(Q))) < e$ holds and we just concatenate the
    deques~$D_{k_Q-1}(Q)$ and~$D_{k_Q} (Q)$, which improves the inequality
    of~\iref{in:ineq} for~$Q$ by~$1$.

  \item \label{it:KQeq1} $k_Q = 1$: In this case~$Q$ contains only
    deques~$C(Q)$ and~$D_1(Q)$. Let $r = (l,p) = \first(D_1(Q))$. If
    $\min(L(Q)) \leq \min(\first(\rest(D_1(Q))))$ holds, we discard
    all dirty queues, except for record $r$ of $D_1(Q)$.

    If $\min(L(Q)) \leq \max(l)$ holds, we discard all the dirty deques and let
    $l'$ be the non-attrited elements of $l$. If $|l'| + |L(Q)| \leq 3b$ holds,
    we prepend $l'$ onto $L(Q)$. Else $|l'| + |L(Q)| > 3b$ holds, so we take the
    first $2b$ elements of $l'$ and $L(Q)$ and make them the new last record of
    $C(Q)$ and leave the rest in $L(Q)$. This improves the inequality
    of~\iref{in:ineq} for~$Q$ by $1$.

    Else $\max(\ell) < \min(L(Q))$ holds, so we remove~$r$ and insert buffer~$l$
    into a new record at the end of~$C(Q)$. This improves the inequality
    of~\iref{in:ineq} for~$Q$ by at least~$1$. If~$r$ is not simple, let the
    pointer~$p$ of~$r$ reference I/O-CPQA~$Q'$. We restore \iref{in:simple}
    for~$Q$ by \textit{merging} I/O-CPQAs~$Q$ and~$Q'$ into one I/O-CPQA; see
    Figure~\ref{fig:Bias}. In particular, let~$e = \min(\min(\first(D_1(Q))),
    \min(L(Q)))$.

    We proceed as follows: If $e \leq \min(Q')$ holds, we discard~$Q'$.
    \fullcmt{The inequality of~\iref{in:ineq} for~$Q$ remains unaffected.} Else
    if $\min(\first(C(Q'))) < e \leq \max(\last (C(Q'))$ holds, we set $B(Q) =
    C(Q')$ and discard the rest of~$Q'$. \confcmt{In both cases, t}\fullcmt{T}he
    inequality of~\iref{in:ineq} for~$Q$ remains unaffected.

    Else if $\max(\last(C(Q')) < e \leq \min(\first(D_1(Q')))$ holds, we
    concatenate the deque~$C(Q')$ at the end of~$C(Q)$. If moreover
    $\min(\first(B(Q'))) < e$ holds, we set $B(Q) = B(Q')$. Finally, we discard
    the rest of~$Q'$. This improves the inequality of~\iref{in:ineq} for~$Q$
    by~$|C(Q')|$.

    Else $\min(\first(D_1(Q'))) < e$ holds. We concatenate the deque~$C(Q')$ at
    the end of~$C(Q)$, we set~$B(Q) = B(Q')$, we set $D_1(Q'), \ldots,
    D_{k_{Q'}}(Q')$ as the first $k_{Q'}$ dirty queues of~$Q$ and we
    set~$D_1(Q)$ as the last dirty queue of~$Q$. This improves the inequality
    of~\iref{in:ineq} for~$Q$ by~$\Delta(Q') \geq 0$, since~$Q'$ satisfied
    Invariant \iref{in:ineq} before the operation.

  \item \label{it:KQeq0} $k_Q = 0$: If all deques are empty, $L(Q) \neq
    \emptyset$ and $|F(Q)| \leq 2b$ hold, we take the first $b$ elements of
    $L(Q)$ and append to $F(Q)$. The inequality of~\iref{in:ineq} for~$Q$
    remains $\Delta(Q) = 0$.
  \end{enumerate}
\end{enumerate}

  \begin{figure}[htb]
    \centering
    \def\svgwidth{\linewidth}
    % If you cannot compile then comment out the part that starts here
    \executeiffilenewer{./figure/CPQABias.svg}{./figure/CPQABias.pdf}
     {inkscape -z -D --file=./figure/CPQABias.svg
     --export-pdf=./figure/CPQABias.pdf --export-latex}
    % and ends here
    
    \arxivexcl{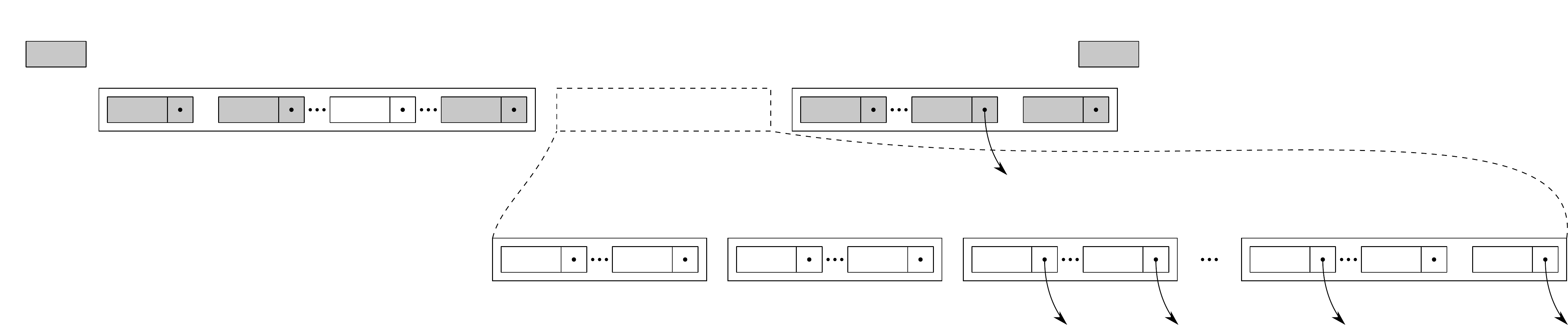}
    {\import{./figure/}{CPQABias.pdf_tex}}
    \vspace{-0.75cm}
    \caption{Merging I/O-CPQAs $Q$ and $Q'$.
This case can only occur when $B(Q)=\emptyset$ and $k_Q=1$.}
    \label{fig:Bias}
  \end{figure}

\begin{theorem} \label{thm:iocpqa}
  An I/O-CPQA supports \textsc{FindMin}, \textsc{DeleteMin},
  \textsc{CatenateAndAttrite} and \textsc{InsertAndAttrite} in~$\bigO(1)$ I/Os
  per operation. It occupies $\bigO((n-m)/B)$ blocks after calling
  \textsc{CatenateAndAttrite} and \textsc{InsertAndAttrite}~$n$ times and
  \textsc{DeleteMin}~$m$ times, respectively.

  All operations are supported by a set of $\ell$ I/O-CPQAs in~$\bigO(1/b)$
  amortized I/Os, when~$M =\Omega ( \ell b)$, using $\bigO((n-m)/b)$ blocks of
  space, for any parameter $1 \leq b \leq B$.
\end{theorem}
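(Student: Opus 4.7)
The plan is to establish the three claims of Theorem~\ref{thm:iocpqa} separately: the worst-case $\bigO(1)$ I/O bound per operation, the amortized $\bigO(1/b)$ bound under the memory assumption, and the $\bigO((n-m)/B)$ space bound. I begin with the worst-case bound, which reduces to showing that every algorithm described in Section~\ref{sec:iocpqa} touches only a constant number of records and invokes the catenable-deque primitives (front, back, rest, front-without-last, concatenate) a constant number of times. For \textsc{FindMin} and \textsc{DeleteMin}, this is immediate: they inspect $F(Q)$ and, via \textsc{Fill}, possibly one end of $C(Q)$. For \textsc{CatenateAndAttrite}, the case split on the attrition boundary (cases~\ref{it:Q1}(a)--(d),~\ref{it:Ls},~\ref{it:Latts} in the small regime and~\ref{it:Q1C}--\ref{it:D} in the large regime) each performs $\bigO(1)$ deque operations and calls \textsc{Bias} at most twice and \textsc{Fill} at most once. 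The inductive point is that \textsc{Bias} itself, including its subcase~\ref{it:KQeq1} where two I/O-CPQAs are merged, performs $\bigO(1)$ inspections of critical records; moreover the only self-recursion occurs in case~\ref{it:Blg0} after the condition $|B(Q)|=0$ becomes true, so the branch~\ref{it:Beq0} is entered at most once, giving constant total depth. This establishes $\bigO(1)$ worst-case I/Os.

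For the amortized bound I use the potential $\Phi_T$ together with the observation that the hypothesis $M=\Omega(\ell b)$ lets every root I/O-CPQA keep its critical records (in the sense of Section~\ref{sec:iocpqa}) pinned in memory. Thus accessing $F(Q)$, $L(Q)$, $\first(C(Q))$, $\first(\rest(C(Q)))$, $\last(C(Q))$, $\first(B(Q))$, $\first(D_1(Q))$, $\last(D_{k_Q}(Q))$ and $\last(\front(D_{k_Q}(Q)))$ is free; a real I/O is charged only when a record has to be evicted or a new record loaded from disk. The shape of $\Phi_F$ and $\Phi_L$ is tuned so that each $\Theta(b)$-sized movement of elements into or out of $F(Q)$ and $L(Q)$ (e.g., the slicing of an overflowing $F(Q_2)$ in the small $Q_1$ branch, or the repacking of $L'(Q_1)\cup F(Q_2)$ in case~\ref{it:Latts}) releases $\Theta(1)$ potential while incurring $\bigO(1)$ non-critical I/Os, giving $\bigO(1/b)$ amortized I/Os per affected element. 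The $|\records(Q)|$ term of $\Phi$ pays for the creation of a new record in cases~\ref{it:Q1lastCs},~\ref{it:Ds},~\ref{it:Q1lastC},~\ref{it:B},~\ref{it:D}, while each call to \textsc{Bias} improves $\Delta(Q)$ by at least $1$ per I/O and therefore matches its cost through the $|\records(Q)|$ contribution (note Invariant~\iref{in:ineq} implies every unit of $\Delta$ corresponds to $\Theta(1)$ records). For small I/O-CPQAs, the $3|Q|/b$ potential absorbs the $\bigO(1)$ cost of touching a single block that holds the entire queue, and the extra $\sum_{Q:\, b\le|Q|} 1$ term funds the $\bigO(1)$ transitions between the small and large regimes that occur in case~\ref{it:Q1C} and in the small-$Q_1$ branch of \textsc{CatenateAndAttrite}.

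The space bound follows by charging records to their non-attrited elements. Invariant~\iref{in:ineq} forces $|C(Q)|\ge\sum_i|D_i(Q)|+k_Q$, so the buffer deque plus dirty deques together hold at most $|C(Q)|$ records, and every record carries $\ge b$ elements by Invariant~\iref{in:recordpairs} (with the exception of $F(Q)$ and $L(Q)$, which contribute $\bigO(1)$ blocks per root). Summed across the tree of I/O-CPQAs induced by pointers, the total number of records is therefore $\bigO((n-m)/b)$, which translates into $\bigO((n-m)/B)$ blocks when $b=B$.

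The main obstacle will be the amortized analysis of case~\ref{it:KQeq1} of \textsc{Bias}, where a record's pointer forces a merge with a child I/O-CPQA $Q'$: one must verify that the potential released by discarding $Q'$ as a distinct large structure (via the counting term $\sum_{Q:\,b\le|Q|}1$) plus the unchanged $|\records|$ contribution exactly covers the $\bigO(1)$ I/Os needed to splice $C(Q'),B(Q'),D_1(Q'),\ldots,D_{k_{Q'}}(Q')$ into $Q$ while preserving Invariants~\iref{in:queuevalues}--\iref{in:ineq} for the merged structure. All other cases follow the same template once the potential bookkeeping for this one is pinned down.
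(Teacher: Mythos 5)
Your plan follows the paper's proof almost exactly: correctness via the invariants, the worst-case $\bigO(1)$ bound via the observation that every operation touches $\bigO(1)$ records and that \textsc{Bias} re-enters itself only once (after forcing $|B(Q)|=0$), and the amortized bound via the potential $\Phi_T$ with the critical records pinned in memory under $M=\Omega(\ell b)$. The space bound, which the paper leaves implicit, you handle correctly by counting records of size $\Theta(b)$ against the $\le n-m$ elements present.

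There is, however, one piece of the amortized bookkeeping that would fail as written. You claim that ``each call to \textsc{Bias} improves $\Delta(Q)$ by at least $1$ per I/O and therefore matches its cost through the $|\records(Q)|$ contribution,'' asserting that every unit of $\Delta$ corresponds to $\Theta(1)$ records. This is false: \textsc{Bias} can improve $\Delta(Q)$ without discarding any record at all --- e.g.\ by moving $\first(B(Q))$ to the end of $C(Q)$, or by concatenating $D_{k_Q-1}(Q)$ and $D_{k_Q}(Q)$ (which decreases $k_Q$ but leaves $|\records(Q)|$ unchanged). So the $|\records(Q)|$ term does not drop when $\Delta$ improves, and it cannot fund \textsc{Bias}'s I/Os. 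The correct accounting, and the one the paper uses, is that the \emph{caller} of \textsc{Bias} (and of \textsc{Fill}) prepays: \textsc{DeleteMin} releases $\Phi_F(|F(Q)|)\ge 3$ when $|F(Q)|$ falls below $b$, and a \textsc{CatenateAndAttrite} of two large queues releases one unit from the $\sum_{Q,\,b\le|Q|}1$ term because the number of large I/O-CPQAs drops by one; one then only needs to verify that \textsc{Bias} and \textsc{Fill} themselves never \emph{increase} $\Phi_T$ (which holds because, outside of one bounded repacking into $L(Q)$ paid for by a discarded record, they only discard, merge, and move records). A related slip: the $|\records(Q)|$ term does not ``pay for'' the creation of a new record --- creating a record \emph{costs} one unit of that term, which must be covered by a simultaneous drop in $\Phi_F$ or $\Phi_L$ (this is exactly what the case analysis for $|F'(Q_1)|+|F(Q_2)|>4b$ and for case~\ref{it:D} checks). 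With the payment direction corrected, your case-by-case template, including the merge case \ref{it:KQeq1} of \textsc{Bias} (where $Q'$'s records are already counted inside $|\records(Q)|$, so splicing them in is free and $\Delta(Q')\ge 0$ guarantees no aggravation), goes through as in the paper.
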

\begin{proof}
The correctness follows by closely noticing that we maintain invariants
\iref{in:records}--\iref{in:noFinChild}, which in turn imply that
\textsc{DeleteMin}$(Q)$ and \textsc{FindMin}$(Q)$ always return the minimum
element of~$Q$.
The~$\bigO(1)$ worst case I/O bound is trivial as every operation only
accesses~$\bigO(1)$ records. Although \textsc{Bias} is recursive, notice that in
the case where $|B(Q)| > 0$, \textsc{Bias} only calls itself after making
$|B(Q)| = 0$, so it will not end up in this case again.
We elaborate on all the operations that modify the I/O-CPQA in order to argue
for the amortized bounds:

\begin{confenv}
\noindent \underline{\textsc{DeleteMin}}: If after the call $|F(Q)| \geq b$
holds, no I/Os are incurred and the amortized cost of $\leq \frac{3}{b}$ pays
for increasing the potential. Otherwise $\Phi_F(|F(Q)|) \geq 3$ pays for any
I/Os to call \textsc{Fill} and \textsc{Bias}.

\noindent \underline{\textsc{CatenateAndAttrite}}: $|Q_1| < b$: If $|F'(Q_1)| +
|F(Q_2)| \leq 4b$ holds, $\Phi(|F(Q_1)|)$ pays for any increase in potential.
Else the new record of $C(Q_2)$ is paid for by  $\Delta ( \Phi_T )>1$.

$|Q_2| < b$: In cases (1) and (2) the potential decreases.  In case (3), the
potential does not change if $|L'(Q_1)| + |F(Q_2)| > 4b$. If $L'(Q_1)$ and
$F(Q_2)$ still contain $>3b$ elements, the change in potential is paid for by
$\Delta ( \Phi_T )>0$.

In the following cases, both $Q_1$ and $Q_2$ are large.  Since concatenating
them decreases by one  the number of large I/O-CPQA's, the potential decreases
by at least~$1$, which is enough to pay for any other I/Os also incurred by
\textsc{Bias} and \textsc{Fill}. So we only need to argue that the potential
does not increase in any of the cases.  In fact, in cases (1 - 3) the potential
only decreases.  In case (4), if we make one record, it is paid for by
$\Phi_F(|F(Q_2)|) \geq 1$. Otherwise the second record is paid for by
$\Phi_L(|L(Q_1)|) \geq 1$ if moreover $|L'(Q_1)| + |F(Q_2)| > 4b$ holds, or by
$\Phi_L(|L(Q_1)|) + \Phi_F(|F(Q_2)|) >2$ otherwise.

All I/Os in \textsc{Fill} and \textsc{Bias} have been paid for by a decrease in
potential caused by their caller. Thus, it suffices to argue that these
operations do not increase the potential.

\noindent \underline{\textsc{Fill}}: Indeed, $\Phi_F(|F(Q)|)$ only decreases,
when $|F(Q)| < b$ and $|Q| \geq b$ hold.

\noindent \underline{\textsc{Bias}}: Indeed, cases (1) and (2.1) do not create
new records.  Similarly for (2.2), unless $|l'| + |L(Q)| \leq 3b$ holds, where
$r$ pays for increasing the potential by $1$.  In (2.3) $\Phi_F(|F(Q)|)$ or
$\Phi_L(|L(Q)|)$ decreases.
\end{confenv}
%%%% Full version proof of amortization bound %%%%%
\begin{fullenv}

\noindent \underline{\textsc{DeleteMin}}: If $|F(Q)| \geq b$ holds after
deleting $\min(F(Q))$, then no I/Os are incurred and we only pay an amortized
cost of $\leq \frac{3}{b}$ for increasing the potential. Else $|F(Q)| = b-1$
holds, so $\Phi_F(|F(Q)|) \geq 3$ also holds, which pays for any I/Os in calling
\textsc{Fill} and \textsc{Bias}.

\noindent \underline{\textsc{CatenateAndAttrite}}: If $|Q_1| < b$ holds, then we
prepend the non-attrited elements $F'(Q_1)$ onto $F(Q_2)$. So if $|F'(Q_1)| +
|F(Q_2)| \leq 4b$ holds, then each element of $F(Q_1)$ has a potential of $3/b$,
which is higher than the potential for each element in $\Phi_F$. Thus
$\Phi(|F(Q_1)|)$ pays for any increase in potential. If instead $|F'(Q_1)| +
|F(Q_2)| > 4b$ holds, then $|F(Q_2)| > 3b$ holds, so
\begin{eqnarray*}
  \Delta \Phi_T
    &=&
  \( \frac{3|F(Q_1)|}{b} + \Phi_F(|F(Q_2)|) \) - \( 1+1 \) \\
    &\geq&
  \frac{|F'(Q_1)|}{b} + \frac{2(|F'(Q_1)| + |F(Q_2)|)}{b} - 7 > 1
\end{eqnarray*}
which pays for making the new first record of $C(Q_2)$.

If $|Q_2| < b$ holds, then we have three cases depending on how much of $Q_1$ is
attrited by $Q_2$. Let $e = \min(Q_2)$ and $r = (l,\cdot) = \last(Q_1)$:
\begin{enumerate}
  \item $e \leq \min(\last(D_{k_{Q_1}}(Q_1)))$: We discard $r$ which releases
    $1$ potential and have the four cases:
    \begin{enumerate}[label=\arabic*)]
      \item If $e \leq \min(F(Q_1))$:
        The potential decreases, because we only discard records.

      \item Else if $e \leq \max(\last(C(Q_1)))$: We prepend $F(Q_1)$ onto
        $C(Q)$ and discard records, which only decreases the potential, since
        $\Phi_F(x) \geq 1$ when $x \geq b$. Our calls to \textsc{Bias} and
        \textsc{Fill} are paid for as we discard $r$.

      \item Else if $e \leq \min(\first(B(Q_1)))$ or $e \leq
        \min(\first(D_1(Q_1)))$: We set $L(Q_1) = F(Q_2)$ and discard records,
        which only decreases the potential, since $\Phi_L(x) \leq \Phi_F(x)$ for
        all $x$.

      \item Else: If $|L'(Q_1)| + |F(Q_2)| \leq 4b$ we append $F(Q_2)$ to
        $L'(Q_1)$ and $\Phi_F(|Q_2|)$ pays. Else we make a new dirty queue with
        one new record, which costs 1 potential and 1 potential to cover the
        I/Os in \textsc{Bias}. The total potential difference is
        \begin{eqnarray*}
          \Delta \Phi_T
          &\geq&
          (\Phi_L(|L(Q_1)|) + \Phi(|Q_2|)) - (1+1) \\
          &\geq&
          \(\frac{2(|L'(Q_1)| + |F(Q_2)|)}{b} + \frac{|F(Q_2)|}{b}\) - 7
          \\
          &>& 1
        \end{eqnarray*}
    \end{enumerate}

  \item $e \leq \min(L(Q_1))$: We set $L(Q_1') = F(Q_2)$, which again only
    decreases the potential.

  \item $\min(L(Q_1)) < e$: If $|L'(Q_1)| + |F(Q_2)| > 4b$ holds, then if
    furthermore $|l'| < |l|$ we put the first $4b - |l'|$ elements of
    $L'(Q_1)$, $F(Q_2)$ and $l'$ into $l$, with no change in potential. If there
    are still more than $3b$ elements left in $L'(Q_1)$ and $F(Q_2)$, then we
    put the first $3b$ elements into a new last record of $D_{k_{Q_1}}(Q_1)$ for
    a cost of $1$ in potential and call \textsc{Bias} for a cost of $1$ for
    I/Os, and leave the remaining $\leq 2b$ elements in $L(Q_1)$ for a cost of
    $\leq 1$. All this is paid for, as the total decrease in potential is
    \begin{eqnarray*}
      \Delta \Phi_T
       &\geq&
      \( \Phi_L(|L(Q_1)|) + \Phi_F(|F(Q_2)|) \) - ( 1+1+1 ) \\
       &=&
      \frac{2|L(Q_1)|}{b} + \frac{3|F(Q_2)|}{b} - 8 \\
       &\geq&
      \frac{2(|L'(Q_1)| + |F(Q_2)|)}{b} + \frac{|F(Q_2)|}{b} - 8 > 0
    \end{eqnarray*}
\end{enumerate}
Both $Q_1$ and $Q_2$ are large in all the cases (\ref{it:Q1C}--\ref{it:D}),
hence when we concatenate them, we decrease the potential by at least~$1$, since
the number of large I/O-CPQA's decreases by one, which is enough to pay for any
other I/Os incurred also in \textsc{Bias} and \textsc{Fill}. So we only need to
argue that the potential does not increase in any of the cases.
\begin{enumerate}
  \item If $e \leq \min(F(Q_1))$: the potential decreases, since we discard
    $Q_1$.

  \item Else if $e \leq \max(\last(C(Q_1)))$: we prepend $F(Q_1)$ onto $C(Q_1)$
    and $F(Q_2)$ onto $C(Q_2)$, discard and move around records, which only
    decreases the potential, as $\Phi_F(x) \geq 1$ when $x \geq b$.

  \item Else if $e \leq \min(\first(B(Q_1)))$: we prepend $F(Q_2)$ onto
    $C(Q_2)$, discard and move around records, which only decreases the
    potentials, as $\Phi_F(x) \geq 1$ when $x \geq b$.

  \item Else: We make $L'(Q_1)$ and $F(Q_2)$ into the first one or two records
    of $C(Q_2)$. Since $Q_2$ is large, $|F(Q_2)| \geq b$ holds, and hence we
    have that $\Phi_F(|F(Q_2)|) \geq 1$. If we only make one new record,
    $\Phi_F(|F(Q_2)|)$ pays for it. If we make two records, then $|L'(Q_1)| +
    |F(Q_2)| > 4b$ holds. So if $|L'(Q_1)| \geq b$ moreover holds, then
    $\Phi_L(|L(Q_1)|) \geq 1$ pays for the other record. Else $|L'(Q_1)| < b$
    holds, but then $|F(Q_2)| > 3b$ also holds, so
    \begin{eqnarray*}
      && \Phi_L(|L(Q_1)|) + \Phi_F(|F(Q_2)|) \\
      && =
       \frac{|L(Q_1)|}{b} + \frac{2|F(Q_2)|}{b} - 5 \\
      && \geq
      \frac{|L'(Q_1)|+|F(Q_2)|}{b} + \frac{|F(Q_2)|}{b} - 5
      >
      2
    \end{eqnarray*}
    which pays for both new records.
\end{enumerate}

\noindent \underline{\textsc{InsertAndAttrite}}: The total cost is $\bigO(1/b)$
I/Os amortized, since creating a new I/O-CPQA with only one element and calling
\textsc{CatenateAndAttrite} only costs as much.

\noindent \underline{\textsc{Fill}}: Any I/Os incurred are prepaid by a decrease
in potential made in the procedure calling \textsc{Fill}, so we only need to
argue that the potential does not increase. If $|F(Q)| < b$ and $|Q| \geq b$
then we append at most $2b$ elements to $F(Q)$, hence $\Phi_F(|F(Q)|)$ will only
decrease.

\noindent \underline{\textsc{Bias}}: All I/Os have been paid for by a decrease
in potential caused by the caller of \textsc{Bias}. So we only need to argue
that the potential does not increase because of \textsc{Bias}.
\begin{enumerate}
  \item $|B(Q)| > 0$: We discard, move around and merge records, but we do not
    create new ones. Thus the potential will only decrease.

  \item $|B(Q)| = 0$: We follow the cases of \textsc{Bias}.
  \begin{enumerate}[label=\arabic*)]
    \item $k_Q > 1$: We again discard and move around records, and rearrange
      their elements, but we do not create new records, so the potential will
      only decrease.

    \item $k_Q = 1$: Let $r = (l,p) = \first(D_1(Q))$. If $\min(L(Q)) \leq
      \max(l)$ holds, we might append $l'$ onto $L(Q)$, but only if $|l'| +
      |L(Q)| \leq 3b$.  This will not increase the potential of $L(Q)$ by more
      than $1$, and $r$ pays for that. For the rest of the case we discard and
      move around records and rearrange their elements, but we do not create new
      records, so the potential only decreases.

    \item $k_Q = 0$: If we append the first $b$ elements of $L(Q)$ onto $F(Q)$,
      then $|F(Q)| \leq 2b$ holds, so $\Phi_F(|F(Q)|)$ can only decrease.
      Likewise, when taking at most $b$ elements from $L(Q)$, $\Phi_L(|L(Q)|)$
      will only decrease.
  \end{enumerate}
\end{enumerate}
\end{fullenv}
\end{proof}

\extraspacing\textbf{Catenating a set of I/O-CPQAs.} 
%\confcmt{Its proof is found in the full version.}
\fullcmt{Define the \textit{state}
of I/O-CPQA $Q$ to be $\Delta(Q)$ and the \textit{critical records} of $Q$ to
be the first three records of $C(Q)$, $\last(C(Q))$, $\first(B(Q))$,
$\first(D_1(Q))$, $\last(D_{k_{Q}}(Q))$ and $\last(\front(D_{k_{Q}}(Q)))$, if
it exists. Otherwise $\last(D_{k_{Q}-1}(Q))$ is critical.} The following lemma
is required by the dynamic structure of the next section.

\begin{lemma} \label{lem:seq_concats}
  A set of I/O-CPQAs $Q_{i}$ for $i\in[1,\ell]$ can be concatenated into a
  single I/O-CPQA without any access to external memory, by calling only
  \textsc{CatenateAndAttrite} operations, provided that for all $i$:
  \begin{enumerate}
    \item $\Delta (Q_{i})\geq 2$ holds, unless $Q_{i}$ contains only one
      record, in which case $\Delta (Q_{i})=0$ or $Q_{i}$ contains only two
      records, in which case $\Delta (Q_{i})=+1$ suffices.

    \item The critical records of $Q_i$ are loaded in main memory.
  \end{enumerate}
\end{lemma}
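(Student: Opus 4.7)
The plan is to proceed by induction on $j \in [2,\ell]$, showing that each call $R_j := \textsc{CatenateAndAttrite}(R_{j-1}, Q_j)$ with $R_1 = Q_1$ can be carried out without any external-memory access, and that the critical records of $R_j$ (together with those of any pointer-reachable sub-I/O-CPQA) stay resident in main memory so that the next round proceeds under the same two hypotheses. The two assumptions of the lemma serve as the base case and the residency guarantee that is propagated by the induction.

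First, I would walk through the branches of \textsc{CatenateAndAttrite}$(R_{j-1}, Q_j)$ and check that the branch-selection tests reduce to comparisons among $\min(F(R_{j-1}))$, $\max(\last(C(R_{j-1})))$, $\min(\first(B(R_{j-1})))$, $\min(\first(D_1(R_{j-1})))$, $\min(L(R_{j-1}))$ and $\min(Q_j) = \min(F(Q_j))$, each of which is read from a critical record of $R_{j-1}$ or of $Q_j$. The ensuing restructuring in every case (prepending $F(R_{j-1})$ to $C(R_{j-1})$, forming a new $2b$-record from the tail of $F(Q_j)$, turning the attrition-filtered residue of $L(R_{j-1})$ together with $F(Q_j)$ into a new dirty deque, rebinding pointers, etc.) modifies only those same critical records and a bounded number of freshly constructed records whose contents are computed entirely from elements already in memory. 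Hence the body of \textsc{CatenateAndAttrite} runs with zero I/Os.

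Next, I would handle the auxiliary calls to \textsc{Bias}$(R_j)$ and \textsc{Fill}$(R_j)$. \textsc{Fill} reads only $F(R_j)$ and $\first(C(R_j))$, both critical. \textsc{Bias} in cases~(1), (2.1) and (2.3) touches only $\first(B)$, $\last(C)$, $\first(D_1)$, $\last(D_{k_Q-1})$, $\first(D_{k_Q})$, $\last(\front(D_{k_Q}))$ and $L$, each of which appears explicitly in the critical set. Hypothesis~(1), namely $\Delta(Q_j) \geq 2$ (or the stated degenerate substitute for one- or two-record $Q_j$), ensures that the at most two \textsc{Bias} calls — each improving the inequality of Invariant~\iref{in:ineq} by at least $1$ while the concatenation aggravates it by at most $2$ — leave $\Delta(R_j) \geq 0$ without having to cascade to non-critical records. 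Since the resulting critical records of $R_j$ are in turn drawn from critical records of $R_{j-1}$ and $Q_j$, the induction hypothesis is preserved.

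The main obstacle I foresee is subcase~(2.2) of \textsc{Bias}, where the pointer $p$ of $\first(D_1)$ is non-nil and the referenced sub-I/O-CPQA $Q'$ must be merged into $R_j$. This step reads $\min(Q')$, $\min(\first(C(Q')))$, $\max(\last(C(Q')))$, $\min(\first(B(Q')))$ and $\min(\first(D_1(Q')))$, i.e.\ the critical records of $Q'$. To ensure these are resident, the inductive invariant must be strengthened so that it covers not only the roots $Q_i$ but also every I/O-CPQA reachable by pointers from them; checking branch by branch that neither \textsc{CatenateAndAttrite} nor \textsc{Bias} ever installs a pointer to a non-resident I/O-CPQA closes the induction. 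Once this strengthened invariant is in place, the whole sequence of $\ell-1$ concatenations executes entirely in memory, collapsing the $\bigO(1)$ I/O bound of Theorem~\ref{thm:iocpqa} to $0$ I/Os per concatenation.
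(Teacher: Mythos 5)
There is a genuine gap, and it stems from the direction in which you accumulate. The paper concatenates \emph{right to left}: it sets $Q^{\ell}=Q_\ell$ and forms $Q^{i}=\textsc{CatenateAndAttrite}(Q_i,Q^{i+1})$, so that in every call the \emph{first} argument is a fresh $Q_i$ with $\Delta(Q_i)\geq 2$. This matters because the result of \textsc{CatenateAndAttrite} inherits its clean/dirty deques (and hence its state) from the first argument, while the second argument contributes essentially one record to a new dirty deque plus a pointer; its slack does not enter $\Delta$ of the result at all. With the fresh queue in first position, the aggravation by at most $2$ leaves the result at state $\geq 0$, and only \emph{one} unit of \textsc{Bias}-style improvement is needed to restore the paper's intermediate invariant (state $\geq +1$); the paper then checks case by case that this single step touches only records that are critical precisely because of the right-to-left correspondence (e.g.\ $\last(D_{k_{Q^i}-1})$ is $\last(D_{k_{Q^{i+1}}})$ and $\first(D_{k_{Q^i}})$ is $\first(C(Q^{i+1}))$). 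In your left-to-right scheme $R_j=\textsc{CatenateAndAttrite}(R_{j-1},Q_j)$, the first argument is the \emph{intermediate} $R_{j-1}$, for which you only claim $\Delta(R_{j-1})\geq 0$. Hypothesis~(1) of the lemma then buys you nothing: $\Delta(Q_j)\geq 2$ is attached to the second argument, whose deques are absorbed behind a pointer. A concatenation in cases 2--4 then yields $\Delta(R_j)=-2$, which violates Invariant~\iref{in:ineq} and forces \emph{two} units of improvement; the second unit needs records such as $\last(D_{k-2})$, $\first(D_{k-1})$, or the fourth record of $C$, none of which are in the critical set. So the claim that ``at most two \textsc{Bias} calls \ldots leave $\Delta(R_j)\geq 0$ without having to cascade to non-critical records'' does not hold, and $\geq 0$ is in any case too weak an invariant to propagate.

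Your proposed fix for the merging subcase of \textsc{Bias} --- strengthening the residency invariant to cover every I/O-CPQA reachable by pointers from the $Q_i$ --- is also not available: the lemma's hypothesis~(2) only grants residency of the critical records of the top-level $Q_i$, and keeping all pointer-reachable descendants resident could require $\Omega(n/b)$ blocks, which would break the memory assumption $M=\Omega(\ell b)$ under which the lemma is applied in Section~\ref{sec:skyline}. The paper sidesteps this entirely: by maintaining state $\geq +1$ on each intermediate $Q^i$ and always catenating a fresh, slack-rich queue on the left, it never needs to invoke the recursive/merging branches of \textsc{Bias} on non-resident data. To repair your argument you would essentially have to reverse the order of accumulation and introduce the paper's intermediate-state invariant.
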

\begin{fullenv}
\begin{proof}
In fact, the algorithm considers the I/O-CPQAs~$Q_{i}$ in decreasing index~$i$
(from right to left). It first sets $Q^{\ell}=Q_\ell$ and constructs the
temporary I/O-CPQA $Q^{\ell-1}$ by calling
\textsc{CatenateAndAttrite}($Q_{\ell-1}$,$Q^{\ell}$). After the end of the
sequence of operations, the resulting I/O-CPQA~$Q^{1}$ is the concatenation of
all I/O-CPQAs~$Q_{i}$.

To avoid any I/Os during the sequence of \textsc{CatenateAndAttrite}s, we
ensure that~\textsc{Bias} and \textsc{Fill} are not called, and that no more
than the critical records need to be already loaded into memory. 
%Notice that a positive state implies that the \textsc{Bias}$(Q)$ operation will
%be called first after the inequality of Invariant~\iref{in:ineq} for~$Q$ is
%aggravated by~$\Delta(Q)+1$. A negative state implies that the
%\textsc{Bias}$(Q)$ operation needs to be called~$\Delta(Q)$ times in order to
%restore the inequality of~\iref{in:ineq} for~$Q$.  
To avoid calling~\textsc{Bias} we maintain the following invariant during the
sequence of catenations.

\begin{enumerate}[label=I.\arabic*)]
  \setcounter{enumi}{9}
  \item \label{in:seq_cat} Each I/O-CPQAs $Q^{i},i\in[1,\ell]$ constructed
    during the sequence of catenations is in state at least $+1$ unless it
    consists only of the front buffer in which case it is in state $0$.
\end{enumerate}

We prove the invariant inductively on the sequence of operations. Let the
invariant hold for $Q^{i+1}$ and let $Q^{i}$ be constructed by
\textsc{CatenateAndAttrite}($Q_{i}$,$Q^{i+1}$). In the following, we parse the
cases of the \textsc{CatenateAndAttrite} algorithm assumming that
$e=\min(Q^{i+1})$.

If $|Q_i| < b$ holds, then \textsc{Bias} is not invoked and the state of
$Q^{i+1}$ remains $\geq 1$ or is increased by $1$.

If $|Q^{i+1}|<b$ and $|Q_i| \geq b$ then we have to go through the three
respective cases.
\begin{enumerate}
  \item If $e \leq \min(r)$: if record $r$ exists then the state of $Q_i$ is
    increased by $1$ and it becomes $\geq 3$.
  \begin{enumerate}[label=\arabic*)]
    \item If $e \leq \min(F(Q_1))$: Since \textsc{Bias} is not called
      \iref{in:seq_cat} holds trivially.

    \item Else if $e \leq \max(\last(C(Q_1)))$: $Q^i$ is constructed as before
      and we then do the following. Since $k_{Q^{i}}=0$, we take
      out the the first two records of $B(Q^i)$ which are critical since they
      came from $F(Q_i)$ and $\first(Q_i)$. Then, we fill $F(Q^i)$ with one of
      these records provided that no attrition was enforced by $L(Q^i)$. In
      this case, the state of $Q^i$ is $\geq 1$ and the invariant holds. If
      attrition took place then $B(Q^i)$ is discarded and the at most two
      records of $C(Q^i)$ and the record in $L(Q_i)$ are combined (notice that
      all of them are critical) to make $Q^i$ consisting only of records in
      $F(Q^i)$ and $C(Q^i)$ and thus \iref{in:seq_cat} holds.

    \item Else if $e \leq \min(\first(B(Q_1)))$ or $e \leq
      \min(\first(D_1(Q_1)))$: Since \textsc{Bias} is not called 
      \iref{in:seq_cat} holds trivially.

    \item Else: the state at the end is $\geq 0$, since the state of $Q_i$ was
      $\geq 2$ by the induction hypothesis. To restore the invariant that the
      state of $Q^i$ should be $\geq 1$ we check whether
      $\last(D_{k_{Q_i}}(Q^i))$ is attrited or not by the new dirty queue.
      Since both are critical this can be done with no I/Os and thus the state
      of $Q^i$ is increased to $\geq 1$.
  \end{enumerate}

  \item Else if $e \leq \min(L(Q_1))$: since we do not call \textsc{Bias}
    \iref{in:seq_cat} holds trivially.

  \item Else $\min(L(Q_1)) < e$: the state of $Q_i$ is only reduced by $1$ which
    makes the state of $Q^i$ being $\geq 1$ which is sufficient to maintain
    \iref{in:seq_cat}.
\end{enumerate}

%Maybe I should talk only about Bias operation and define a safe Bias and show
%that all Bias in this case are safe. However, we have to change the Bias
%operation because of what follows.

Now we move to the more general case where $|Q_1| \geq b$ and $|Q_2| \geq b$.
\begin{enumerate}
  \item $e \leq \min(F(Q_1))$: we do not call \textsc{Bias} so 
    \iref{in:seq_cat} holds trivially.
    
  \item $e \leq \max(\last(C(Q_1)))$: To increase the state of $Q^i$ from $-2$
    to $\geq 1$ we do as follows. We extract the $4$ records of $B(Q^i)$, which
    incurs no I/Os since all four of them are critical (the first was from
    $F(Q_i)$ and the other three from the first $3$ critical records of
    $C(Q_i)$). If no attrition was enforced by $e=\min(Q^{i+1})$, then the state
    of $Q^i$ is $\geq 1$. If attrition is enforced then there are not that many
    records in $B(Q^i)$, then $Q^{i+1}$ is reconstructed (just prepend
    $(l,\cdot)$ to $C(Q^{i+1})$ and then prepend the non-attritted records (at
    most $4$ records) from $Q_i$ to $C(Q^{i+1})$ remaking $F(Q^{i+1})$. At the
    end of this process, the new CPQA $Q^i$ has state at least equal to
    $Q^{i+1}$ which is $\geq 1$ by induction and \ref{in:seq_cat} holds.

  \item $e \leq \min(\first(B(Q_1)))$ or $e \leq \min(\first(D_1(Q_1)))$: we
    will only consider the case where $k_{Q_i}=0$ before the concatenation,
    since otherwise the state of $Q^i$ will be equal or larger to the state of
    $Q_i$, which by the inductive hypothesis is $\geq 2$. Since $Q_i$ must be
    in state $\geq 2$, there are either at least three records in $C(Q_i)$, in
    which case \iref{in:seq_cat} holds and the case is terminated. Otherwise,
    exactly two records exist in $C(Q_i)$ and $B(Q_i)$ is non-empty or there
    are less than two records in $C(Q_i)$ (so the state of $Q_i$ is $\geq 1$ or
    $0$) and $B(Q_i)$ is empty. In the case where two records exist in $C(Q_i)$
    and $B(Q_i)$ is non-empty: if $\first(B(Q_i))$ is not attritted by $e$ we
    put this record into $C(Q_i)$ and now the final I/O-CPQA $Q^{i}$ has state
    $\geq 1$. Otherwise, we restructure $Q^{i+1}$ (as done in the previous case)
    and prepend the non-attrited elements of $Q_i$ onto $Q^{i+1}$ resulting in
    an I/O-CPQA with state at least $\geq 1$ since this was the state of
    $Q^{i+1}$. We follow exactly the same approach in the latter case where
    $C(Q_i)$ contains less than two records and $B(Q_i)$ is empty.

  \item Else: the algorithm works exactly as before with the following
    exception. At the end, $Q^i$ will be in state $\geq 0$, since we added the
    deque $D_{k_{Q^{i+1}}+1}$ with a new record and the inequality of
    \iref{in:ineq} is aggrevated by $2$. To restore the invariant we apply
    Case~\ref{it:Beq0}.  \ref{it:KQgt1} of~\textsc{Bias}. This step requires
    access to records~$\last(D_{k_{Q^i}-1})$ and $\first (D_{k_{Q^i}})$. These
    records are both critical, since the former corresponds to $\last
    (D_{k_{Q^{i+1}}})$ and the latter to $\first(C(Q^{i+1}))$. In addition,
    \textsc{Bias}$(Q^{i+1})$ need not be called, since by the invariant,
    $Q^{i+1}$ was in state $\geq 1$ before the removal of $\first(C(Q^{i+1}))$.
    In this way, we improve the inequality for $Q^i$ by $1$ and
    \iref{in:seq_cat} holds. 
\end{enumerate}
\end{proof}
\end{fullenv}

%%%%%%%%%%%%%%%%%%%%%%%%%%%%%%%%%%%%%%%%%%%%%%%%%%%%%%%%%%%%%%%%%%%%%%%%%%%%%%%
\subsection{Final Dynamic Top-Open Structure} \label{sec:skyline}
%%%%%%%%%%%%%%%%%%%%%%%%%%%%%%%%%%%%%%%%%%%%%%%%%%%%%%%%%%%%%%%%%%%%%%%%%%%%%%%

The data structure consists of a base tree, implemented as a dynamic $(a,
2a)$-tree where the leaves store between $k$ and $2k$ elements. We set $a=\lceil
2B^\epsilon\rceil$ and $k=B$, for a given $0 \leq \epsilon \leq 1$.  The base
tree indexes the $<_x$-ordering of $\attr{P}$, and is augmented with confluently
persistent I/O-CPQAs with buffer size $b=B^{1-\epsilon}$ as secondary
structures. In particular, after constructing the base tree, we augment it with
secondary I/O-CPQAs in a bottom-up manner, as follows. For every leaf we make
one I/O-CPQA over its elements, and execute an appropriate amount of
\textsc{Bias} operations, such that the state of the I/O-CPQA satisfies
Lemma~\ref{lem:seq_concats}. We associate the I/O-CPQA with the leaf.  In a
second pass over the leaves, we gather its critical records into a
\textit{representative block} in its parent. The procedure continues one level
above. For every internal node $u$, we access the representative blocks that
contain the critical records of the children I/O-CPQAs of $u$, and
\textsc{CatenateAndAttrite} them into a new I/O-CPQA as implied by
Lemma~\ref{lem:seq_concats}. We execute \textsc{Bias} on the I/O-CPQA enough
times such that its state also satifies Lemma~\ref{lem:seq_concats}.  We
associate the I/O-CPQA with $u$. After the level has been processed, we create
the representative blocks for I/O-CPQAs associated with the nodes of the level,
in the same way as described above. The augmentation ends at the root node of
the base tree. We will ensure that our algorithms access the I/O-CPQA associated
with a node through the representative block stored at the parent of the node.
Thus, it will suffice to explicitly store only the representative blocks in
every internal node and not its associated I/O-CPQA.

Since every leaf contains $\bigO(B)$ elements, the base tree has $\bigO(n/B)$
leaves and thus also $\bigO(n/B)$ internal nodes. Every internal node has
$\Theta(B^\epsilon)$ children, each associated with an I/O-CPQA with $\bigO(1)$
critical records of size $\bigO(B^{1-\epsilon})$. Thus the representative blocks
stored in the internal node occupy $\bigO(1)$ blocks of space. Thus the
structure occupies $\bigO(n/B)$ blocks in total. Assume that $\attr{P}$ is
already sorted by the $<_x$-ordering. The leaves' I/O-CPQAs  are created in
$\bigO(1)$ I/Os, since they contain at most $\bigO(B)$ elements. All
representative blocks are created in $\bigO(n/B)$ I/Os. To create the internal
nodes' I/O-CPQAs, we need only $\bigO(1)$ I/Os to access the representative
blocks and to execute \textsc{Bias} on the resulting I/O-CPQA. Its
representative blocks residing in memory thus are written on disk in $\bigO(1)$
I/Os. Thus the total preprocessing cost is $\bigO(n/B)$ and the structure is
SABE.

\extraspacing\textbf{Updates.} To insert (resp. delete) a point $p$ into (resp.
from)~$P$, we insert (resp. delete) $\attr{p} = (\attr{x}_p, \attr{y}_p)$ in the
structure. In particular, we first find the leaf to insert (resp. delete) that
contains the predecessor of $\attr{x}_p$ (resp. contains $\attr{x}_p$), by a
top-down traversal of the path from the root of the base tree. For every node
$u$ on the path, we also discard the part of its representative block
corresponding to the child that the search path goes into, and $u$'s associated
I/O-CPQA by executing in reverse the operations that created it. Next we insert
(resp. delete) $\attr{p}$ into (from) the accessed leaf, and rebalance the base
tree by executing the appropriate splits and merges on the nodes along the path
in a bottom-up manner. Moreover, we recompute the I/O-CPQA of every accessed
node on the path, as described above. The total update I/Os are
$\bigO(\log_{2B^\epsilon} (n/B))$ in the worst case, since we spend $\bigO(1)$
I/Os to rebalance every accessed node and to recompute its secondary structures.

\extraspacing\textbf{Queries.} To report the skyline points of $P$ that reside
within a given top-open query range $[\alpha_1,\alpha_2] \times [\beta,
\infty[$, we first traverse top-down the two search paths $\widetilde{\pi_1} =
\pi \pi_1$ and $\widetilde{\pi_2} = \pi \pi_2$ from the root of the base tree to
the leaves~$\ell_1$ and~$\ell_2$ that contain points of $\attr{P}$ whose
$<_x$-ordering succeed and precede the query parameters $\alpha_1$ and
$\alpha_2$, respectively. Let node $u$ be on the path $\pi_1 \cup \pi_2$, and
let $c(u)$ be the children nodes of $u$ whose subtrees are fully contained
within $[\alpha_1,\alpha_2]$. For every $u$, we load its representative block
into memory in order to access the critical records of the I/O-CPQAs associated
with $c(u)$ and to \textsc{CatenateAndAttrite} them into a temporary I/O-CPQA,
as implied by Lemma~\ref{lem:seq_concats}. We consider the temporary I/O-CPQAs
of nodes $u$ and the I/O-CPQAs of the leaves~$\ell_1$ and~$\ell_2$ from right to
left, and we \textsc{CatenateAndAttrite} them into one auxiliary I/O-CPQA. The
I/O-CPQAs for $\ell_1$ and $\ell_2$ are created only on the points within the
$x$-range $[\alpha_1,\alpha_2]$ in $\bigO(1)$ I/Os.

To report the skyline points within the query range, we call \textsc{DeleteMin}
on the auxiliary I/O-CPQA. The procedure stops as soon as a point with
$\attr{y}_p > -\beta$ is returned, or when the auxiliary I/O-CPQA becomes empty.

There are $\bigO(\log_{2B^\epsilon} (n/B))$ nodes on $\pi_1 \cup \pi_2$ and we
spend $\bigO (1)$ I/Os to access the representative block of each node. After
this, the construction of the auxiliary I/O-CPQA costs $\bigO(\log_{2B^\epsilon}
(n/B))$ I/Os. Reporting the $k$ output points costs $\bigO
(\frac{k}{B^{1-\epsilon}}+ 1)$ I/Os. Therefore the query takes
$\bigO(\log_{2B^\epsilon}(n/B)+ \frac{k}{B^{1-\epsilon}})$ I/Os in total.  We
conclude that:

\begin{theorem} \label{thm:3sided}
  There is an indivisible linear-size dynamic data structure on $n$ points in
  $\real^2$ that supports top-open range skyline queries in~$\bigO
  (\log_{2B^{\epsilon}} (n/B) + k/B^{1-\epsilon})$ I/Os when $k$ points are
  reported, and updates in~$\bigO(\log_{2B^{\epsilon}} (n/B))$ I/Os for any
  parameter~$0 \leq \epsilon \leq 1$. The structure can be constructed in
  $\bigO(n/B)$ I/Os, assuming an initial sorting on the input points'
  $x$-coordinates.
\end{theorem}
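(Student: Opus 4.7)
The plan is to verify the four claims of Theorem~\ref{thm:3sided} (space, query, update, and construction cost) by assembling the base-tree decomposition just described with the guarantees of the confluently persistent I/O-CPQA (Theorem~\ref{thm:iocpqa}) and the zero-I/O sequential catenation primitive (Lemma~\ref{lem:seq_concats}). The first step is to argue correctness of the query. Under the mirror transformation $\attr{P} = \{(x_p, -y_p)\}$, a point $p \in P$ dominates a point $q \in P$ iff $\attr{x}_q < \attr{x}_p$ and $\attr{y}_q \geq \attr{y}_p$, which in PQA terminology says that $\attr{q}$, inserted earlier in $<_x$-time, is attrited by $\attr{p}$. Hence the non-attrited elements of an I/O-CPQA built over the $<_x$-ordering of any subset $\attr{S} \subseteq \attr{P}$ are in bijection with the skyline of the corresponding $S \subseteq P$. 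Applied to $S = P \cap ([\alpha_1,\alpha_2] \times \real)$, the auxiliary I/O-CPQA produced by concatenating (right-to-left) the I/O-CPQAs of the $\bigO(\log_{2B^\eps}(n/B))$ nodes hanging off the two search paths $\widetilde{\pi_1},\widetilde{\pi_2}$ contains exactly this skyline. Repeatedly calling \textsc{DeleteMin} emits its points in ascending $\attr{y}$-order (i.e.\ descending $y$-order), and halting as soon as a point with $\attr{y} > -\beta$ is returned yields precisely the skyline of $P \cap Q$.

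Next I would handle the space bound. The base tree has $\bigO(n/B)$ leaves, each storing at most $2B$ points; every internal node has $\Theta(B^\eps)$ children and only stores one representative block per child containing a constant number of critical records of size $\bigO(B^{1-\eps})$ each, so its local storage is $\bigO(1)$ blocks. Summing over all $\bigO(n/B)$ nodes yields linear space. The SABE construction then processes levels bottom-up: each leaf builds its I/O-CPQA in $\bigO(1)$ I/Os over $\bigO(B)$ input points, and each internal node reads only the $\bigO(1)$ representative blocks of its children and invokes Lemma~\ref{lem:seq_concats} to form its I/O-CPQA in $\bigO(1)$ I/Os, with enough explicit \textsc{Bias} calls at the end to raise the state of the new I/O-CPQA to $\Delta \geq 2$. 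The total preprocessing cost is $\bigO(n/B)$ I/Os, matching linear sorting.

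For the query cost, each of the $\bigO(\log_{2B^\eps}(n/B))$ nodes on $\pi_1 \cup \pi_2$ contributes $\bigO(1)$ I/Os to load its representative block and to prepare the temporary I/O-CPQA over its ``middle'' children via Lemma~\ref{lem:seq_concats}; the right-to-left \textsc{CatenateAndAttrite} sequence along the two paths then costs $\bigO(\log_{2B^\eps}(n/B))$ I/Os, since each \textsc{CatenateAndAttrite} is $\bigO(1)$ worst case by Theorem~\ref{thm:iocpqa} once the critical records are in memory, which they are by construction of the representative blocks. Reporting the $k$ skyline points via \textsc{DeleteMin} on the confluently persistent auxiliary I/O-CPQA costs $\bigO(1 + k/B^{1-\eps})$ I/Os because each \textsc{DeleteMin} is $\bigO(1/b) = \bigO(1/B^{1-\eps})$ amortized (with the standard pinning of a constant number of blocks of the unique auxiliary I/O-CPQA in main memory). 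The total query cost is $\bigO(\log_{2B^\eps}(n/B) + k/B^{1-\eps})$. Updates follow the same path: at each of the $\bigO(\log_{2B^\eps}(n/B))$ nodes visited, we discard the currently stored I/O-CPQA by replaying the construction in reverse, perform the $(a,2a)$-tree rebalancing in $\bigO(1)$ I/Os, and rebuild the I/O-CPQA from the (updated) representative block using Lemma~\ref{lem:seq_concats} in another $\bigO(1)$ I/Os, giving $\bigO(\log_{2B^\eps}(n/B))$ worst case I/Os per update.

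The main obstacle I anticipate lies in making the use of Lemma~\ref{lem:seq_concats} airtight: the lemma requires every I/O-CPQA $Q_i$ in the chain to be in state $\Delta(Q_i) \geq 2$ (or be one of the trivial small exceptions) \emph{and} to have its critical records resident in memory. The plan for the query is to load, for each visited node, only its representative block, which by design stores precisely the critical records of its children; for the chain along $\pi_1 \cup \pi_2$ we must further show that the intermediate I/O-CPQAs produced by each \textsc{CatenateAndAttrite} remain in state $\geq 2$ so that the next catenation in the sequence is again zero-I/O. This is exactly the invariant proved inductively in Lemma~\ref{lem:seq_concats} and is preserved by our construction invariant that every stored node-I/O-CPQA is left in state $\geq 2$ after its \textsc{Bias} ``top-up'', so the composition goes through. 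Once this bookkeeping is verified, all four bounds of Theorem~\ref{thm:3sided} follow from the accounting above.
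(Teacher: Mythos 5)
Your proposal is correct and follows essentially the same route as the paper: the same $(2B^\epsilon,4B^\epsilon)$-base tree over the mirrored point set, the same representative blocks of critical records, the same invocation of Lemma~\ref{lem:seq_concats} with a \textsc{Bias} ``top-up'' to keep stored I/O-CPQAs in state $\geq 2$, and the same charging of query/update/construction costs (with confluent persistence handling the destructiveness of the catenations during queries). The only nitpick is that inside the catenation chain the lemma's inductive invariant only guarantees state $\geq 1$ for the intermediate queues (state $\geq 2$ is required of the \emph{inputs} $Q_i$), but since you invoke the lemma as a black box for the whole chain this does not affect the argument.
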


%%%%%%%%%%%%%%%%%%%%%%%%%%%%%%%%%%%%%%%%%%%%%%%%%%%%%%%%%%%%%%%%%%%%%%%%%%%%%%%
\section{General Range Skyline Queries} \label{sec:4sided}
%%%%%%%%%%%%%%%%%%%%%%%%%%%%%%%%%%%%%%%%%%%%%%%%%%%%%%%%%%%%%%%%%%%%%%%%%%%%%%%

We now move on to discuss the other variants of range skyline reporting that
are neither symmetric to nor subsumed by top-open queries. It would be nice if
they could be answered in $\bigO(\log_B n + k/B)$ I/Os by a linear-size
structure. Unfortunately, we will prove its impossibility. In fact, even
sub-polynomial query cost is already unachievable for anti-dominance queries,
let alone left-open and 4-sided queries. In fact, anti-dominance, left-open and
4-sided are just as hard as each other. Next, we will formally establish these
facts. \confcmt{Refer to the full version for the proofs.}

%%%%%%%%%%%%%%%%%%%%%%%%%%%%%%%%%%%%%%%%%%%%%%%%%%%%%%%%%%%%%%%%%%%%%%%%%%%%%%%
\subsection{A Query Lower Bound}
%%%%%%%%%%%%%%%%%%%%%%%%%%%%%%%%%%%%%%%%%%%%%%%%%%%%%%%%%%%%%%%%%%%%%%%%%%%%%%%

By making a crucial observation on a variant of the low-discrepancy point set
proposed by Chazelle and Liu \cite{CL04}, we manage to prove the next geometric
fact:
\begin{lemma} \label{lmm:4sided-lower}
  For any integer $\omega \ge 1$ and $\lambda \ge 1$, there is a set $P$ of
  $\omega^\lambda$ points in $\real^2$ and a set $G$ of $\lambda
  \omega^{\lambda-1}$ anti-dominance queries such that (i) each query in $G$
  retrieves $d$ points of $P$, and (ii) at most one point in $P$ is returned by
  two different queries in $G$ simultaneously.
\end{lemma}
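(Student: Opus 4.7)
The plan is to build $P$ and $G$ by induction on $\lambda$, imitating the low-discrepancy recursion of Chazelle--Liu~\cite{CL04}. For the base case $\lambda=1$, I place $\omega$ points on a descending diagonal, e.g.\ $\{(i,\omega+1-i) : 1 \le i \le \omega\}$, and let $G$ consist of a single anti-dominance query whose corner dominates all of them. Conditions~(i) and~(ii) are then immediate.

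For the inductive step, I would assume the claim for $\lambda-1$ and build $(P_\lambda, G_\lambda)$ from $\omega$ rescaled and shifted copies $P^{(1)}, \ldots, P^{(\omega)}$ of $P_{\lambda-1}$, placed so that their bounding boxes form a descending staircase in which any two boxes are incomparable under coordinate-wise dominance. Set $P_\lambda = \bigcup_i P^{(i)}$, so $|P_\lambda| = \omega^\lambda$. The query family $G_\lambda$ has two parts: (a) the $\omega$ inherited images of $G_{\lambda-1}$, one per copy---dominance-incomparability of the boxes lets us extend each inherited anti-dominance query to $\real^2$ without covering any point of a different copy, contributing $\omega \cdot (\lambda-1)\omega^{\lambda-2} = (\lambda-1)\omega^{\lambda-1}$ queries; and (b) $\omega^{\lambda-1}$ new \emph{cross queries}, one for each $c \in \{1, \ldots, \omega^{\lambda-1}\}$, where the $c$-th cross query is the minimal anti-dominance rectangle containing the ``$c$-th representative'' of each copy, under a labelling of the points of $P_{\lambda-1}$ that I carry inductively.

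Adding the two parts yields $|G_\lambda| = \lambda \omega^{\lambda-1}$, and every query retrieves $\omega$ points (by induction for (a), by construction for (b)). For condition~(ii), I would split the pairs of queries into four cases: two type-(a) queries inside the same copy share at most one point by induction; two type-(a) queries from different copies are disjoint; a type-(a) query from copy $i$ and any cross query share at most the single representative of copy $i$; and two distinct cross queries pick different representatives in every copy, hence are disjoint.

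The main obstacle is the geometric realisation of the cross queries: the upper-right corner of the $c$-th representatives across the $\omega$ copies must dominate exactly those $\omega$ representatives and no other point of $P_\lambda$. Following the Chazelle--Liu construction, I would label the points of $P_{\lambda-1}$ so that the representatives, as $i$ varies, form a descending antichain, and then scale the inter-copy gaps geometrically (by a factor growing sufficiently fast in $\lambda$) so that every non-representative in every copy sticks out of that corner in at least one coordinate. Maintaining this separation invariant through the recursion---so that a level-$\lambda$ cross query's corner never accidentally pulls in a lower-level non-representative---is the heart of the proof.
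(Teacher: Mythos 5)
Your recursive framing is in the right spirit---the paper's proof (following Chazelle and Liu) builds exactly the trie of depth $\lambda$ that your induction describes, with the root-level queries playing the role of your cross queries---but the geometric layout you chose for the $\omega$ copies makes the cross queries unrealizable, and this is precisely the step you flagged as the heart of the proof. An anti-dominance query is a two-sided range determined by a single corner $(a,b)$: it contains \emph{every} point with $x \le a$ and $y \le b$. If the copies occupy pairwise disjoint bounding boxes arranged as a descending staircase, then any corner capturing one representative from the top-left copy $P^{(1)}$ and one from the bottom-right copy $P^{(\omega)}$ must have $a$ at least the smallest $x$-coordinate of $P^{(\omega)}$ and $b$ at least the smallest $y$-coordinate of $P^{(1)}$; both bounds exceed every coordinate of every intermediate copy $P^{(j)}$, $1 < j < \omega$. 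So the quadrant swallows each intermediate copy entirely, and since the copies are mutually incomparable, the query reports the whole skyline of each $P^{(j)}$ --- far more than $\omega$ points, and shared in full with every other cross query, violating both (i) and (ii). No choice of representatives or of geometric scaling can repair this; your separation invariant (``every non-representative sticks out of the corner in at least one coordinate'') is unsatisfiable for the middle copies. It is forced by the topology of two-sided ranges, not by a failure to scale gaps aggressively enough.

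The fix is to interleave the copies rather than block them, which is what the paper's digit-reversal map $\rho_\omega$ accomplishes: copy $j$ consists of the points whose $x$-coordinates form a fixed residue class modulo $\omega$, while occupying a horizontal strip in $y$. A top-level cross query then consists of $\omega$ points with \emph{consecutive} $x$-coordinates $m\omega, \ldots, m\omega + \omega - 1$, one from each strip, forming a short descending staircase; the corner sits just beyond it. Every other point of the quadrant has $x$ strictly smaller than all group points and $y$ no larger than the topmost group point, hence is dominated by the group's leftmost point and never reaches the output. Your query count, your handling of the inherited queries, and your four-case analysis of condition (ii) all carry over once the layout is corrected, but as written the construction does not produce valid anti-dominance queries.
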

\begin{fullenv}
\begin{proof}
We first give some definitions in the context of Chazelle and
Liu~\cite{C90,CL04}. A query set~$\mathcal{Q}$ is
$(m,\omega)$-\textit{favorable} for a data set~$S$, if $\forall Q_i \in
\mathcal{Q}: |S \cap Q_i| \geq \omega$ and $\forall i_1 <i_2 \cdots< i_m: |S
\cap Q_{i_1}\cap \cdots \cap Q_{i_m}| = \bigO(1)$. Let~$S$ be a set of~$n$
points in~$\mathbb{R}^2$. Let~$\mathcal{Q} =\{Q_i \subseteq \mathbb{R}^2 | 1
\leq i \leq m \}$ be a set of~$m$ orthogonal 2-sided query ranges~$Q_i =
[q_{i_x}, \infty[ \times [q_{i_y}, \infty[ \subseteq \mathbb{R}^2$. Query range
$Q_i$ is the subspace of $\mathbb{R}^2$ that dominates a given point~$q_i\in
\mathbb{R}^2$ in the positive $x$- and $y$- direction (the ``upper-right''
quadrant defined by~$q_i$). Let $S_i=S \cap Q_i$ be the set of all points
in~$S$ that lie in the range~$Q_i$. An \textit{inverse anti-dominance reporting
query} $Q_i$ contains the points of $S_i$ that do not dominate any other point
in~$S_i$. This problem is equivalent to the anti-dominance problem by inverting
the coordinates of all points and of the query.

We will now construct a $(2,\omega)$-favorable query set $\mathcal{Q}$ and its
corresponding point set $S$, where $\omega>1$. Without loss of generality, we
assume that $n = \omega^\lambda$, where $\lambda>0$, since this restriction
generates a countably infinite number of inputs and thus the lower bound is
general. Let us write $0 \leq i < n$ as~$i=i^{(\omega)}_0 i^{(\omega)}_1 \ldots
i^{(\omega)}_{\lambda-1}$, where~$i^{(\omega)}_j$ is the~$j$-th digit of
number~$i$ in base~$\omega$. Then define
\[
  \rho_{\omega}(i)
  =
  (\omega-i^{(\omega)}_{\lambda-1}-1)(\omega-i^{(\omega)}_{\lambda-2}-1)\ldots
  (\omega-i^{(\omega)}_0-1)
\]
So~$\rho_{\omega}(i)$ is the integer obtained by writing~$0\leq i <n$
using~$\lambda$ digits in base~$\omega$, by first reversing the digits and then
taking their complement with respect to~$\omega$. We define the points of~$S$
to be the set $\{(i,\rho_{\omega}(i))| 0 \leq i <n\}$.\fullcmt{
Figure~\ref{fig:lower} shows an example with $\omega=4$ and $\lambda=2$.}

\fullcmt{

  \begin{figure}[htb]
    \centering
    \def\svgwidth{0.9\linewidth}
    % If you cannot compile then comment out the part that starts here
    \executeiffilenewer{./figure/LowerBound.svg}{./figure/LowerBound.pdf}
     {inkscape -z -D --file=./figure/LowerBound.svg
     --export-pdf=./figure/LowerBound.pdf --export-latex}
    % and ends here
    \tiny
    \arxivexcl{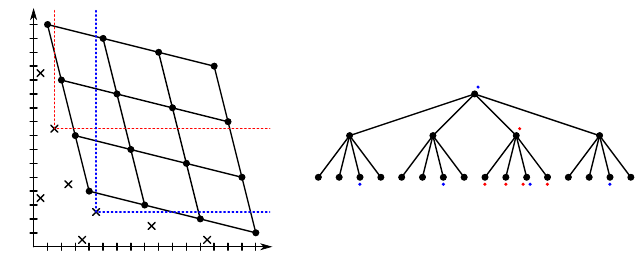}
    {\import{./figure/}{LowerBound.pdf_tex}}
    \normalsize
    \caption{(Left) An example for $\omega
  = 4$ and $\lambda = 2$, the point set $S$ is shown with circles and the the
  queries $\mathcal{Q}$ are shown with crosses. Two examples of queries are
  shown in red and blue. (Right) The corresponding trie that we used to generate
  the point set, here the red and blue queries are also shown, along with the
  internal node which generated the queries.}
    \label{fig:lower}
  \end{figure}

}

To define the query set~$\mathcal{Q}$, we encode the set of points
$\{\rho_\omega(i)|0 \leq i < n\}$ in a full trie structure of depth~$\lambda$.
Recall that $n = \omega^{\lambda}$. Notice that the trie structure is implicit
and it is used only for presentation purposes. Input points correspond to the
leaves of the trie and their $y$ value is their label at the edges of the trie,
where the edges at the root have labels $\omega-i^{(\omega)}_{\lambda-1}-1$ and
the edges at the leafs of the trie have labels $\omega-i^{(\omega)}_0-1$.
Let~$v$ be an internal node at depth~$d$ (namely~$v$ has~$d$ ancestors), whose
prefix~$v_0, v_1, \ldots, v_{d-1}$ corresponds to the path from the root~$r$ of
the trie to~$v$. We take all points in its subtree and sort them by~$y$. From
this sorted list we construct groups of size~$\omega$ by always picking
every~$\omega^{\lambda-d-1}$-th element starting from the smallest non-picked
element for each group. In this case, we say that the query is
\textit{associated} to node~$v$. Each such group corresponds to the output of a
query.\fullcmt{ See Figure~\ref{fig:lower} for an example.}

A node at depth~$d$ has~$\frac{n}{\omega^d}$ points in its subtree and thus it
defines at most~$\frac{n}{\omega^{d+1}}$ queries. Thus, the total number of
queries is:
\[
  \left|\mathcal{Q}\right| = \sum_{d=0}^{\lambda-1}{\omega^d
  \frac{n}{\omega^{d+1}}} = \sum_{d=0}^{\lambda-1}{\frac{n}{\omega}} =
  \frac{\lambda n}{\omega}
\]

In the following we prove that $\mathcal{Q}$ is~$(2,\omega)$-favorable. To
achieve that we need to prove that $\forall Q_i \in \mathcal{Q}: |S \cap Q_i|
\geq \omega$ and $\forall i_1 < i_2: |S \cap Q_{i_1}\cap Q_{i_2}| = \bigO(1)$.

First we prove that we can construct the queries so that they have output size
$\omega$. Assume that we take one of the groups of~$\omega$ points associated
to node~$v$ at depth~$d$. Let the $y$-coordinates of these points be
$\rho_{\omega}(i_1),\rho_{\omega}(i_2),\ldots,\rho_{\omega}(i_{\omega})$ in
increasing order. These have a common prefix of length~$d$ since they all
belong to the subtree of $v$. But we also choose these points so that
$\rho_{\omega}(i_j) -\rho_{\omega}(i_{j-1}) =\omega^{\lambda-d-1}, 1 < j \leq
\omega$. This means that these numbers differ only at the $\lambda - d -1$-th
digit. By inversing the procedure to construct these $y$-coordinates, the
corresponding $x$-coordinates~$i_j, 1 \leq j \leq \omega$ are determined. By
complementing we take the increasing sequence $\bar{\rho}_{\omega}(i_{\omega}),
\ldots, \bar{\rho}_{\omega}(i_2), \bar{\rho}_{\omega}(i_1)$, where
$\bar{\rho}_{\omega}(i_j)=\omega^\lambda-\rho_{\omega}(i_j)-1 $ and
$\bar{\rho}_{\omega}(i_{j-1}) -\bar{\rho}_{\omega}(i_{j})
=\omega^{\lambda-d-1}, 1 < j \leq \omega$. By reversing the digits we finally
get the increasing sequence of $x$-coordinates $i_{\omega},\ldots,i_2,i_1$,
since the numbers differ at only one digit. Thus, the $y$-coordinate of the
group of~$\omega$ points are decreasing as the $x$-coordinates increase, and as
a result a query~$q$ whose horizontal line is just below~$\rho_{\omega}(i_1)$
and the vertical line just to the left of~$\rho_{\omega}(i_{\omega})$ will
certainly contain this set of points in the query. In addition, there cannot be
any other points between this sequence and the horizontal or vertical lines
defining query $q$. This is because all points in the subtree of~$v$ have been
sorted with respect to~$y$, while the horizontal line is positioned just
below~$\rho_{\omega}(i_1)$, so that no other element lies in between. In the
same manner, no points to the left of~$\rho_{\omega}(i_{\omega})$ exist, when
positioning the vertical line of~$q$ appropriately. Thus, for each query~$q \in
\mathcal{Q}$, it holds that~$|S\cap q|=\omega$.

We now want to prove that for any two query ranges $p,q \in \mathcal{Q}$, $|S
\cap q \cap p| \leq 1$ holds. Assume that~$p$ and~$q$ are associated to
nodes~$v$ and~$u$, respectively, and that their subtrees are disjoint. That
is,~$u$ is not a proper ancestor or descendant of~$v$. In this case, $p$
and~$q$ share no common point, since each point is used only once in the trie.
For the other case, assume without loss of generality that~$u$ is a proper
ancestor of~$v$ ($u \neq v$). By the discussion in the previous paragraph, each
query contains~$\omega$ numbers that differ at one and only one digit.
Since~$u$ is a proper ancestor of~$v$, the corresponding digits will be
different for the queries defined in~$u$ and for the queries defined in~$v$.
This implies that there can be at most one common point between these
sequences, since the digit that changes for one query range is always set to a
particular value for the other query range.
\end{proof}
\end{fullenv}

We use the term {\em $(\omega, \lambda)$-input} to refer to the point set $P$
obtained in Lemma~\ref{lmm:4sided-lower} after $\omega$ and $\lambda$ have been
fixed. We deploy such input sets to derive:

\begin{lemma} \label{lmm:4sided-lower-detailed}
  Regarding anti-dominance queries on $n$ points in $\real^2$, any structure
  (in the indexability model) of at most $cn/B$ blocks must incur
  $\Omega((n/B)^{1/(25c)} + k/B)$ I/Os to answer a query in the worst case,
  where $c \ge 1$ is a constant and $k$ is the result size.
\end{lemma}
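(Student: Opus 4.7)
The plan is to use the favorable query family of Lemma~\ref{lmm:4sided-lower} together with a double-counting argument in the indexability model. Fix parameters $\omega$ and $\lambda$ (to be tuned last), so the lemma provides a set $P$ of $n=\omega^\lambda$ points in $\real^2$ and a family $G$ of $m=\lambda n/\omega$ anti-dominance queries, each reporting exactly $\omega$ points, with pairwise output intersection at most~$1$. Fix any indexability structure storing $P$ in at most $cn/B$ blocks, and let $t$ denote the worst-case number of blocks needed to cover any query's output. The $\Omega(k/B)$ summand will be immediate because the output size is exactly $k=\omega$, and the remainder of the plan is aimed at forcing $t=\Omega((n/B)^{1/(25c)})$.

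The core is a heavy-block double count. For each query $Q\in G$, pigeonholing the $\omega$ output points over the at most $t$ blocks in its cover exhibits a block containing at least $\omega/t$ of them; call such a block \emph{heavy for $Q$}, so the total number of (heavy block, query) incidences is at least $m$. To bound this count per block, fix a block $b$ with $|b|\le B$ that is heavy for queries $Q_{i_1},\ldots,Q_{i_s}$. Summing $|b\cap Q_{i_j}|\ge\omega/t$ gives $s\omega/t\le\sum_j |b\cap Q_{i_j}|$, and inclusion-exclusion combined with favorability gives $\sum_j|b\cap Q_{i_j}|\le B+\binom{s}{2}$. Together with the trivial bound $s\le B\lambda$ (each of the $|b|$ points participates in at most $\lambda$ queries of the Chazelle--Liu workload, one per trie level), solving the resulting quadratic pins $s$ down to $s=O(\max(Bt/\omega,\,\omega/t))$ in the relevant regime.

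Summing the per-block bound over all $cn/B$ blocks and comparing to the lower count $m$ yields
\[
  \frac{\lambda n}{\omega}\;\le\;\frac{cn}{B}\cdot O\!\left(\max\!\left(\frac{Bt}{\omega},\,\frac{\omega}{t}\right)\right),
\]
which reduces to the dichotomy $t=\Omega(\lambda/c)$ \emph{or} $t=O(c\omega^2/(\lambda B))$. The trivial bound $t\ge \omega/B$ (needed even to write down the $\omega$-sized output) rules out the second branch once the parameter condition $\lambda=\Omega(c\omega)$ holds, leaving $t=\Omega(\lambda/c)$. It remains to choose $\omega$ and $\lambda$ with $\omega^\lambda=n$ so that $\lambda/c=\Theta((n/B)^{1/(25c)})$ while $\lambda\gg c\omega$; a direct calculation shows that taking $\lambda$ polynomial in $(n/B)^{1/(25c)}$ and $\omega=n^{1/\lambda}$ correspondingly slowly-growing satisfies both constraints whenever $n/B$ is at least a sufficiently large constant (and the bound is trivial otherwise).

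The main obstacle is the last step: the one-shot inclusion-exclusion naturally produces a lower bound governed by the trie depth $\lambda=\log_\omega n$, and squeezing out a genuine polynomial dependence on $n/B$ requires placing $\omega$ and $\lambda$ exactly at the boundary where the per-block inequality binds while the trivial output bound still kills the small-$t$ branch. I would verify the parameter choice explicitly, tracking constants carefully, and confirm that the resulting exponent in the lower bound is indeed at least $1/(25c)$.
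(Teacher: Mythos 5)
Your overall framing (lower-bound the number of blocks $t$ needed to cover some query's output, using the $(\omega,\lambda)$-input of Lemma~\ref{lmm:4sided-lower}) is the right one, but there are two concrete gaps that break the argument. First, the per-block bound is wrong: the inequality $s\omega/t \le B + \binom{s}{2}$ is a quadratic that only excludes $s$ from the \emph{open interval between its two roots}; it does not imply $s = \bigO(\max(Bt/\omega,\,\omega/t))$. The upper root is roughly $\omega/t$, so the inequality permits any $s \ge \omega/t$, all the way up to your trivial bound $\lambda B$ (a single block really can be heavy for many queries whose traces in that block pairwise share one point each). Consequently the dichotomy ``$t=\Omega(\lambda/c)$ or $t=\bigO(c\omega^2/(\lambda B))$'' is not established, and the second branch cannot be dispatched by $t \ge \omega/B$. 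If you patch the large-$s$ branch with the correct bounds (e.g. $s\omega/t \le \lambda B$, or the Fisher-type bound $s\binom{\omega/t}{2}\le\binom{B}{2}$), the double count degenerates to statements like $t \ge 1/c$ or $t \ge \min(\lambda/(2c),\, \omega/\sqrt{\lambda B})$, which do not reach $(n/B)^{1/(25c)}$.

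Second, even granting the branch $t=\Omega(\lambda/c)$, your parameter choice cannot realize the claimed exponent. Since $n=\omega^\lambda$ and the construction needs $\omega\ge 2$, you always have $\lambda\le\log_2 n$, so requiring $\lambda = \Omega\bigl(c\,(n/B)^{1/(25c)}\bigr)$ confines you to the regime $n/B \le \polylog(n)$; you never produce a hard instance for, say, $n=B^{13c}$, where the lemma asserts a genuinely polynomial-in-$B$ bound. The paper's proof goes the other way: it fixes $\omega=B$ and the \emph{constant} $\lambda=12c+1.1$, and invokes the indexability (redundancy) theorem of \cite{HKMPS02} as a black box, whose content is the threshold dichotomy ``access overhead $A>\sqrt{B}/4$ or space $\ge\frac{\lambda}{12}\frac{n}{B}$.'' With space capped at $cn/B<\frac{\lambda}{12}\frac{n}{B}$, a query of output size $B$ must cost more than $\sqrt{B}/4$ I/Os, and the polynomial form $(n/B)^{1/(25c)}$ then falls out purely from the arithmetic $n=B^{12c+1.1}$, which makes $(n/B)^{1/(25c)}\le B^{12.1/25}<\sqrt{B}$. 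In other words, the polynomial lower bound does not come from making $\lambda$ large; it comes from the $\sqrt{B}$ access-overhead threshold combined with choosing $n$ polynomially related to $B$. If you want a self-contained proof you would need to reprove that threshold theorem, whose known proof is more delicate than the single inclusion--exclusion step you use.
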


\begin{fullenv}
\begin{proof}
  Let us first review the {\em indexability theorem} of \cite[Theorem
  5.5]{HKMPS02}. Let $\Lambda$ be a structure on a $(\omega, \lambda)$-input.
  Define the {\em access overhead} of $\Lambda$ as the smallest value $A$ that
  allows us to claim: $\Lambda$ answers any query with output size $\omega$ in
  $A\omega/B$ I/Os. In the context of Lemma~\ref{lmm:4sided-lower}, the
  indexability theorem states:
	\begin{center}
    {\em if $\omega \ge \fr{B}{2}$ and $A \le \fr{\sqrt{B}}{4}$, $\Lambda$ must
    use at least $\fr{\lambda}{12} \fr{\omega^\lambda}{B}$ blocks.}
	\end{center}

  Next, we will argue that if a structure has query complexity
  $\bigO((n/B)^{1/(25c)} + k/B)$, it must use strictly more than $cn/B$ blocks
  in the worst case. This implies that no structure of at most $cn/B$ blocks
  can guarantee the aforementioned query time, and hence, proving
  Lemma~\ref{lmm:4sided-lower-detailed}.

  Consider any structure with query time $\bigO((n/B)^{1/(25c)} + k/B)$. Let
  $\Lambda$ be the structure's instance on an $(\omega, \lambda)$-input where
  $\omega = B$ and $\lambda = 12c+1.1$. The I/O cost of $\Lambda$ answering a
  query with output size $k = \omega$ is at most
	\begin{eqnarray}
		&& \alpha((\omega^\lambda/B)^{1/(25c)} + \omega/B) \nn \\
		&=&
		\alpha B^{\fr{12c + 0.1}{25c}} + \alpha
		=
		\alpha B^{\fr{12}{25} + \fr{0.1}{25c}} + \alpha
		\le
		\alpha B^{\fr{12.1}{25}} + \alpha \nn
	\end{eqnarray}
  where $\alpha > 0$ is a certain constant. It thus follows that $A \le \alpha
  B^{\fr{12.1}{25}} + \alpha < \sqrt{B}/4$ when $B$ is sufficiently large.
  Therefore, by the indexability theorem, the structure must occupy at least
  $(\lambda/12) \omega^\lambda/B = (c + 1.1/12)n/B > cn/B$ blocks.
\end{proof}
\end{fullenv}

%The above lemma implies:

\begin{theorem} \label{thm:4sided-lower2}
  Regarding anti-dominance queries on $n$ points, any linear-size structure
  under the indexability model must incur $\Omega((n/B)^\eps + k/B)$ I/Os
  answering a query in the worst case, where $\eps > 0$ can be an arbitrarily
  small constant, and $k$ is the result size.
\end{theorem}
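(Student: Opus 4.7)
The plan is to derive Theorem~\ref{thm:4sided-lower2} as an immediate corollary of Lemma~\ref{lmm:4sided-lower-detailed}. The only real work is translating the statement ``linear-size'' into a concrete constant $c$ and choosing $\eps$ appropriately.

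First, I would unpack the meaning of ``linear-size structure'': by definition, any such structure occupies at most $cn/B$ blocks, where $c \ge 1$ is some constant determined by the structure (i.e., the hidden constant in the $\bigO(n/B)$ space bound). The crucial point is that $c$ is fixed once the structure is fixed, and does not depend on $n$.

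Next, I would invoke Lemma~\ref{lmm:4sided-lower-detailed} with this particular $c$. The lemma directly asserts that any such structure must pay $\Omega((n/B)^{1/(25c)} + k/B)$ I/Os in the worst case to answer an anti-dominance query. Setting $\eps \defeq 1/(25c)$, which is a strictly positive constant (depending only on $c$, hence only on the linear-space constant of the structure), yields the desired bound $\Omega((n/B)^\eps + k/B)$.

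Finally, I would justify the phrase ``arbitrarily small constant'' by observing that if the statement holds for $\eps = 1/(25c)$, then it also holds for every smaller positive constant $\eps' < 1/(25c)$, since $(n/B)^{\eps'} = \bigO((n/B)^{\eps})$ only in one direction, but more importantly $(n/B)^{1/(25c)} = \Omega((n/B)^{\eps'})$ whenever $\eps' \le 1/(25c)$ and $n \ge B$. Thus the lower bound is in fact \emph{stronger} than claimed: the exponent $\eps$ in the theorem may be replaced by any positive constant no larger than $1/(25c)$. There is no real obstacle in this proof—the heavy lifting was done inside Lemma~\ref{lmm:4sided-lower} (the Chazelle--Liu-style favorable query set) and Lemma~\ref{lmm:4sided-lower-detailed} (the indexability theorem application). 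The only subtlety worth noting is that one cannot make $\eps$ depend on $n$, since the indexability theorem requires $c$ (and hence $\eps$) to be a constant before the input size is chosen.
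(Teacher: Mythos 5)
Your proposal is correct and matches the paper's intent exactly: the paper gives no separate proof of Theorem~\ref{thm:4sided-lower2} because it is meant to follow immediately from Lemma~\ref{lmm:4sided-lower-detailed} by instantiating $c$ as the (fixed) constant in the structure's $\bigO(n/B)$ space bound and taking $\eps = 1/(25c)$ (or any smaller positive constant). Your remarks on why smaller $\eps$ only weakens the claim and why $\eps$ must be a constant independent of $n$ are both accurate.
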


\confcmt{
   \noindent {\bf Remarks.} In the full version, we utilize
   Lemma~\ref{lmm:4sided-lower} to prove that any internal memory pointer-based
   data structure that supports anti-dominance queries in~$\bigO(\log^{\bigO(1)}
   + k)$ time requires $\Omega(n\frac{\log{n}}{\log{\log{n}}})$ space. Thus, the
   dynamic structure of~\cite{BT11} for 4-sided queries occupies optimal space
   within a $\bigO(\log \log n)$ factor, for the attained query time.
 }

%%%%%%%%%%%%%%%%%%%%%%%%%%%%%%%%%%%%%%%%%%%%%%%%%%%%%%%%%%%%%%%%%%%%%%%%%%%%%%%
\subsection{Query-Optimal Structure}
%%%%%%%%%%%%%%%%%%%%%%%%%%%%%%%%%%%%%%%%%%%%%%%%%%%%%%%%%%%%%%%%%%%%%%%%%%%%%%%

The above lower bound is tight. In fact, we are able to prove a stronger fact: a
{\em 4-sided} query can be answered in $\bigO((n/B)^\eps + k/B)$ I/Os by a
linear-size dynamic structure. \confcmt{Deferring the details to the full
version, we claim:}

\begin{theorem} \label{thm:4sided-main}
  There is an indivisible linear-size structure on $n$ points in $\real^2$ such
  that, 4-sided range skyline queries can be answered in $\bigO((n/B)^\eps +
  k/B)$ I/Os, where $k$ is the number of reported points. The query cost is
  optimal under the indexability model. The structure can be updated in
  $\bigO(\log (n/B))$ amortized I/Os.
\end{theorem}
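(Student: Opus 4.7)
The plan is to combine the dynamic top-open structure of Theorem~\ref{thm:3sided} with an external range-tree skeleton over the $y$-axis. Fix the target $\epsilon > 0$. I would build a weight-balanced B-tree $T$ on the $y$-coordinates of $P$ with fanout $f = \Theta((n/B)^{\epsilon/2})$ and therefore constant height $h = O(1/\epsilon)$. At each internal node $u$ I attach a dynamic top-open structure from Theorem~\ref{thm:3sided} (instantiated with its internal parameter set to $0$) on the set $P_u$ of points in $u$'s subtree, which by that theorem answers top-open queries in $O(\log(n/B) + k_u/B)$ I/Os, supports updates in $O(\log(n/B))$ I/Os, and is SABE. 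Because each point lies in $O(h) = O(1)$ subtrees and each secondary structure is linear, the total space is $O(n/B)$.

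To answer a 4-sided query $[\alpha_1, \alpha_2] \times [\beta_1, \beta_2]$, I decompose $[\beta_1, \beta_2]$ into the $O(fh) = O((n/B)^{\epsilon/2}/\epsilon)$ canonical subtrees of $T$, ordered from the highest $y$-range downward, plus two partial endpoint leaves which can be scanned directly in $O(1)$ I/Os. Initializing a threshold $x^* < \alpha_1$, for each canonical node $u_i$ I issue a top-open query on rectangle $[\max(\alpha_1, x^*), \alpha_2] \times [-\infty, +\infty]$ (a $2$-sided vertical-strip query, which is the top-open query with $\beta = -\infty$) against the secondary structure at $u_i$, report everything returned, and then set $x^*$ to the maximum $x$-coordinate among the points just emitted. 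The claim is that this exactly enumerates the 4-sided skyline. Indeed, for a candidate $p \in P_{u_i}$ with $x_p > x^*$, any dominator either (i) lies in a strictly higher canonical subtree, in which case by the definition of $x^*$ its $x$-coordinate is at most $x^*$, contradicting $x_q > x_p$; or (ii) lies within $P_{u_i}$, in which case it would violate the skyline property of the secondary query. Furthermore, since the rightmost point of $P_{u_i} \cap [\alpha_1, \alpha_2]$ always belongs to its skyline, updating $x^*$ by the largest reported $x$-coordinate correctly tracks the maximum $x$-coordinate of all higher-subtree points within $[\alpha_1, \alpha_2]$.

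For complexity, the additive term sums to $O(fh \cdot \log(n/B)) = O((n/B)^{\epsilon/2} \log(n/B)/\epsilon) = O((n/B)^\epsilon)$ after absorbing the logarithmic factor into the polynomial, and the output-sensitive term telescopes to $O(k/B)$ because canonical $y$-subtrees partition the reported points. For updates, walking the root-to-leaf path and updating each of the $O(h)$ ancestor secondary structures costs $O(h \log(n/B)) = O(\log(n/B))$ I/Os in the worst case. WBB-tree rebalancing occasionally rebuilds an entire subtree of weight $w$, and because each secondary structure is SABE the rebuild cost is $O(wh/B) = O(w/B)$ I/Os, which amortizes to $O(1/B)$ per update and is dominated by the direct term. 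Query optimality then follows from Theorem~\ref{thm:4sided-lower2} because anti-dominance is a special case of 4-sided queries.

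The hardest part will be nailing down the correctness equivalence above: showing that the top-open query restricted to $[x^*, \alpha_2]$ at $u_i$ returns exactly the 4-sided skyline contribution of $P_{u_i}$, not merely the local skyline of $P_{u_i}$ within that strip viewed in isolation. The crux is that any dominator $q$ of a candidate $p$ with $x_p \ge x^*$ satisfies $x_q > x_p \ge x^*$, so no dominator can hide in $P_{u_i} \cap [\alpha_1, x^*)$, while dominators from higher $y$-subtrees are precluded by the definition of $x^*$. Secondary subtleties include the careful interaction of WBB-tree rebalancing with the SABE bulkloading of secondary structures so that the amortized $O(\log(n/B))$ update bound is preserved, and handling the two partial endpoint leaves without over-counting.
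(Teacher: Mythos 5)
Your construction is correct and is essentially the paper's own proof with the axes transposed: the paper builds the constant-height weight-balanced B-tree on the \emph{$x$-coordinates} with fanout $(n/B)^{\eps}/\log(n/B)$, attaches the Theorem~\ref{thm:3sided} structure for right-open (horizontal-strip) queries, sweeps the canonical nodes from right to left, and maintains a running $y$-threshold $\beta^*$ --- the exact mirror image of your $y$-partition with vertical-strip queries and the running $x$-threshold $x^*$. The one detail to tighten is the subtree-rebuild cost: your $\bigO(w/B)$ bound via the SABE property presupposes the $w$ points arrive sorted by $x$, which is not automatic in a $y$-keyed tree, but the amortized bound survives regardless since (as the paper does) one can simply re-insert the $w$ points in $\bigO(w \log (n/B))$ I/Os and charge this to the $\Omega(w)$ updates that triggered the rebalance.
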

\begin{fullenv}
\begin{proof}
\extraspacing \textbf{Structure.} Create a weight-balanced B-tree
\cite{AV03} $T$ on the $x$-coordinates of the points in $P$. Each leaf node of
$T$ has capacity $B$, and each internal node has $\Theta(f)$ child nodes where
$f = (n/B)^\eps / \log (n/B)$. The height $h$ of $T$ is thus $\bigO(\log_f
(n/B)) = \bigO(1)$.  For a node $u$ in $T$, let $P(u)$ be the set of points
whose $x$-coordinates are in the subtree of $u$.  We manage $P(u)$ using a
structure $R(u)$ of Theorem~\ref{thm:3sided} for answering right-open queries.
Specifically, $R(u)$ answers a right-open query and supports an update in
$\bigO(\log(|R(u)|/B))$ I/Os. The right-open structures of all nodes at the
same level of $T$ consume $\bigO(n/B)$ space in total. As $T$ has only constant
levels, the total space cost is $\bigO(n/B)$.

\extraspacing \textbf{Query.} Given a 4-sided query with search
rectangle $Q = [\alpha_1, \alpha_2] \times [\beta_1, \beta_2]$, we find in
$\bigO(h f/B) = \bigO((n/B)^\eps)$ I/Os the leaf nodes $z_1, z_2$ of $T$
containing the successor and predecessor of $\alpha_1$ and $\alpha_2$
respectively, among the $x$-coordinates indexed by $T$. If $z_1 = z_2$, solve
the query by loading the $B$ points in $z_1$ into memory with $\bigO(1)$ I/Os.

Consider now $z_1 \neq z_2$. Let $\pi_1$ ($\pi_2$) be the path from the lowest
common ancestor of $z_1$ and $z_2$ to $z_1$ ($z_2$). Let $S$ be the set of
child nodes $v$ of the internal nodes on $\pi_1 \cup \pi_2$ such that the
$x$-interval of $v$ is fully contained in $[\alpha_1, \alpha_2]$ (the
$x$-interval of $v$ tightly encloses the $x$-coordinates in the subtree of
$v$). The nodes of $S$ have disjoint $x$-intervals, and can be listed out in
descending order of their $x$-intervals with $\bigO(hf/B) = \bigO((n/B)^\eps)$
I/Os. Also, $|S| \le hf = \bigO((n/B)^\eps / \log (n/B))$.

Find the skyline of $P(z_2) \cap Q$ in one I/O; let $\beta^*$ be the
$y$-coordinate of the highest point in this skyline. Next, we process the nodes
of $S$ in descending order of their $x$-intervals. For each $v \in S$, perform
a right-open query with $]-\infty, \infty[ \times [\beta^*, \beta_2]$ on
$R(v)$, and output all the points retrieved. If the query returns at least one
point, update $\beta^*$ to the $y$-coordinate of the highest point returned.
Finally, issue a 4-sided query with $[\alpha_1, \alpha_2] \times [\beta^*,
\beta_2]$ on $z_1$ in one I/O.

Since each right-open query costs $\bigO(\log (n/B))$ I/Os (plus linear output
time), all such queries on the nodes of $S$ have total cost $\bigO(|S| \log
(n/B) + k/B) = \bigO((n/B)^\eps + k/B)$.

\extraspacing \textbf{Update}. To insert a point $p$ into $P$, first
descend a root-to-leaf path $\pi$ to the leaf node $z$ of $T$ where $p_x$
should be placed. For each interval node $u$ along $\pi$, insert $p$ to $R(u)$
in $\bigO(\log (n/B))$ I/Os. Since $T$ has $h = \bigO(1)$ levels, the cost so
far is $\bigO(\log (n/B))$.

Next, update the base tree $T$ by inserting $p_x$. If an internal node $u$ is
split, we construct $R(u')$ for each new node $u'$ from scratch by simply
inserting into $R(u')$ all the relevant points in $\bigO(|P(u)| \log (n/B))$
I/Os. The cost can be charged on the $\Omega(|P(u)|)$ updates that have
occurred beneath $u$ since its creation. Hence, each of those updates bears
$\bigO(\log(n/B))$ I/Os. Since an update needs to bear such cost only $h$
times, the total amortized cost is still $\bigO(\log(n/B))$.

A deletion can be handled in a similar manner. Finally, reconstruct the entire
structure after $\Omega(n)$ updates to make sure that $h$ does not change until
$T$ is rebuilt next time. Standard analysis shows that the amortized update
overhead remains $\bigO(\log(n/B))$.
\end{proof}

%%%%%%%%%%%%%%%%%%%%%%%%%%%%%%%%%%%%%%%%%%%%%%%%%%%%%%%%%%%%%%%%%%%%%%%%%%%%%%%
\subsection{Pointer Machine Space Lower Bound}
%%%%%%%%%%%%%%%%%%%%%%%%%%%%%%%%%%%%%%%%%%%%%%%%%%%%%%%%%%%%%%%%%%%%%%%%%%%%%%%

In the pointer machine (PM) model, a data structure that stores a data set $S$
and supports range reporting queries for a query set $\mathcal{Q}$, can be
modelled as a directed graph $G$ of bounded out-degree with some nodes being
\emph{entry nodes}. In particular, every node in $G$ may be assigned an element
of $S$ or may contain some other useful information. For a query range $Q_i\in
\mathcal{Q}$, the algorithm navigates over the edges of $G$ in order to locate
all nodes that contain the answer to the query. The algorithm may also traverse
other nodes. The time complexity of reporting the output of~$Q_i$ is at least
equal to the number of nodes accessed in graph~$G$ for~$Q_i$.

Given a directed graph $G$ modelling a data structure in the PM, Chazelle and
Liu~\cite{C90,CL04} define the graph~$G$ to be
$(\alpha,\omega)$-\textit{effective}, if a query is supported in $\alpha(k +
\omega)$ time, where~$k$ is the output size,~$\alpha$ is a multiplicative
factor for the output size ($\alpha = \bigO(1)$ for our purposes) and~$\omega$
is the additive factor.  Moreover, a query set~$\mathcal{Q}$ is
$(m,\omega)$-\textit{favorable} for a data set~$S$, if $\forall Q_i \in
\mathcal{Q}: |S \cap Q_i| \geq \omega$ and $\forall i_1 <i_2 \cdots< i_m: |S
\cap Q_{i_1}\cap \cdots \cap Q_{i_m}| = \bigO(1)$. Intuitively, the first part
of this property requires that the size of the output is large enough (at
least~$\omega$) so that it dominates the additive factor of~$\omega$ in the
time complexity. The second part requires that the query outputs have minimum
overlap, in order to force~$G$ to be large without many nodes containing the
output of many queries. The following lemma exploits these properties to
provide a lower bound on the minimum size of~$G$.

\begin{lemma}[From {\cite[Lemma 2.3]{CL04}}] \label{lem:lower}
  For an $(m,\omega)$-effective graph~$G$ for the data set~$S$, and for an
  $(\alpha,\omega)$-favorable set of queries~$\mathcal{Q}$, the graph $G$
  contains~$\Omega(|\mathcal{Q}|\omega/m)$ nodes, for constant~$\alpha$ and for
  any large enough~$\omega$.
\end{lemma}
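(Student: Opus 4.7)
The plan is a double-counting argument that pits the effectiveness of the search against the favorability of the query set. First I would, for each query $Q_i \in \mathcal{Q}$, consider the set $V_i$ of nodes of $G$ that the algorithm traverses when answering $Q_i$, and within it the subset $U_i \subseteq V_i$ of nodes that actually store output elements of $S \intr Q_i$. The $(\alpha,\omega)$-effectiveness assumption gives $|V_i| \leq \alpha(k_i + \omega)$ where $k_i = |S \intr Q_i|$, and since favorability forces $k_i \geq \omega$, this yields $|V_i| = \bigO(k_i)$. Because a pointer machine has bounded out-degree and stores each output at $\bigO(1)$ nodes, I also get $|U_i| = \Theta(k_i)$.

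Next I would exchange the order of summation. Letting $n(v) = \#\{i : v \in U_i\}$ be the number of queries whose output nodes include $v$, we get
\[
\sum_{v \in G} n(v) \;=\; \sum_{i} |U_i| \;=\; \Omega(|\mathcal{Q}|\,\omega).
\]
If one could establish the per-node bound $n(v) = \bigO(m)$ for every $v$, the estimate $|G| \geq \frac{1}{\bigO(m)} \sum_v n(v) = \Omega(|\mathcal{Q}|\omega/m)$ would follow immediately, matching the claim.

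The main obstacle is exactly this per-node frequency bound: the $(m,\omega)$-favorable property only caps the size of every $m$-wise intersection by a constant, and in principle a single element $s$ could participate in far more than $m$ distinct queries of $\mathcal{Q}$. To bridge this gap, I would recast favorability as the moment identity
\[
\sum_{s \in S} \binom{n_s}{m} \;=\; \sum_{i_1 < \cdots < i_m} |S \intr Q_{i_1} \intr \cdots \intr Q_{i_m}| \;=\; \bigO\!\left(\binom{|\mathcal{Q}|}{m}\right),
\]
where $n_s$ counts the queries containing element $s$, and couple it with $\sum_s n_s \geq |\mathcal{Q}|\omega$ through a Hölder or power-mean step. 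This forces the number of distinct storage nodes (and hence $|G|$) to be $\Omega(|\mathcal{Q}|\omega/m)$. The ``$\omega$ large enough'' hypothesis in the statement is precisely what is needed to absorb the additive-$\omega$ slack of effectiveness when passing from $|V_i|$ to a lower bound on $|U_i|$, and to ensure that contributions from elements with small $n_s$ do not dominate the moment sum.
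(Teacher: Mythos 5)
The paper does not actually prove this lemma---it is imported verbatim from Chazelle and Liu \cite[Lemma~2.3]{CL04}---so your proposal has to stand on its own, and it does not. You correctly identify the obstacle (favorability caps only the $m$-wise intersections, not the number $n_s$ of queries containing a single element), but the repair you propose---the factorial-moment identity combined with a H\"older/power-mean step---provably cannot close the gap, because the purely set-theoretic hypotheses you retain do not imply the conclusion. Concretely, take $m=2$ and let $S$ and $\mathcal{Q}$ be the points and lines of a projective plane of order $t$: then $|S|=t^2+t+1$, there are $|\mathcal{Q}|=t^2+t+1$ queries each with output size $\omega=t+1$, and any two queries share exactly one element. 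Every inequality in your plan holds here; indeed, Jensen applied to your moment identity yields only $|S|=\Omega(\omega^{m/(m-1)})$, which this example meets up to constants, yet $|S|=\Theta(\omega^2)$ while $|\mathcal{Q}|\omega/m=\Theta(\omega^3)$. So no chain of inequalities starting from $\sum_s n_s \geq |\mathcal{Q}|\omega$ and the $m$-th-moment bound alone can reach $\Omega(|\mathcal{Q}|\omega/m)$; in the paper's application ($\omega=\log^{\gamma}n$, $m=2$) your route would give only $\Omega(\log^{2\gamma}n)$ rather than the needed $\Omega(n\log n/\log\log n)$.

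The missing ingredient is the pointer-machine structure itself, which your argument uses only through the bound $|V_i|=\bigO(k_i)$. In the actual proof, each navigation set $V_i$ is a \emph{connected} subgraph reachable from the entry node in a graph of \emph{bounded out-degree}, and since the output-bearing nodes form at least a $1/(2\alpha)$ fraction of $V_i$, they can be greedily grouped into roughly $k_i/m$ clusters of $m$ output nodes that lie within bounded distance of one another inside $V_i$. Favorability ensures each such cluster can serve only $\bigO(1)$ queries, while bounded degree ensures each node of $G$ anchors only $\bigO(1)$ distinct small-diameter clusters (for constant $\alpha$ and $m$); comparing the $\Omega(|\mathcal{Q}|\omega/m)$ clusters demanded with the $\bigO(|G|)$ available gives the bound. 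Any correct proof must exploit this locality and connectivity; a purely combinatorial double count over the set system cannot, as the projective-plane example shows.
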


\begin{theorem} \label{thm:lower}
  The anti-dominance reporting problem in the Pointer Machine requires
  $\Omega(n\frac{\log{n}}{\log{\log{n}}})$ space, if the query is
  supported in~$\bigO(\log^\gamma{n} + k)$ time, where~$k$ is the size of the
  answer to the query and parameter $\gamma = \bigO(1)$.
\end{theorem}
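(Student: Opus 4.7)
The plan is to combine the geometric construction from Lemma~\ref{lmm:4sided-lower} with the pointer-machine space lower bound of Chazelle and Liu (Lemma~\ref{lem:lower}). I would first instantiate Lemma~\ref{lmm:4sided-lower} with parameters $\omega$ and $\lambda$ to be chosen later, obtaining a set $P$ of $n = \omega^\lambda$ points together with a set $\mathcal{Q}$ of $|\mathcal{Q}| = \lambda \omega^{\lambda-1}$ anti-dominance queries. Since every query in $\mathcal{Q}$ returns exactly $\omega$ points and any two distinct queries share at most one common output, $\mathcal{Q}$ is $(2,\omega)$-favorable for $P$ in the sense of Chazelle and Liu.

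Next I would convert the time assumption $\bigO(\log^\gamma n + k)$ into an $(\alpha,\omega)$-effectiveness statement for the query graph $G$ modelling the pointer-machine data structure. For this, the additive term $\log^\gamma n$ must be absorbed into $\alpha \omega$, which forces me to pick $\omega = c \log^\gamma n$ for a sufficiently large constant $c$. With this choice, $G$ is $(\bigO(1),\omega)$-effective, and $\omega$ tends to infinity with $n$, so the ``large enough $\omega$'' precondition of Lemma~\ref{lem:lower} is automatic.

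Applying Lemma~\ref{lem:lower} with $m = 2$ then yields
\[
    |G| \;=\; \Omega\!\left(\frac{|\mathcal{Q}|\,\omega}{m}\right)
    \;=\; \Omega(\lambda\,\omega^\lambda)
    \;=\; \Omega(\lambda\,n).
\]
It remains to solve for $\lambda$. From $n = \omega^\lambda$ we get $\lambda = \log n / \log \omega$, and substituting $\omega = \Theta(\log^\gamma n)$ gives $\log \omega = \gamma \log\log n + \bigO(1)$, hence $\lambda = \Theta(\log n / \log\log n)$. Plugging back into the space bound yields the claimed $\Omega(n \log n / \log\log n)$.

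The one technicality to handle is that the construction of Lemma~\ref{lmm:4sided-lower} fixes $n = \omega^\lambda$ only for integer $\omega, \lambda$, whereas the choice $\omega = c\log^\gamma n$, $\lambda = \log n / \log \omega$ need not produce integers or an exact match with a desired $n$. This is the main, but only minor, obstacle: it is resolved by rounding $\omega$ and $\lambda$ to nearby integers and working with a set of size $n' = \omega^\lambda = \Theta(n)$, which changes the asymptotics only by constants and preserves the bound $\Omega(n \log n / \log\log n)$. This completes the plan.
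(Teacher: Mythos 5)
Your proposal is correct and follows essentially the same route as the paper: instantiate the $(2,\omega)$-favorable query set from Lemma~\ref{lmm:4sided-lower} with $\omega = \Theta(\log^\gamma n)$ and $\lambda = \Theta(\log n/\log\log n)$, then apply the Chazelle--Liu bound of Lemma~\ref{lem:lower} to get $\Omega(|\mathcal{Q}|\omega/2) = \Omega(\lambda n)$ nodes. The paper's proof is exactly this two-line calculation (with $\lambda = \lfloor \log n/(1+\gamma\log\log n)\rfloor$), so no further comparison is needed.
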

\begin{proof}
  Lemma~\ref{lmm:4sided-lower} allows us to apply Lemma~\ref{lem:lower}, when
  setting $\omega=\log^\gamma{n}$ and $\lambda=\left\lfloor
  \frac{\log{n}}{1+\gamma\log{\log{n}}}\right \rfloor$, for some constant
  $\gamma>0$. Thus the query time of $\bigO(\log^\gamma{n} + k)$, for output
  size~$k$, can only be achieved at a space cost of
  $\Omega(n\frac{\log{n}}{\log{\log{n}}})$.
\end{proof}
\end{fullenv}

% If we try to externalize this lower bound then the following corollary is
% derived for an External memory version of the Pointer Machine (EPM). In a
% nutshell, in EPM the dominant operation is an I/O and each record has size
% equal to the block size while no address arithmetic is allowed. The navigation
% is realized only in terms of pointers to other blocks residing in external
% memory or a potentially altered copy in the internal memory. Note that the
% indivisibility assumption is introduced naturally in EPM since it also holds in
% the PM as well. Although EPM is a limited model when compared to DAM it
% captures nicely all algorithms and indexing schemes that use pointers without
% address arithmetic or word manipulation (like $B$-trees).
%
% \begin{corollary} \label{cor:lower}
%   The anti-dominance reporting problem in the EPM requires
%   $\Omega\left(\frac{n}{B}\frac{\log{n}}{\log{\log{n}}}\right)$ space, if the
%   query is supported in~$\bigO(\frac{\log^\gamma{n}}{B} + \frac{k}{B})$ time,
%   where~$k$ is the size of the answer to the query and parameter $\gamma =
%   \bigO(1)$.
% \end{corollary}

%%%%%%%%%%%%%%%%%%%%%%%%%%%%%%%%%%%%%%%%%%%%%%%%%%%%%%%%%%%%%%%%%%%%%%%%%%%%%%%

\section*{ACKNOWLEDGEMENTS}

The work of Yufei Tao and Jeonghun Yoon was supported in part by (i) projects GRF 4166/10, 4165/11, and 4164/12 from HKRGC, and (ii) the WCU (World Class University) program under the National Research Foundation of Korea, and funded by the Ministry of Education, Science and Technology of Korea (Project No: R31-30007).

\bibliographystyle{abbrv}
\bibliography{References}

%%%%%%%%%%%%%%%%%%%%%%%%%%%%%%%%%%%%%%%%%%%%%%%%%%%%%%%%%%%%%%%%%%%%%%%%%%%%%%%
%%%%%%%%%%%%%%%%%%%%%%%%%%%%%%%%%%%%%%%%%%%%%%%%%%%%%%%%%%%%%%%%%%%%%%%%%%%%%%%
\end{document}